\newcommand{\comment}[1]{}
\newtheorem{lemma}{Lemma}[section]
\newtheorem{theorem}[lemma]{Theorem}
\newtheorem{corollary}[lemma]{Corollary}
\newtheorem{proposition}[lemma]{Proposition}
\newtheorem{claim}[lemma]{Claim}
\newtheorem{observation}[lemma]{Observation}
\newtheorem{definition}[lemma]{Definition}
\newtheorem{example}[lemma]{Example}
\def\squareforqed{\hbox{\rlap{$\sqcap$}$\sqcup$}}
\def\qed{\ifmmode\squareforqed\else{\unskip\nobreak\hfil
\penalty50\hskip1em\null\nobreak\hfil\squareforqed
\parfillskip=0pt\finalhyphendemerits=0\endgraf}\fi}
\newcommand{\reals}{\mbox{$\mathbb R$}}
\newcommand{\rats}{\mathbb Q}
\newcommand{\prats}{\mathbb Q_+}
\newcommand{\nats}{\mbox{$\mathbb N$}}
\newcommand{\pr}{{\mathop{\mathrm{pr}}\nolimits}} 
\newcommand{\maxp}{{\mathop{\mathrm{pr^{\star}}}\nolimits}} 
\newcommand{\cost}{{\mathop{\mathrm{cst}}\nolimits}} 
\newlength{\tablength}
\newlength{\spacelength}
\newcommand{\tabstar}{\hspace*{\tablength}}
\newcommand{\spacestar}{\hspace*{\spacelength}}
\def\obeytabs{\catcode`\^^I=\active}
{\obeytabs\global\let^^I=\tabstar}
{\obeyspaces\global\let =\spacestar}
\newenvironment{display}{\begingroup\obeylines\obeyspaces\obeytabs}{\endgroup}
\newenvironment{prog}{\begin{display}\parskip0pt\sf}{\end{display}}
\title{The structure and topology of rooted weighted trees
modeling layered cyber-security systems}
\author{
{\sl Geir Agnarsson}\thanks{Corresponding author.}\\
Department of Mathematical Sciences\\
George Mason University\\
Fairfax, VA 22030\\
{\tt geir@math.gmu.edu}
\and
{\sl Raymond Greenlaw}\\ 
Department of Cyber Sciences\\
United States Naval Academy\\
Annapolis, Maryland 21402\\
{\tt greenlaw@usna.edu}
\and
{\sl Sanpawat Kantabutra}\\
The Theory of Computation Group\\
Computer Engineering Department\\
Chiang Mai University\\
Chiang Mai, 50200, Thailand\\
{\tt sanpawat@alumni.tufts.edu}
}
\date{\today}
\begin{document}

\maketitle

\begin{abstract}
In this paper we consider a layered-security model in which the
containers and their nestings are given in the form of a rooted tree
$T$\@. A {\em cyber-security model\/} is an ordered three-tuple 
$M = (T, C, P)$ where $C$ and $P$ are multisets of {\em penetration costs\/} 
for the containers and {\em target-acquisition values\/} for 
the prizes that are located within the containers, respectively, 
both of the same cardinality as the set of the non-root vertices of $T$.
The problem that we study is to assign the
penetration costs to the edges and the target-acquisition values to
the vertices of the tree $T$ in such a way that minimizes the total
prize that an attacker can acquire given a limited {\em budget}. 
The attacker breaks into containers starting at the root of $T$ and once a
vertex has been broken into, its children can be broken into by paying
the associated penetration costs.  The attacker must deduct the
corresponding penetration cost from the budget, as each new container
is broken into. For a given assignment of costs and target values we obtain
a {\em security system}, and we discuss three types of them: 
{\em improved}, {\em good}, and {\em optimal}.  
We show that in general it is not possible to develop an
optimal security system for a given cyber-security model $M$. 
We define P- and C-models where the penetration costs and prizes, 
respectively, all have unit value.  We show that if
$T$ is a rooted tree such that any P- or C-model 
$M = (T,C,P)$ has an optimal security system, 
then $T$ is one of the following types:
(i) a rooted path, 
(ii) a rooted star, 
(iii) a rooted 3-caterpillar, or 
(iv) a rooted 4-spider. Conversely, if $T$ is one of these four types of
trees, then we show that any P- or C-model $M = (T,C,P)$ does have an
optimal security system\@. Finally, we study a duality between 
P- and C-models that allows us to translate results for P-models into 
corresponding results for C-models and vice versa. The results
obtained give us some mathematical insights into
how layered-security defenses should be organized.
\end{abstract}

{\bf Keywords:} 
cyber-security model,
duality,
graph minors, 
rooted tree, 
security system,
system attack,
tree types,
weighted rooted tree.

\section{Introduction}
\label{sec:introduction}

According to~\cite{GoGulf}, the global cyber-security market cost
in~2017 is expected to top 120 billion US dollars.  This site also
reports that there are 18 victims of a cyber crime every single
second!  Other sources report similarly alarming and worsening
statistics.  There is agreement that the number of cyber attacks is
increasing rapidly, and the consequences of such attacks are greater
than ever on economics, national security, and personal data.  Threats
come from nation states with advanced cyber warfare commands, nation
states having less technical capabilities but intent on doing harm,
ideologically motivated groups of hackers or extremists,
profit-seeking criminals, and others.  Building on the work done
in~\cite{AgGrKa}, in this paper we study a layered-security model and
strategies for assigning {\em penetration costs\/} and 
{\em target-acquisition values\/} so as to minimize the amount of damage
an attacker can do to a system.  That is, we examine 
{\em security systems\/}. The approach we take here is to assign
weights to the vertices and edges of a tree in order to build a cyber
defense that minimizes the amount of prize an attacker can accumulate
given a limited budget.  To the best of our knowledge this approach is
new in that the usual approach is to consider a particular weighted
tree as input.  For example,
in~\cite{Houssaine-etal,AgGrKa,coene:trees,LNCS-bounded,DO-main,IPL-improved}
the authors consider finding 
{\em weight-constrained, maximum-density subtrees\/} 
and similar structures given a fixed weighting of a tree
as part of the input.  In these cases weights are specified on both
vertices and edges.

There has been some research on {\em network fortification\/} and
problems related to that topic.  For example, in~\cite{Uhan}
stochastic linear programming games are studied and it is demonstrated
how these can, among other things, model certain network
fortifications. In~\cite{Wood} the problem of network interdiction is
studied---how to minimize the maximum amount of flow an
adversary/enemy can push through a given network from a source $s$ to
a sink $t$.  There each edge/arc is provided with a fixed integer
capacity and an integer resource (required to delete the
edge/arc). This is a variation of the classical
Max-Flow-Min-Cut Theorem.  Although interesting in their own way,
neither of these papers or related papers that we have found in the
literature address directly what we study in this paper.

In~\cite{AgGrKa} the authors posed the following question: Can one
mathematically prove that the intuition of storing high-value targets
deeper in the system and having higher penetration costs on the
outer-most layers of the system results in the best or at least good
security?  In this paper we answer this question and obtain more general
and specific results.  We define three types of security systems: 
{\em improved}, {\em good}, and {\em optimal}. We show that not all 
{\em cyber-security models\/} admit
optimal security systems, but prove that paths and stars do.  
We define and study {\em P-} and {\em C-models\/} where 
all of the penetration costs, and the prizes, are set to one,
respectively. We classify the
types of trees that have optimal security systems for both P- and C-models.  
We then discuss a duality between P- and C-models, which provides a 
dictionary to translate results for P-models into corresponding results
for C-models, and vice versa.

To build secure systems requires first principles of security.  ``In
other words, we need a {\em science of cyber-security\/} that puts the
construction of secure systems onto a firm foundation by giving
developers a body of laws for predicting the consequences of design
and implementation choices''~\cite{schneider}.  To this end Schneider
called for more models and abstractions to study cyber
security~\cite{schneider}.  This paper is a step in that direction.
We hope that others will build on this work to develop even better and
more-realistic models, overcome the shortcomings of our model, as well
as develop additional foundational results.

The outline of this article is as follows.  In section~\ref{sec:model}
we present the rationale for our layered-security model. In 
Section~\ref{sec:csdm} we define
the framework for security systems and present the definitions of 
improved, good, and optimal security systems, 
and state some related observations and examples.
In section~\ref{sec:optimal} we explore optimal security systems and prove 
that they do not always exist, but show they exist if and only if
$T$ is either a path rooted at a leaf, or a star rooted at
its center vertex.
In section~\ref{sec:CP-gen-obs} we define P- and C-models and show
that any cyber-security model $M = (T,C,P)$ is equivalent to 
both a P-model $M'$ and a C-model $M''$. We further show that if 
$T$ is a rooted tree such
that any P- or C-model $M$ has an optimal security system,
then $T$ is one of the following four types: 
(i) a rooted path, 
(ii) a rooted star, 
(iii) a rooted 3-caterpillar, or 
(iv) a rooted 4-spider.
In section~\ref{sec:optimal-P-trees} we prove that if $T$ is one of
the four types of rooted trees mentioned above, then any P-model 
does indeed have an optimal security system\@.  
In section~\ref{sec:duality} we define a duality between equivalence
classes of P-models and equivalence classes of C-models that serves
as a dictionary allowing us to obtain equivalent results for C-models
from those of the P-models that were obtained in 
section~\ref{sec:optimal-P-trees}. In particular, we obtain 
Theorem~\ref{thm:C-P-main} that completely classifies which P-
and which C-models have optimal security systems. Conclusions
and open problems are discussed in section~\ref{sec:summary}.

\section{Rationale for Our Layered-Security Model}
\label{sec:model}

In defining our layered-security model to study defensive cyber
security, we need to strike a balance between simplicity and utility.
If the model is too simple, it will not be useful to provide insight
into real situations; if the model is too complex, it will be too
cumbersome to apply, and we may get bogged down in too many details.
The model described in this paper is a step toward gaining a better
understanding of a broad range of security systems in a 
graph-theoretical setting for a layered-security model.

Many systems contain layered security or what is commonly referred to
as {\em defense-in-depth}, where valuable assets are hidden behind
many different layers or secured in numerous ways.  For example, a
{\em host-based defense\/} might layer security by using tools such as
signature-based vendor anti-virus software, host-based systems
security, host-based intrusion-prevention systems, host-based
firewalls, encryption, and restriction policies, whereas a 
{\em network-based defense\/} might provide defense-in-depth by using
items such as web proxies, intrusion-prevention systems, firewalls,
router-access control lists, encryption, and
filters~\cite{johnston:lafever}.  To break into such a system and
steal a valuable asset requires that several levels of security be
penetrated, and, of course, there is an associated cost to break into
each level, for example, money spent, time used, or the punishment
served for getting caught.

Our model focuses on the layered aspect of security and is intended to
capture the notion that there is a cost associated with penetrating
each additional level of a system and that attackers have finite
resources to utilize in a cyber attack.  Defenders have the ability to
secure targets using defense mechanisms of various strengths and to
secure targets in desired locations and levels.  We assume that the
structure where targets will be stored; that is, the container
nestings; is given as part of the input in the form of a rooted tree.
In this way we can study {\em all\/} possible structures at a single
time, as they can be captured in the definition of our problems.  This
methodology is as opposed to having the defender actually construct a
separate defense structure for each input.

For any specific instance of a problem, a defender of a system will
obviously consider the exact details of that system and design a
layered-security approach to fit one's actual system.  Similarly, a
traveling salesman will be concerned about constructing a tour of 
{\em his\/} particular cities, not a tour of any arbitrary set of cities
with any arbitrary set of costs between pairs of cities.
Nevertheless, researchers have found it extremely helpful to consider
a general framework in which to study the 
{\sc Traveling Salesman Problem}.  And, in studying the general 
problem, insights have been
gained into {\em all\/} instances of the problem.  Thus, we believe it
is worthwhile to consider having a fixed structure as part of our
input, and this approach is not significantly different from that used
in complexity theory to study
problems~\cite{garey:johnson,greenlaw:etal}.

In this paper we focus on a static defense.  We pose as an open
problem the question of how to create a defense and an attack strategy
if the defender is allowed to move targets around dynamically or
redistribute a portion of a prize.  We also consider the total prize
as the sum of the individual values of the targets collected although
one could imagine using other or more-complex functions of the target
values to quantify the damage done by an attacker.  Our defensive
posture is formed by assigning to the edges and vertices of the rooted 
tree in question the input-provided penetration costs and
target-acquisition values, respectively.  We formalize the model, the
notion of a security system, and the concept of a system attack in the
next section.

\section{Cyber-Security Model and Security Systems}
\label{sec:csdm}

Let $\nats = \{1,2,3,\ldots\}$, $\rats$ be the rational numbers, and
$\prats$ be the non-negative rational numbers.
\begin{definition}
\label{def:GO-tree}
A {\em cyber-security model (CSM)\/} $M$ is given by a
three-tuple $M = (T, C, P)$, where $T$ is a directed tree rooted at $r$ having
$n \in \nats$ non-root vertices, $C$ is a multiset of 
{\em penetration costs\/} $c_1, \ldots, c_n \in {\prats}$, and $P$
is a multiset of {\em target-acquisition-values\/} 
(or {\em prizes\/} for short) $p_1, \ldots, p_n \in {\prats}$.
\end{definition}
Throughout $V(T) = \{r, u_1,\ldots, u_n\}$, where
$r$ is the designated root that indicates the start of a 
{\em system attack}, and $E(T) = \{e_1,\ldots, e_n\}$ denotes the set of 
edges of $T$, where our labeling is such that $u_i$ is always the
head of the edge $e_i$. 
The prize at the root is set to 0. The penetration costs model
the expense for breaking through a layer of security, and the
target-acquisition-values model the amount of prize one acquires for
breaking through a given layer and exposing a target.  The penetration
costs will be weights that are assigned to edges in the tree, and the
target-acquisition-values, or the prizes, are weights that will be assigned to
vertices in the tree.

Sometimes we do not distinguish a target from its acquisition
value/prize/reward nor a container, which is a layer of security, from
its penetration cost.  Note that one can think of each edge in the
rooted tree as another container, and as one goes down a path in the
tree, as penetrating additional layers of security.  We can assume
that the number of containers and targets is the same.  Since if we
have a container housing another container (and nothing else), we can
just look at this ``double'' container as a single container of
penetration cost equal to the sum of the two nested ones.  Also, if a
container includes many prizes, we can just lump them all into a
single prize, which is the sum of them all.

Recall that in a rooted tree $T$, each non-root vertex $u \in V(T)$ has
exactly one parent, and that we assume the edges of $T$
are directed naturally away from the root $r$ in such a way that each
non-root vertex has an in-degree of one. The root is located at 
{\em level 0\/} of the tree. {\em Level one\/} of the tree consists of the
children of the root, and, in general, {\em level i\/} of the tree
consists of the children of those vertices at level $i - 1$ for $i\geq 1$.  
We next present some key definitions about a CSM
that will allow us to study questions about {\em security systems}.
\begin{definition}
\label{def:GO-defensive}
A {\em security system (SS)\/} with respect to a cyber-security model
$M = (T, C, P)$ is given by two bijections $c : E(T) \rightarrow C$
and $p : V(T)\setminus\{r\}\rightarrow P$.  
We denote the security system by $(T, c, p)$.

A {\em system attack (SA)\/} in a security system $(T, c, p)$ is
given by a subtree $\tau$ of $T$ that contains the root $r$ of $T$.

\begin{itemize}
\item
The {\em cost\/} of a system attack $\tau$ with respect to a
security system $(T, c, p)$ is defined by
\[
\cost(\tau,c, p) = \sum_{e\in E(\tau)}c(e).
\]
\item
The {\em prize\/} of a system attack $\tau$ with respect to a
security system $(T, c, p)$ is defined by 
\[
\pr(\tau,c, p) = \sum_{u\in V(\tau)}p(u).
\]
\item
For a given {\em budget\/} $B \in \prats$ the 
maximum prize $\maxp(B,c,p)$ with respect to $B$ is defined by
\[
\maxp(B,c,p) := 
\max\{ \pr(\tau,c,p) : \mbox{for all system attacks } \tau\subseteq T,
\mbox{ where } \cost(\tau,c,p) \leq B \}.
\]
A system attack $\tau$ whose prize is a maximum with respect to a given
budget is called an {\em optimal attack}.
\end{itemize}
\end{definition}
The bijection $c$ in Definition~\ref{def:GO-defensive} specifies how
difficult it is to break into the various containers, and the
bijection $p$ specifies the prize associated with a given container.
Note that for any SS $(T, c, p)$ we have 
$\cost(r,c,p) = 0 \leq B \in \prats$.  
When $T = (\{ r\}, \varnothing )$, then $\maxp(B,c,p) = 0$
for any $B \in \prats$.  When two bijections are
given specifying a SS, we call the resulting weighted tree a 
{\em configuration of the CSM}\@.  Any configuration represents a
defensive posture and hence the name security system.  Note that the
CSM can be used to model any general security system and not just
cyber-security systems.  We are interested in configurations that make
it difficult for an attacker to accumulate a large prize.  It is
natural to ask if a given defensive stance can be improved.  Next we
introduce the notion of an {\em improved security system\/} that will
help us to address this question.
\begin{definition}
\label{def:improvedDefense}
Given a CSM $M = (T, C, P)$ and a SS $(T, c, p)$, an 
{\em improved security system (improved SS) with respect to $(T, c, p)$\/} 
is a SS $(T, c', p'$) such that for any
budget $B \in \prats$ we have $\maxp(B,c',p') \leq \maxp(B,c,p)$, 
and there exists some budget $B' \in \prats$ such that
$\maxp(B',c',p') < \maxp(B',c, p)$.
\end{definition}
Definition~\ref{def:improvedDefense} captures the idea of a better
placement of prizes and/or penetration costs so that an attacker
cannot do as much damage.  That is, in an improved SS one can never
acquire a larger overall maximum prize with respect to any budget $B$;
and furthermore, there must be at least one particular budget where
the attacker actually does worse.  Notice that there can be an
improved SS $(T, c', p')$, where for some budget $B \in \prats$,
there is a SA $\tau$ whose cost is less than or equal to $B$ for both
SSs such that $\pr(\tau,c', p') > \pr(\tau,c, p)$.  In this case an
attacker obtains a larger prize in the improved SS; and, of course,
this situation is undesirable and means a weaker defense against this
specific attack.  We, however, are interested in improved SSs with
respect to a given budget rather than a particular SA\@.  Since we
have exactly $n$ penetration costs and $n$ prizes to assign, it is
difficult to imagine an improved SS for all but the most-restricted
trees in which all SAs would be improved in the sense just described.
Next, we formalize the notion of an {\em optimal security system\/}.
\begin{definition}
\label{def:optimaldefensivestrategy}
Let $M = (T,C,P)$ be a given CSM.
(i) For a budget $B\in\prats$, a SS $(T,c,p)$ 
is {\em optimal w.r.t.~$B$} if there is no other SS 
$(T, c', p')$ for $M$ such that $\maxp(B,c',p') < \maxp(B,c,p)$.
(ii) $(T,c,p)$ is {\em optimal} if it is optimal w.r.t.~any budget 
$B\in\prats$.
\end{definition}
Notice that an optimal SS is not necessarily the best possible.  We
could define a {\em critically optimal security system\/} to be one
where for every single SA the SS was at least as good as all others
and for at least one better.  And, in a different context, these SSs
might be interesting.  However, in light of
Theorem~\ref{thm:nooptimal} in the following section, 
which shows that even an optimal SS may
not exist for a given CSM, we do not pursue critically optimal SSs
further in this paper. By Definitions~\ref{def:improvedDefense}
and~\ref{def:optimaldefensivestrategy} we clearly have the following.
\begin{observation}
\label{obs:optimalcannotbeimproved}
A SS $(T, c, p)$ for a CSM $M = (T,C,P)$ is optimal
if and only if no improved SS for $(T, c, p)$ exists.
\end{observation}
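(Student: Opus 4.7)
The plan is to prove each direction of the biconditional separately by unwinding Definitions~\ref{def:improvedDefense} and~\ref{def:optimaldefensivestrategy}, treating optimal as ``pointwise minimum of $\maxp(B,\cdot,\cdot)$ over all SSs'' and improved as ``pointwise at most, and strictly less somewhere.''

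For the forward implication, I assume $(T,c,p)$ is optimal. By Definition~\ref{def:optimaldefensivestrategy} this means $\maxp(B,c',p') \geq \maxp(B,c,p)$ holds for every budget $B\in\prats$ and every SS $(T,c',p')$. An improved SS, however, would require the existence of some budget $B'\in\prats$ at which $\maxp(B',c',p') < \maxp(B',c,p)$ by Definition~\ref{def:improvedDefense}, contradicting the inequality just noted. So no improved SS with respect to $(T,c,p)$ can exist.

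For the reverse implication, I argue the contrapositive: if $(T,c,p)$ is not optimal I exhibit an improved SS. Non-optimality supplies a budget $B^\star\in\prats$ together with an SS $(T,c',p')$ satisfying $\maxp(B^\star,c',p') < \maxp(B^\star,c,p)$, which is exactly the strict-inequality clause of Definition~\ref{def:improvedDefense} at $B'=B^\star$. To secure the complementary pointwise-domination clause $\maxp(B,c',p') \leq \maxp(B,c,p)$ at every $B\in\prats$, I would lean on the fact that the set of all SSs is finite (there are only finitely many bijections on the finite sets $E(T)$ and $V(T)\setminus\{r\}$) and select instead a minimal element in the partial order induced by pointwise $\leq$ comparison of $\maxp$-profiles; any such minimal element that lies strictly below $(T,c,p)$ qualifies as an improved SS.

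The main obstacle is precisely this last step: promoting a single-budget witness of non-optimality to one that is pointwise no worse at every budget. The forward direction is essentially immediate from the definitions; the reverse direction carries all of the bookkeeping, and its cleanest justification is the finiteness of the space of candidate SSs together with the minimality argument just sketched.
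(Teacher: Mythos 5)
Your forward direction is correct and is essentially all that the paper itself relies on when it asserts the observation ``by Definitions~\ref{def:improvedDefense} and~\ref{def:optimaldefensivestrategy}'': an improved SS would beat $(T,c,p)$ at some budget $B'$, contradicting optimality with respect to $B'$.

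The reverse direction, however, contains a genuine gap, and the repair you propose does not close it. Finiteness of the set of SSs guarantees that the pointwise partial order on $\maxp$-profiles has minimal elements, but it does not guarantee that a non-optimal $(T,c,p)$ has anything strictly below it: optimality means being a simultaneous pointwise minimum over all budgets, which is strictly stronger than being a minimal element of the pointwise order. A non-optimal SS may be beaten at its witness budget $B^\star$ only by SSs that are \emph{incomparable} to it (i.e., strictly worse at some other budget), in which case $(T,c,p)$ is itself minimal and admits no improved SS, so your selected minimal element need not ``lie strictly below'' $(T,c,p)$ at all. The paper's own example in Theorem~\ref{thm:nooptimal} realizes exactly this situation: for $M=(T,\{1,2,3\},\{1,2,3\})$ with $e_1=(r,u_1)$, $e_2=(r,u_2)$, $e_3=(u_1,u_3)$, the SS with $c(e_1,e_2,e_3)=(3,2,1)$ and $p(u_1,u_2,u_3)=(2,1,3)$ is not optimal (it is beaten at budget $B=4$ by the SS with $c'=(2,3,1)$, $p'=(1,2,3)$), yet a short case check over the budgets $B=1,2,3$ shows that the only SS whose max-prize is at most that of $(T,c,p)$ at all three of those budgets is $(T,c,p)$ itself; hence no improved SS for it exists. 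So the implication ``no improved SS exists $\Rightarrow$ optimal'' cannot be established by your route, and indeed appears to fail outright under the definitions as written; only the forward implication goes through.
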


We next introduce the concept of two closely-related configurations of
a CSM, and this notion will give us a way to relate SSs.  
\begin{definition}
\label{def:configurationNeighbor}
Given a CSM $M = (T, C, P)$, the two configurations
$(T, c, p)$, and $(T, c', p')$ are said to be {\em neighbors\/} if
\begin{enumerate}
  \item there exists an edge $(u,v)\in E(T)$ such that
\begin{eqnarray*}
p'(v) & = & p(u)\\
p'(u) & = & p(v)\\
p'(w) & = & p(w), \mbox{ otherwise, or}
\end{eqnarray*}
  \item there exist two edges $(u, v), (v, w) \in E(T)$ such that
\begin{eqnarray*}
c'((u, v)) & = & c((v,w))\\
c'((v,w)) & = & c((u,v))\\
c'((x,y)) & = & c((x,y)), \mbox{ otherwise.}
\end{eqnarray*}
\end{enumerate}
\end{definition}
The notion of neighboring configurations will be useful in developing
algorithms for finding {\em good security systems}, which we define
next.
\begin{definition}
\label{def:goodstrategy}
A {\em good security system (good SS)\/} is a SS
$(T,c,p)$ such that no neighboring configuration results in an improved
security system.
\end{definition}
Given a SS $(T,c,p)$ for a CSM $M$, a natural question
to pose is whether a local change to the SS can be made in order to 
strengthen the SS, that is, make the resulting SS improved.
In a practical setting one may not be able to redo the
security of an entire system, but instead may be able to make local
changes. 

Suppose $(u,v)\in E(T)$ where $p(u)\geq p(v)$,
and let $p'$ be the prize assignment obtained from $p$ by
swapping the prizes on $u$ and $v$, that is $p'(u)=p(v)$,
$p'(v) = p(u)$, and $p'(w)=p(w)$ otherwise. If now
$\tau$ is any SA, then $\pr(\tau,c,p') = \pr(\tau,c,p)$
if either both $u,v\in V(\tau)$ or neither $u$ nor $v$ are
vertices of $\tau$, or $\pr(\tau,c,p') \leq \pr(\tau,c,p)$
if $u\in V(\tau)$ and $v\not\in V(\tau)$. In either case
$\pr(\tau,c,p')\leq \pr(\tau,c,p)$ and therefore we have
for any budget $B$ that
\begin{equation}
\label{eqn:p'leqp}
\maxp(B,c,p')\leq \maxp(B,c,p).
\end{equation}
Similarly, if $(u,v),(v,w)\in E(T)$ where
$c((u,v))\leq c((v,w))$, let $c'$ be the cost
assignment obtained from $c$ by swapping the costs
on the incident edges $(u,v)$ and $(v,w)$ and leave all the other
edge-costs unchanged, that is $c'((u,v))=c((v,w))$,
$c'((v,w))=c((u,v))$ and $c'(e)=c(e)$ otherwise.
If $\tau$ is a SA, then clearly we always have
$\pr(\tau,c',p) = \pr(\tau,c,p)$. Also, if either
both $(u,v), (v,w)\in E(\tau)$ or neither $(u,v)$ nor
$(v,w)$ are edges in $\tau$, then 
$\cost(\tau,c',p) = \cost(\tau,c,p)$, and if
$(u,v)\in E(\tau)$ and $(v,w)\not\in E(\tau)$,
then $\cost(\tau,c',p) \geq \cost(\tau,c,p)$. In either
case we have $\cost(\tau,c',p) \geq \cost(\tau,c,p)$.
Hence, if $B$ is any budget, then by mere definition
we have that
\begin{equation}
\label{eqn:c'leqc}
\maxp(B,c',p)\leq \maxp(B,c,p).
\end{equation}
By (\ref{eqn:p'leqp}) and (\ref{eqn:c'leqc}) we have the
following proposition.
\begin{proposition}
\label{prp:goodstrategy}
Let $M = (T, C, P)$ be a CSM. A SS
given by $(T, c, p)$ is a good SS if for all
$(u, v), (v, w) \in E$ we have $c((u,v)) \geq c((v,w))$ and for all 
non-root vertices $u, v \in V(T)$ with $(u,v) \in E(T)$ we have 
$p(u) \leq p(v)$.
\end{proposition}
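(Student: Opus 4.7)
The plan is to deduce this proposition almost directly from the two inequalities (\ref{eqn:p'leqp}) and (\ref{eqn:c'leqc}) already established above, by observing that the hypotheses place $(T,c,p)$ on the \emph{receiving} side of those inequalities for every possible neighbor swap. By Observation~\ref{obs:optimalcannotbeimproved}'s style of reasoning (specialized to neighbors via Definition~\ref{def:goodstrategy}), it suffices to show that no neighboring configuration $(T,c',p')$ is an improved SS for $(T,c,p)$, i.e.\ that $\maxp(B,c',p') \geq \maxp(B,c,p)$ for every budget $B \in \prats$.

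Let $(T,c',p')$ be a neighbor of $(T,c,p)$. By Definition~\ref{def:configurationNeighbor} exactly one of two cases occurs. First I would handle the prize-swap case: there is an edge $(u,v)\in E(T)$ with $p'(u)=p(v)$, $p'(v)=p(u)$, and $p'=p$ elsewhere, and $c'=c$. The hypothesis gives $p(u)\leq p(v)$, hence $p'(u)\geq p'(v)$, and so $(T,c,p)$ is obtained from $(T,c',p')$ precisely by the prize swap analyzed in the derivation of (\ref{eqn:p'leqp}) (with the roles of $p$ and $p'$ interchanged). Applying that inequality in reversed form yields $\maxp(B,c,p)\leq \maxp(B,c',p')$ for every $B$. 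Second, for the cost-swap case, there are incident edges $(u,v),(v,w)\in E(T)$ with $c'((u,v))=c((v,w))$, $c'((v,w))=c((u,v))$, and $c'=c$ elsewhere, and $p'=p$. The hypothesis gives $c((u,v))\geq c((v,w))$, hence $c'((u,v))\leq c'((v,w))$, which is exactly the configuration analyzed in the derivation of (\ref{eqn:c'leqc}) applied to $(T,c',p)$. That inequality, run in reverse, again gives $\maxp(B,c,p)\leq \maxp(B,c',p')$ for every $B$.

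In both cases no budget $B'$ can satisfy $\maxp(B',c',p') < \maxp(B',c,p)$, so $(T,c',p')$ fails the strict-inequality clause of Definition~\ref{def:improvedDefense} and is not an improved SS. Since this holds for every neighbor, $(T,c,p)$ is a good SS by Definition~\ref{def:goodstrategy}. There is essentially no obstacle here beyond tracking which configuration plays the role of the ``worse'' one in (\ref{eqn:p'leqp}) and (\ref{eqn:c'leqc}); the whole proposition is a repackaging of those two inequalities into a single monotonicity statement about the canonical ``costs-decreasing-away-from-root, prizes-increasing-away-from-root'' arrangement.
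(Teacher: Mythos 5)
Your argument is correct and is essentially the paper's own proof: the paper derives the proposition directly from inequalities (\ref{eqn:p'leqp}) and (\ref{eqn:c'leqc}), and you have simply made explicit the step it leaves implicit, namely that under the stated monotonicity hypotheses every neighboring configuration sits on the ``worse'' side of those inequalities, so $\maxp(B,c',p')\geq\maxp(B,c,p)$ for all $B$ and no neighbor can be an improved SS. Nothing further is needed.
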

Note that Proposition~\ref{prp:goodstrategy} says that on any root to leaf path
in $T$ the penetration costs occur in decreasing order and the
prizes occur in increasing order.

From any configuration resulting from a SS $(T, c, p)$ for a CSM,
Proposition~\ref{prp:goodstrategy} gives a natural $O(n^2)$ algorithm for
computing a good SS by repeatedly moving to improved neighboring
configurations until no more such neighboring configurations exist.
We can do better than this method by first sorting the values in $C$
and $P$ using $O(n \log n)$ time, and then conducting a breath-first
search of $T$ in $O(n)$ time.  We can then use the breath-first search
level numbers to define bijections $c$ and $p$ that meet the
conditions of a good SS\@.  We summarize in the following.
\begin{observation}
Given a CSM $M = (T, C, P)$, there is an 
$O(n\log n)$ algorithm for computing a good SS for $M$.
\end{observation}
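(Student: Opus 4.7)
The plan is to give a constructive algorithm whose correctness follows directly from Proposition~\ref{prp:goodstrategy}. That proposition guarantees a good SS whenever the edge-cost bijection $c$ is non-increasing along each root-to-leaf path and the vertex-prize bijection $p$ is non-decreasing along each such path. My strategy is therefore to assign both weights in a way that respects the depth (BFS) ordering of $T$, which trivially enforces both monotonicity conditions.

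First, I would sort the multiset $C$ of penetration costs in non-increasing order and the multiset $P$ of prizes in non-decreasing order; this uses $O(n\log n)$ time in total. Second, I would perform a breadth-first search of $T$ from the root $r$, producing a linear order $u_{\pi(1)}, u_{\pi(2)}, \ldots, u_{\pi(n)}$ of the non-root vertices in which every vertex at level $i$ precedes every vertex at level $i+1$; this runs in $O(n)$ time. Third, I would define $p(u_{\pi(i)})$ to be the $i$-th smallest element of $P$ and, for the unique edge $e_{\pi(i)}$ whose head is $u_{\pi(i)}$, define $c(e_{\pi(i)})$ to be the $i$-th largest element of $C$. Since each non-root vertex has exactly one incoming edge, the vertex ordering transfers canonically to an edge ordering and yields well-defined bijections.

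To verify that the resulting SS is good, consider any edge $(u,v)\in E(T)$. Since $v$ is a child of $u$, the BFS ordering places $u$ strictly before $v$, hence $p(u)\leq p(v)$ by construction. Similarly, for any two incident tree edges $(u,v),(v,w)\in E(T)$, the edge $(u,v)$ precedes $(v,w)$ in the induced edge ordering, so $c((u,v))\geq c((v,w))$. Both hypotheses of Proposition~\ref{prp:goodstrategy} are therefore satisfied, and the total running time is dominated by the $O(n\log n)$ sorting step. The only subtlety, and the single point that requires care, is the bookkeeping observation that the BFS vertex ordering simultaneously serves as an edge ordering compatible with the parent-child relation; beyond this, no structural obstacle arises and the claim follows.
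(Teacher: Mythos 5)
Your proposal is correct and follows essentially the same route as the paper: sort $C$ and $P$ in $O(n\log n)$, run a breadth-first search in $O(n)$, and use the resulting level/BFS order to define $c$ and $p$ so that the monotonicity hypotheses of Proposition~\ref{prp:goodstrategy} hold. You merely spell out in more detail the bookkeeping (transferring the vertex order to the edges via each vertex's unique incoming edge) that the paper leaves implicit.
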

If we could eliminate the sorting step, we would have a more efficient
algorithm for obtaining a good SS, or if we restricted ourselves to
inputs that could be sorted in $O(n)$ time.  Also, notice that a good
SS has the heap property, if we ignore the root.  However, in our case
we cannot ``chose'' the shape of the heap, but we must use the
structure that is given to us as part of our input.

Suppose that our SS $(T,c,p)$ for $M$ 
satisfies a {\em strict} inequality $p(u) > p(v)$ 
for some $(u,v)\in E(T)$, or that $c((u,v)) < c((v,w))$
for some incident edges $(u,v),(v,w)\in E(T)$. A natural
question is whether the prize and cost assignments $p'$ 
and $c'$ as in (\ref{eqn:p'leqp}) and (\ref{eqn:c'leqc}) 
will result in an improved SS as in 
Definition~\ref{def:improvedDefense}. In 
Example~\ref{exa:P-T_p(3)} we will see that that is
not the case.

{\sc Convention:} Let $T_p(\ell)$ denote the 
rooted tree whose underlying graph is a path on $2\ell + 1$ vertices 
$V(T_p(\ell)) = \{r,u_1,\ldots,u_{2\ell}\}$
and directed edges 
\[
E(T_p(\ell)) = \{ (r,u_1), (r,u_2), (u_1,u_3), (u_2,u_4), 
\ldots, (u_{2\ell-3},u_{2\ell-1}), (u_{2\ell-2},u_{2\ell})\}
\]
rooted at its center vertex. 
We label the edges by the same index
as their heads: $e_1 = (r,u_1)$, $e_2 = (r,u_2)$,...,  
$e_{2\ell-1} = (u_{2\ell-3},u_{2\ell-1})$, and 
$e_{2\ell} = (u_{2\ell-2},u_{2\ell})$.
\begin{center}
\begin{tikzpicture}[scale=.7] 

\node[draw, thick, shape=circle,scale=1.3] (r) at (0,6) {$r$};
\node[draw, shape=circle,scale=1] (u1) at (-2,4) {$u_1$};
\node[draw, shape=circle,scale=1] (u2) at (2,4) {$u_2$};
\node[draw, shape=circle,scale=1] (u3) at (-4,2) {$u_3$};
\node[draw, shape=circle,scale=1] (u4) at (4,2) {$u_4$};
\node[draw, shape=circle,scale=1] (u5) at (-6,0) {$u_5$};
\node[draw, shape=circle,scale=1] (u6) at (6,0) {$u_6$};

\foreach \from/\to in {r/u1,r/u2,u1/u3,u2/u4,u3/u5,u4/u6}
\path[draw, thick] (\from) -- (\to);

\node[left] at (-1,5) {$e_1$};
\node[right] at (1,5) {$e_2$};
\node[left] at (-3,3) {$e_3$};
\node[right] at (3,3) {$e_4$};
\node[left] at (-5,1) {$e_5$};
\node[right] at (5,1) {$e_6$};

\node at (0,0) {$T_p(3)$};

\end{tikzpicture}
\end{center}
\begin{example}
\label{exa:P-T_p(3)}
\end{example}
\vspace{-2 mm} 
\noindent Let $(T_p(3),c,p)$ be a SS for a CSM $M$ where 
\[
c(e_1,e_2,e_3,e_4,e_5,e_6) := (1,1,1,1,1,2), \ \   
p(u_1,u_2,u_3,u_4,u_5,u_6) := (10,2,10,3,10,40),
\]
where the penetration costs and the prizes have
been simultaneously assigned in the obvious way.
We see that for any budget $B\in\prats$ we have
\[
\maxp(B,c,p) = \left\{
\begin{array}{ll}
10\lfloor B\rfloor & \mbox{ for } 0\leq B < 4, \\
10\lfloor B\rfloor + 5 & \mbox{ for } 4\leq B\leq 7, \\
75                     & \mbox{ for } 4 < B.
\end{array}
\right.
\]
If now $p'(u_1,u_2,u_3,u_4,u_5,u_6) = (10,3,10,2,10,40)$ is the prize assignment
obtained from $p$ by swapping the prizes on the neighboring vertices
$u_2$ and $u_4$, and $c'(e_1,e_2,e_3,e_4,e_5,e_6) = (1,1,1,2,1,1)$ be the 
edge-cost assignment obtained from $c$ be swapping the 
costs of the incident edges $e_4$ and $e_6$, then 
\[
\maxp(B,c,p') = \maxp(B,c',p) = \maxp(B,c,p),
\]
for any non-negative budget $B\in\prats$, showing that 
locally swapping either prize assignments on adjacent vertices,
or edge-costs on incident edges, does not necessarily improve the SS.

\vspace{3 mm}

In Theorem~\ref{thm:nooptimal} in section~\ref{sec:optimal}, 
we show that there are CSMs for which no optimal SS exists. 
In such cases obtaining a
locally optimal SS, as defined in Definition~\ref{def:goodstrategy},
may provide us with a reasonable defensive posture.  
In section~\ref{sec:optimal} we also show that optimal SSs 
exist for paths rooted at one of their leaves and for stars rooted at their
center vertices.

\section{Optimal Security Systems}
\label{sec:optimal}

One of the most natural and important questions to consider for a
given CSM $M$ is whether an optimal SS exists and if it does, what it
would look like.  Unfortunately, Theorem~\ref{thm:nooptimal} shows
that there are small and simple CSMs for which no optimal SS exists.  
Still we would like to
know for what CSMs optimal SSs do exist, and, if possible, have a way
to find these optimal SSs efficiently.  Corollary~\ref{cor:path-optimal}
and Theorem~\ref{thm:star} show that optimal SSs exist for CSMs 
$M = (T, C, P)$ when $T$ is a path or a star, respectively.  These 
theorems also
yield $O(n \log n)$ algorithms for producing optimal SSs in these
cases.  But, these results are not satisfying, as they are limited.
In sections~\ref{sec:CP-gen-obs}, \ref{sec:optimal-P-trees}, 
and~\ref{sec:duality} we
study {\em P-\/} and {\em C-models\/} and completely characterize the
types of trees that have optimal SSs.  

We begin with a lemma showing that all optimal SSs must have the
highest penetration costs assigned to the edges involving the root and
level-one vertices.
\begin{lemma}
\label{lem:level1}
Let $M = (T, C, P)$ be a CSM, where $T$ rooted at $r$
contains at least one non-root vertex.  Let $V_1 \subseteq T(V)$
denote the level-one vertices of $T$, and let $C_L$ be the multiset of
the largest $|V_1|$ values in $C$\@.  If an optimal $(T, c, p)$ SS for $M$, 
exists, then $c(e) \in C_L$ for $e \in \{ (r,v) \mid v \in V_1 \}$.
\end{lemma}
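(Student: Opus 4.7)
My plan is to argue by contradiction, using a local edge-cost swap in the style of the derivation leading to~(\ref{eqn:c'leqc}). Suppose $(T,c,p)$ is an optimal SS but the multiset of costs assigned by $c$ to the level-1 edges is not equal to $C_L$. Setting $a:=\min\{c(e)\mid e\text{ is a level-1 edge}\}$ and $b:=\max\{c(e)\mid e\text{ is not a level-1 edge}\}$, this assumption forces $a<b$. I fix a level-1 edge $e_1=(r,v_1)$ with $c(e_1)=a$ and a non-level-1 edge $e_2=(x,y)$ with $c(e_2)=b$, and let $e_1^*=(r,v_1^*)$ denote the unique level-1 edge on the path from $r$ to $y$.

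The easy case is $v_1^*=v_1$, in which $e_1$ is an ancestor edge of $e_2$. Defining $c'$ by swapping the values $a$ and $b$ between $e_1$ and $e_2$, every SA $\tau$ containing $e_2$ must also contain $e_1$, so the only SAs whose cost changes under the swap are those with $e_1\in E(\tau)$ and $e_2\notin E(\tau)$; for them the cost strictly increases by $b-a$. Exactly as in the derivation leading to~(\ref{eqn:c'leqc}), this yields $\maxp(B,c',p)\leq\maxp(B,c,p)$ for every budget $B$. A strict inequality at a budget $B'$ that just admits some prize-bearing SA containing $e_1$ but not $e_2$ under $c$ then contradicts optimality via Observation~\ref{obs:optimalcannotbeimproved}.

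The main obstacle is the case $v_1^*\neq v_1$, where the one-step swap is not cost-monotone, because SAs that reach $e_2$ through $e_1^*$ while avoiding $e_1$ become cheaper after the swap. My plan here is to first \emph{bubble} the value $b$ upward along the $r$-to-$y$ path $f_1,f_2,\ldots,f_\ell=e_2$ via a sequence of ancestor--descendant swaps on the consecutive pairs $(f_\ell,f_{\ell-1}),(f_{\ell-1},f_{\ell-2}),\ldots,(f_2,f_1)$. Each such intermediate swap is an instance of the easy case, and since $c(f_i)\leq b$ for $i\geq 2$ (because $f_i$ is non-level-1 and $b$ is the maximum non-level-1 cost), each step is cost-non-decreasing for every SA. After the cascade, $b$ sits on the level-1 edge $e_1^*$, reducing the configuration to the easy case applied now at $e_1^*$, or closing it directly by observing that placing $b$ on $e_1^*$ shuts off the subtree $T_{v_1^*}$ for budgets below $b$.

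The trickiest step will be producing a strict budget $B'$ that witnesses $\maxp(B',c^{\mathrm{fin}},p)<\maxp(B',c,p)$ for the final cost assignment $c^{\mathrm{fin}}$ after the cascade, since each individual bubble step only delivers a weak inequality. I plan to take $B'=c(e_1^*)$: under $c$ this budget is precisely enough for the attacker to cross $e_1^*$ and collect at least $p(v_1^*)$ together with any cheap descendants, whereas under $c^{\mathrm{fin}}$ the cost of $e_1^*$ has risen to $b>B'$, entirely closing the subtree $T_{v_1^*}$ and compounding the weak inequalities from the cascade into the desired strict improvement, contradicting the optimality of $(T,c,p)$.
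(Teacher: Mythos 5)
The central problem is that you are trying to prove something stronger than the lemma needs, and the extra machinery is exactly where the argument breaks. To contradict optimality in the sense of Definition~\ref{def:optimaldefensivestrategy} you do \emph{not} need your modified cost assignment to be an improved SS (weakly better at every budget, strictly better at one); you only need to exhibit one budget $B$ and one competing SS with strictly smaller $\maxp$. The paper's proof does exactly this: it swaps the smallest level-one cost $c_s = c((r,v_s)) \notin C_L$ with a cost $c(e_l)\in C_L$ sitting on an arbitrary non-level-one edge $e_l$ --- no ancestor/descendant relation, no cost-monotonicity --- and evaluates both systems at the single budget $B=c_s$, where the original SS yields $p(v_s)>0$ while the swapped SS yields $0$ because every level-one edge now costs more than $c_s$. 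Your ``easy case'' is a correct (stronger than necessary) version of this; your ``hard case'' is entirely avoidable.

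Moreover, the hard case as written has a genuine gap. Your justification that the cascade is cost-non-decreasing is ``$c(f_i)\le b$ for $i\ge 2$,'' but the last swap in your list is $(f_2,f_1)$ with $f_1=e_1^*$ a \emph{level-one} edge, and nothing bounds $c(e_1^*)$ by $b$: the failure of the lemma's conclusion only guarantees that \emph{some} level-one edge carries a small cost, so $e_1^*$ may carry the largest cost in $C$. Concretely, take $C=\{1,2,3,100\}$ with two level-one edges of costs $1$ and $100$ and the two deeper edges, of costs $2$ and $3$, hanging below the $100$-edge; then $a=1$, $b=3$, $c(e_1^*)=100>b$, and your final swap moves $3$ onto $e_1^*$, which \emph{opens} the subtree below $v_1^*$ for every budget in $[3,100)$ --- the resulting assignment can be strictly worse for the defender, so no contradiction is reached. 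Your witness budget $B'=c(e_1^*)$ explicitly presumes $b>c(e_1^*)$ and so fails here; and even when $c(e_1^*)<b$, that budget need not give strictness, since the untouched edge $e_1$ of cost $a\le B'$ may deliver the same maximum prize in both systems. The fix is to drop the case analysis and the cascade altogether: swap $a$ and $b$ directly wherever they sit, and compare the two systems only at the budget $B=a$.
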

\begin{proof}
Suppose we have an optimal SS $(T, c, p)$ that does not meet the
conditions of the lemma.  Let $c_s \not\in C_L$ be the smallest
penetration cost assigned by $c$ to an edge between the root $r$ and a
vertex $v_s \in V_1$, that is, $c((r, v_s)) = c_s \leq c((r,v))$ for
all $v \in V_1 -\{ v_s \}$. Let $e_s = (r,v_s)$ and let $e_l$ be an
edge not between the root and a level-one vertex where $c(e_l) \in C_L$.
We know that such an edge exists because $(T, c, p)$ does not meet the
conditions of the lemma.
To show that $(T, c, p)$ cannot be an optimal SS, we define a SS 
$(T,c',p)$ by letting $c'(e_s) = c(e_l)$, $c'(e_l) = c(e_s)$,
and $c'(e) = c(e)$ otherwise. Notice that for the budget $B = c_s$, 
we have $\maxp(B,c,p) = p(v_s) > 0 = \maxp(B,c',p)$.  
This fact contradicts that $(T, c, p)$ is an optimal SS.
\end{proof}
If an optimal SS exists, Lemma~\ref{lem:level1} tells us something
about its form.  In the next theorem we show that there are CSMs for
which no optimal SS exists.
\begin{theorem}
\label{thm:nooptimal}
There is a CSM $M = (T, C, P)$ for which no optimal
security system exists.
\end{theorem}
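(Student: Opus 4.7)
The plan is to exhibit an explicit CSM and then argue by a short finite case analysis that no SS simultaneously minimizes $\maxp(B, c, p)$ for every budget $B$. Since Definition~\ref{def:optimaldefensivestrategy}(ii) requires an optimal SS to be optimal with respect to every budget, it suffices to produce two particular budgets $B_1 < B_2$ such that the set of SSs minimizing $\maxp(B_1, c, p)$ is disjoint from the set of SSs minimizing $\maxp(B_2, c, p)$.

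I would take $T$ to be the smallest rooted tree that is neither a rooted path nor a rooted star, namely the tree with vertex set $\{r, v_1, v_2, w\}$ and directed edges $(r, v_1)$, $(r, v_2)$, $(v_1, w)$, and set $C = P = \{1, 2, 3\}$. By Lemma~\ref{lem:level1}, the two largest penetration costs $2$ and $3$ must be assigned to the two edges incident with the root, which forces the cost $1$ onto the edge $(v_1, w)$. This cuts the search down to at most twelve candidate SSs: the two cost configurations obtained by swapping $2$ and $3$ on the two root edges, combined with the $3! = 6$ prize permutations on $\{v_1, v_2, w\}$.

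For each of these twelve SSs I would tabulate $\maxp(B, c, p)$ at the two key budgets $B = 3$ and $B = 4$. A short direct computation, using the list of system attacks reachable within each budget, shows that the minimum value of $\maxp(3, c, p)$ over all SSs equals $2$, and that this value is attained only by the two SSs with $c((r,v_1)) = 3$, $c((r,v_2)) = 2$, $p(w) = 3$, and $\{p(v_1), p(v_2)\} = \{1, 2\}$; in both of these SSs one then checks that $\maxp(4, c, p) \in \{4, 5\}$. On the other hand, the minimum value of $\maxp(4, c, p)$ over all SSs equals $3$, attained by configurations that instead place prize $3$ on a level-one vertex so that no cost-$4$ subtree accumulates two large prizes. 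Hence the sets of minimizers at $B = 3$ and at $B = 4$ are disjoint, so no SS is optimal with respect to every budget, and the theorem follows.

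The only real obstacle is the finite enumeration: one must verify that Lemma~\ref{lem:level1} indeed pins down the cost side so that only twelve SSs need to be examined, and that the computed minima at the two chosen budgets are correct. Conceptually, the trade-off being exhibited is between placing the largest prize deep in $T$ (good at small budgets, since it cannot yet be reached) and shallow in $T$ (good at medium budgets, so that it is not forced to combine with another large prize along a single root-path); the tree and the multisets are chosen precisely so that these two pressures are incompatible.
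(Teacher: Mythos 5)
Your proposal is correct and is essentially the paper's own proof: the same four-vertex tree with $C=P=\{1,2,3\}$, the same use of Lemma~\ref{lem:level1} to force cost $1$ onto the deep edge, and the same two budgets ($3$ and $4$) exhibiting the incompatibility. The only cosmetic difference is that you enumerate all twelve remaining configurations directly, whereas the paper first whittles the candidates down to two via additional swap/budget-$2$ arguments before comparing them at budgets $3$ and $4$; your computed minima ($2$ at $B=3$, $3$ at $B=4$) and the disjointness of the minimizer sets check out.
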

\begin{proof}
Consider $M = (T,\{1,2,3\},\{1,2,3\})$, 
where $T$ is the tree given by $V(T) = \{r,u_1,u_2,u_3\}$ and 
$E(T) = \{e_1,e_2,e_3\}$ where $e_1 = (r, u_1)$, $e_2 = (r, u_2)$,  
and $e_3 = (u_1, u_3)$. By Lemma~\ref{lem:level1} we know
that an optimal SS $(T,c,p)$ has 
$c(e_3) = 1$, and we can further assume that $p(u_3) = 3$. 
By considering the budget of $B=2$,
we can also assume the prize of the head of the edge of cost $2$
to by $1$. Therefore, we have only two possible
optimal SSs for $M$: $(T,c,p)$ with $c(e_1,e_2,e_3) = (3,2,1)$
and $p(u_1,u_2,u_3) = (2,1,3)$, or $(T,c',p')$ with
$c'(e_1,e_2,e_3) = (2,3,1)$ and $p'(u_1,u_2,u_3) = (1,2,3)$. 
Since $\maxp(3,c,p) = 2$ and $\maxp(3,c',p') = 4$, we see
that $(T,c',p')$ is not optimal, and since 
$\maxp(4,c,p) = 5$ and $\maxp(4,c',p') = 4$, we see that $(T,c,p)$
is not optimal either. Hence, no optimal SS for $M$ exists.
\end{proof}
Although Theorem~\ref{thm:nooptimal} showed that there are CSMs for
which no optimal SS exists, we are interested in finding out for which
trees $T$ optimal SSs do exist.  We should point out that the values
of the weights in $C$ and $P$ also play an important role in whether
or not an optimal SS exists for a given tree.  In the next theorem we
show that an optimal SS exists for CSMs in which the tree in the model
is a path, and this result is independent of the values of the weights
in $C$ and $P$\@.  

Consider a CSM $M = (T,C,M)$ where $T$ is a path rooted at a leaf, so
\begin{equation}
\label{eqn:rooted-path}
V(T) = \{u_0,u_1,\ldots, u_n\}, \ \ E(T) = \{e_1,\ldots,e_n\}, 
\end{equation} 
where $u_0 = r$ and $e_i = (u_{i-1},u_i)$,
for each $i\in\{1,\ldots,n\}$. For a SS $(T,c,p)$ for $M$, then
for convenience let $p_i = p(u_i)$ and $c_i = c(e_i)$ for each $i$.
If we have $p_i\leq p_{i+1}$ and $c_i\geq c_{i+1}$ for each 
$i\in\{1,\ldots,n-1\}$ (so the prizes are ordered increasingly
and the edge-costs decreasingly as we go down the path from the root),
then by Proposition~\ref {prp:goodstrategy} the SS $(T,c,p)$ is 
a good SS as in Definition~\ref{def:goodstrategy}. But, we can
say slightly more here when $T$ is a path, in terms of obtaining
an improved SS as in Definition~\ref{def:improvedDefense}.
\begin{lemma}
\label{lmm:path-improved}
Let $M = (T,C,M)$ be a CSM where $T$ is a path with its vertices
and edges labeled as in (\ref{eqn:rooted-path}).

(i) If $(T,c,p)$ is a SS for $M$ and there is an $i$ with 
$p_i > p_{i+1}$ and $c_{i+1} > 0$, then the SS $(T,c,p')$ where
$p'$ is obtained by swapping the prizes on $u_i$ and $u_{i+1}$
is an improved SS.

(ii) If $(T,c,p)$ is a SS for $M$ and there is an $i$ with 
$c_i < c_{i+1}$, then the SS $(T,c',p)$ where $c'$ is obtained
by swapping the edges costs on $e_i$ and $e_{i+1}$ is an 
improved SS.
\end{lemma}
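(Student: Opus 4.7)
The plan is to exploit the fact that when $T$ is a path rooted at a leaf, the subtrees of $T$ containing the root are exactly the initial segments $\tau_j := \{u_0, u_1, \ldots, u_j\}$ for $0 \leq j \leq n$, with $\cost(\tau_j, c, p) = C_j := c_1 + \cdots + c_j$ and $\pr(\tau_j, c, p) = P_j := p_1 + \cdots + p_j$. Consequently $\maxp(B, c, p) = P_{j^*(B)}$, where $j^*(B) := \max\{j : C_j \leq B\}$, a formula that makes the effect of a single swap transparent. Since (\ref{eqn:p'leqp}) already gives $\maxp(B,c,p')\leq\maxp(B,c,p)$ for part (i), and (\ref{eqn:c'leqc}) gives $\maxp(B,c',p)\leq\maxp(B,c,p)$ for part (ii), the whole task reduces to producing a single witness budget $B'$ at which a strict inequality holds.

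For part (i), the swap from $p$ to $p'$ leaves $P_j$ unchanged for $j \neq i$ and replaces $P_i$ by $P_i - (p_i - p_{i+1})$, which is strictly smaller by hypothesis. I would pick $B' := C_i$; because $c_{i+1} > 0$, we have $C_{i+1} = B' + c_{i+1} > B'$, so $j^*(B') = i$ under both prize assignments, and therefore $\maxp(B', c, p') = P_i - (p_i - p_{i+1}) < P_i = \maxp(B', c, p)$, as required.

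For part (ii), the swap from $c$ to $c'$ leaves $C_j$ unchanged for $j \neq i$ and raises $C_i$ by $c_{i+1} - c_i > 0$; in particular $C_{i+1}$ is the same under both assignments. Again pick $B' := C_i$. Under $c$, the attack $\tau_i$ is just affordable, while $\tau_{i+1}$ is not since its cost is $B' + c_{i+1} > B'$ (using $c_{i+1} > c_i \geq 0$); hence $\maxp(B', c, p) = P_i$. Under $c'$, the new cost of $\tau_i$ is $B' + (c_{i+1} - c_i) > B'$, so $j^*(B') \leq i - 1$, yielding $\maxp(B', c', p) \leq P_{i-1} = P_i - p_i$.

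The main obstacle is making this last inequality strict. If $p_i > 0$ we are done immediately, which is the generic situation the lemma has in mind; in the degenerate case $p_i = 0$ the cost-swap has no strict witness at any budget, so the statement must be read under the implicit nondegeneracy assumption $p_i > 0$ (or else handled separately). Apart from this mild point, both parts come out cleanly from the initial-segment description of system attacks on a rooted path.
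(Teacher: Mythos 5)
Your proof is correct and takes essentially the same route as the paper's: the same prefix-subpath decomposition $\tau_j$, the same witness budget $B'=c_1+\cdots+c_i$ for both parts, and the same reduction of the non-strict inequalities to (\ref{eqn:p'leqp}) and (\ref{eqn:c'leqc}). Your remark that strictness in part (ii) needs $p_i>0$ is a fair observation rather than a defect of your argument --- the paper's own proof silently uses the same inequality $p_1+\cdots+p_{i-1}<p_1+\cdots+p_i$ and thus tacitly assumes positive prizes as well.
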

\begin{proof}
By Proposition~\ref {prp:goodstrategy} we only need to show
(i) there is a budget $B'$ such that $\maxp(B',c,p') < \maxp(B',c,p)$
and (ii) a budget $B''$ such that $\maxp(B'',c',p) < \maxp(B'',c,p)$.
For each $j$ let $\tau_{j} = T[e_1,\ldots,e_{j}]$ be the
rooted sub-path of $T$ that contains the first $j$ edges of $T$.

For $B' = c_1 +\cdots + c_i$ we clearly have 
\[
\maxp(B',c,p') 
= \pr(\tau_i,c,p') 
= p_1 + \cdots + p_{i-1} + p_{i+1} 
< p_1 + \cdots + p_i
= \pr(\tau_i,c,p) 
= \maxp(B',c,p), 
\]
showing that $(T,c,p')$ is an improved SS for $M$.

Likewise, we have
\[
\maxp(B',c',p) 
= \pr(\tau_{i-1},c',p) 
= p_1 + \cdots + p_{i-1}
< p_1 + \cdots + p_i
=  \pr(\tau_i,c,p) 
= \maxp(B',c,p),
\]
showing that $(T,c',p)$ is also an improved SS for $M$.
\end{proof}
Given any SS $(T,c,p)$ for $M$ as in Lemma~\ref{lmm:path-improved}
when $T$ is a rooted path, by bubble sorting the prizes and the edge
costs increasingly and decreasingly respectively, as we go down the 
path $T$ from the root, we obtain by Lemma~\ref{lmm:path-improved}
a SS $(T,c',p')$ such that for any budget $B$ we have
$\maxp(B,c',p')\leq \maxp(B,c,p)$. We therefore have
the following corollary.
\begin{corollary}
\label{cor:path-optimal}
If $M = (T,C,M)$ is a CSM where $T$ is a rooted path with its vertices
and edges labeled as in (\ref{eqn:rooted-path}),
then there is an optimal SS for $M$, and it is given 
by assigning the penetration costs to the edges 
and the prizes to the vertices in a decreasing order
and increasing order respectively from the root.
\end{corollary}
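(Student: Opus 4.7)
The plan is to show that the SS which assigns the penetration costs in decreasing order and the prizes in increasing order down the path from the root is optimal, by arguing that every other SS can be transformed into this one through a finite sequence of adjacent swaps, each of which does not increase $\maxp(B,\cdot,\cdot)$ at any budget $B$. Once such a non-increasing sort-to-target argument is in place, the target SS dominates every SS, hence is optimal by Definition~\ref{def:optimaldefensivestrategy}.

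Starting from an arbitrary SS $(T,c,p)$ for $M$, I would first bubble-sort the edge costs into decreasing order down the path. Each elementary step swaps a pair of incident edges $e_i, e_{i+1}$ with $c_i < c_{i+1}$, which is exactly the situation covered by inequality (\ref{eqn:c'leqc}); the resulting SS $(T,c',p)$ therefore satisfies $\maxp(B,c',p) \leq \maxp(B,c,p)$ for every $B \in \prats$. Since each such swap strictly decreases the number of inversions, after at most $\binom{n}{2}$ steps I reach an SS $(T,c^\star,p)$ whose edge costs are non-increasing from the root. I would then bubble-sort the prizes into non-decreasing order from the root. Each elementary step swaps a parent--child pair $u_i, u_{i+1}$ with $p_i > p_{i+1}$, which is covered by inequality (\ref{eqn:p'leqp}), so the resulting SS again has $\maxp(B,\cdot,\cdot)$ no larger than before at every budget. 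After finitely many such swaps the configuration $(T,c^\star,p^\star)$ described in the corollary is reached, and chaining the inequalities gives $\maxp(B,c^\star,p^\star) \leq \maxp(B,c,p)$ for every $B \in \prats$.

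Because the resulting configuration $(T,c^\star,p^\star)$ is determined, up to ties (which do not affect the values of $\maxp$), by the multisets $C$ and $P$ alone, the same target SS is reached starting from \emph{any} SS for $M$. Consequently $\maxp(B,c^\star,p^\star) \leq \maxp(B,c',p')$ holds for every SS $(T,c',p')$ and every $B \in \prats$, which by Definition~\ref{def:optimaldefensivestrategy} makes $(T,c^\star,p^\star)$ optimal. There is no substantive obstacle here beyond the termination of the bubble sort, which is standard; the real analytic content is already packaged into inequalities (\ref{eqn:p'leqp}) and (\ref{eqn:c'leqc}), and the path structure of $T$ guarantees that those adjacent-swap inequalities compose cleanly along a single chain of vertices and edges.
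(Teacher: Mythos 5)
Your proposal is correct and follows essentially the same route as the paper: the paper also derives the corollary by bubble-sorting an arbitrary SS into the canonical one via adjacent swaps, using exactly the non-increasing inequalities (\ref{eqn:p'leqp}) and (\ref{eqn:c'leqc}) (packaged there as Lemma~\ref{lmm:path-improved}, which additionally records that each swap is strictly improving at some budget, a refinement not needed for optimality). The only cosmetic difference is that you cite the two inequalities directly rather than the intermediate lemma.
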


We now show that an optimal SS exists for $M = (T,C,P)$ when $T$ is a star.
Let $T$ be a star with root $r$ and non-root vertices $u_1,\ldots,n_n$
and edges $e_i = (r,u_i)$ for $i=1,\ldots,n$. Suppose the costs
and prizes are given by $C = \{c_1, \ldots, c_n \}$ and 
$P = \{p_1, \ldots, p_n \}$. When considering an arbitrary security system 
$(T,c,p)$ where $c(u_i) = c_i$ and $p(e_i) = p_i$ for each $i$,
we can without loss of generality assume the edge-costs to be 
in an increasing order $c_1\leq\cdots\leq c_n$. 
\begin{lemma}
\label{lmm:star}
Suppose $T$ is a star and $(T,c,p)$ is a SS as above. 
If $p'$ is another prize assignment obtained from $p$ by swapping 
the prizes $p_i$ and $p_j$ where $i<j$ and $p_i\leq p_j$, 
then for any budget $B$ we have $\maxp(B,c,p)\leq \maxp(B,c,p')$.
\end{lemma}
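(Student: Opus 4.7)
The plan is to exploit the simple combinatorial structure of attacks on a star: any system attack $\tau$ of $(T,c,p)$ is determined uniquely by the set $S\subseteq \{1,\ldots,n\}$ of leaves it includes, with $\cost(\tau,c,p) = \sum_{k\in S} c_k$ and $\pr(\tau,c,p) = \sum_{k\in S} p_k$. So the question reduces to a purely set-theoretic one: fix a budget $B$, take an optimal attack $S^{\ast}$ achieving $\maxp(B,c,p)$, and produce an attack $S'$ for the swapped assignment $p'$ whose cost is still at most $B$ and whose $p'$-prize is at least $\maxp(B,c,p)$.

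I would then split into four cases according to how $S^{\ast}$ meets the pair $\{i,j\}$. In the three easy cases one can simply take $S' = S^{\ast}$:
\begin{itemize}
\item if $\{i,j\}\cap S^{\ast} = \varnothing$, then $p$ and $p'$ agree on every vertex of $S^{\ast}$, so the prize is unchanged;
\item if $\{i,j\}\subseteq S^{\ast}$, then $p'(u_i)+p'(u_j) = p_j + p_i = p(u_i) + p(u_j)$, and again the prize is unchanged;
\item if $i\in S^{\ast}$ and $j\notin S^{\ast}$, then $p'(u_i) = p_j \geq p_i = p(u_i)$ while $p$ and $p'$ agree on the remaining vertices of $S^{\ast}$, so the $p'$-prize of $S^{\ast}$ is at least its $p$-prize.
\end{itemize}
The remaining case, $j\in S^{\ast}$ and $i\notin S^{\ast}$, is the only genuinely delicate one. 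Here I would set $S' := (S^{\ast}\setminus\{j\})\cup\{i\}$. Since $c_i\leq c_j$ by hypothesis, $\cost(S',c) = \cost(S^{\ast},c) - c_j + c_i \leq B$, so $S'$ is a legitimate attack. Computing its $p'$-prize,
\[
\sum_{k\in S'} p'(u_k) = \sum_{k\in S^{\ast}\setminus\{j\}} p(u_k) + p'(u_i) = \sum_{k\in S^{\ast}\setminus\{j\}} p(u_k) + p_j = \sum_{k\in S^{\ast}} p(u_k),
\]
which equals $\maxp(B,c,p)$.

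The main obstacle is that fourth case, because it is the only one in which one must change the underlying attack, not merely re-read it under $p'$; the ordering hypothesis $c_i\leq c_j$ is exactly what makes the substitution budget-feasible, and without it the lemma would fail. Once all four cases are handled, we obtain an attack $S'$ with $\pr(S',c,p')\geq \maxp(B,c,p)$ and $\cost(S',c)\leq B$, which by definition gives $\maxp(B,c,p')\geq \maxp(B,c,p)$, completing the proof.
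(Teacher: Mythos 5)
Your proposal is correct and follows essentially the same route as the paper's own proof: a case analysis on how an optimal attack meets $\{u_i,u_j\}$, keeping the attack unchanged in the easy cases and performing the exchange $\tau' = (\tau - u_j)\cup u_i$ in the one delicate case, where $c_i\leq c_j$ (from $i<j$ and the assumed increasing order of costs) guarantees budget feasibility. The only cosmetic difference is that you split the paper's first case into two.
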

\begin{proof}
Let $B$ be a given budget and ${\tau}\subseteq T$ an optimal attack
with respect to $p$, so $\pr({\tau},c,p) = \maxp(B,c,p)$. 
We consider the following cases.

{\sc Case one:} If both of $u_i$ and $u_j$ are in 
${\tau}$, or neither of them are,
then $\maxp(B,c,p) = \pr({\tau},c,p) = \pr({\tau},c,p')\leq \maxp(B,c,p')$.

{\sc Case two:} If $u_i\in V({\tau})$ and $u_j\not\in V({\tau})$, then
$\maxp(B,c,p) = \pr({\tau},c,p) \leq \pr({\tau},c,p) - p_i + p_j
= \pr({\tau},c,p') \leq \maxp(B,c,p')$.

{\sc Case three:} If $u_i\not\in V({\tau})$ and $u_j\in V({\tau})$, then
${\tau'} = ({\tau}- u_j)\cup u_i$ is a rooted subtree of $T$ with
$c({\tau'}) = c({\tau}) - c_j + c_i \leq B$ and is therefore 
within the budget $B$.
Hence, $\maxp(B,c,p) = \pr({\tau},c,p) = \pr({\tau'},c,p')\leq \maxp(B,c,p')$.

Therefore, in all cases we have $\maxp(p,c,B)\leq \maxp(p',c,B)$.
\end{proof} 
Since any permutation is a composition of transpositions, we have the
following theorem as a corollary.
\begin{theorem}
\label{thm:star}
Let $M = (T, C, P)$ be a CSM where $T$ is a star rooted at its 
center vertex. Then there is an optimal SS for $M$, and it is given
by assigning the prizes to the vertices in the same increasing order 
as the costs are assigned increasingly to the corresponding edges.
\end{theorem}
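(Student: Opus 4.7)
My plan is to derive the theorem as a direct corollary of Lemma~\ref{lmm:star}, using the observation that any permutation of the prizes can be reached from the sorted one by a chain of transpositions to which the lemma applies. Since $T$ is a star, every SS $(T,c',p')$ is uniquely determined by a bijection pairing costs with vertices and a bijection pairing prizes with vertices; after relabeling the non-root vertices we may assume without loss of generality that the edge-costs are in the fixed order $c_1\leq\cdots\leq c_n$ on $e_1,\ldots,e_n$, so that the only remaining freedom is the prize assignment $p$. Let $p^\star$ denote the sorted assignment with $p^\star(u_i) = p_i$ where $p_1\leq\cdots\leq p_n$, and let $p$ be any other prize assignment. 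The goal is to show $\maxp(B,c,p^\star)\leq\maxp(B,c,p)$ for every budget $B\in\prats$.

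To pass from $p^\star$ to $p$ I would construct an explicit sequence of intermediate prize assignments $p^\star = q_0, q_1, \ldots, q_m = p$, with $q_{k+1}$ obtained from $q_k$ by swapping the prizes on two vertices $u_i,u_j$ with $i<j$ and $q_k(u_i)\leq q_k(u_j)$, so that Lemma~\ref{lmm:star} applies to every step and gives
\[
\maxp(B,c,q_k)\leq \maxp(B,c,q_{k+1})
\]
for all $B$. The standard way to produce such a sequence is to bubble-sort $p$ back to $p^\star$ via adjacent transpositions of inversions and then read the sequence in reverse: each bubble-sort step swaps an anti-sorted adjacent pair into sorted order, so its reverse swaps a sorted adjacent pair into an anti-sorted one, which is exactly the hypothesis of Lemma~\ref{lmm:star}. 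Chaining the resulting inequalities yields $\maxp(B,c,p^\star)\leq \maxp(B,c,p)$, and since $p$ was arbitrary, $(T,c,p^\star)$ is optimal in the sense of Definition~\ref{def:optimaldefensivestrategy}.

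I do not anticipate any serious obstacle: Lemma~\ref{lmm:star} already contains all of the combinatorial content, and the reduction of arbitrary permutations to sorted ones via adjacent inversion-reducing transpositions is a textbook fact about bubble sort. The only minor point worth stating carefully is the WLOG relabeling of vertices, which uses the symmetry of a star to absorb the freedom in the cost bijection $c'$ so that the argument can focus purely on the prize bijection. Once that normalization is in place, the proof collapses to iterating Lemma~\ref{lmm:star}, exactly as the excerpt's phrase ``any permutation is a composition of transpositions'' suggests.
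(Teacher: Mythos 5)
Your proposal is correct and follows essentially the same route as the paper, which derives the theorem from Lemma~\ref{lmm:star} with the one-line remark that any permutation is a composition of transpositions. Your bubble-sort chain of adjacent inversion-creating swaps is in fact a slightly more careful version of that remark, since it guarantees that every transposition in the chain satisfies the lemma's hypothesis ($i<j$ and current prize at $u_i$ at most that at $u_j$), a point the paper leaves implicit.
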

For rooted trees on $n$ non-root vertices, 
Corollary~\ref{cor:path-optimal} and Theorem~\ref{thm:star} give
rise to natural sorting-based $O(n \log n)$ algorithms for computing
optimal SSs. Notice that in an optimal SS in a general tree,
the smallest prize overall
must be assigned to a level-one vertex $u$ which has the largest
penetration cost assigned to its corresponding edge, $(r, u)$, to the
root.  And, furthermore, we cannot say more than this statement for
arbitrary trees as the next assignment of a prize will depend on the
relative values of the penetration costs, prizes, and structure of the
tree. In view of the fact that optimal SSs do not
exist, except for paths and stars as we will see shortly in 
Observation~\ref{obs:path-and-star}, we turn our attention to
restricted CSMs and classify them with respect to optimal SSs.

\section{Specific Security Systems, P-Models, and C-Models}
\label{sec:CP-gen-obs} 

In this section we extend CSMs to include penetration costs and prizes
of value zero.  For a CSM $M = (T, C, P)$ with no optimal SS and a
rooted super-tree $T^{\dagger}$ of which $T$ is a rooted subtree, we can
always assign the prize of zero to the nodes in $V(T^{\dagger})\setminus V(T)$
and likewise the penetration cost of zero to the edges in
$E(T^{\dagger})\setminus E(T)$, thereby obtaining a CSM 
$M^{\dagger} = (T^{\dagger}, C^{\dagger},
E^{\dagger})$ that also has no optimal SS\@. Hence, by the example provided in
the proof of Theorem~\ref{thm:nooptimal}, we have the following
observation.
\begin{observation}
\label{obs:path-and-star}
If $T$ is a rooted tree, such that for any multisets $C$ and $P$ of
penetration costs and prizes, respectively, the CSM 
$M = (T,C,P)$ has an optimal SS, then $T$ is either a path rooted at
one of its leaves, or a star rooted at its center vertex.
\end{observation}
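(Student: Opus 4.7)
My plan is to prove the contrapositive: assuming $T$ is a rooted tree that is neither a path rooted at one of its leaves nor a star rooted at its center vertex, I construct multisets $C, P$ such that the CSM $M = (T, C, P)$ has no optimal SS. The core building block is Theorem~\ref{thm:nooptimal}, which exhibits the three-edge Y-tree $T_Y$ (on vertex set $\{r,u_1,u_2,u_3\}$ with edges $(r,u_1)$, $(r,u_2)$, $(u_1,u_3)$) together with $C_Y=P_Y=\{1,2,3\}$ as a CSM with no optimal SS.

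The first step is to identify a copy of $T_Y$ inside $T$ as a rooted subtree sharing the root $r$. In the ``easy'' case where the root $r$ has degree at least $2$ and $T$ is not a star, some child $u_1$ of $r$ has a child $u_3$; picking any second child $u_2$ of $r$ one obtains the desired $T_Y\subseteq T$ with $|V(T)|-4$ extra non-root vertices and $|E(T)|-3$ extra edges sitting outside $T_Y$.

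Next I invoke the zero-padding construction outlined in the paragraph preceding the observation: I set
\[
C = \{1, 2, 3\} \cup \{0\}^{|E(T)| - 3}, \qquad P = \{1, 2, 3\} \cup \{0\}^{|V(T)| - 4}.
\]
Assuming toward a contradiction that some SS $(T, c, p)$ is optimal for this $M$, I use the swap inequalities~(\ref{eqn:p'leqp}) and~(\ref{eqn:c'leqc}) to iteratively interchange nonzero values with neighboring zero values, thereby moving every nonzero penetration cost onto an edge of $T_Y$ and every nonzero prize onto a non-root vertex of $T_Y$ without increasing $\maxp(B, \cdot, \cdot)$ at any budget $B$. Since the padded edges and vertices then all carry weight zero, the max-prize function of the resulting SS on $T$ coincides with the max-prize function of the induced SS on $T_Y$ with respect to $(C_Y, P_Y)$; that induced SS would be optimal for the Y-tree CSM, contradicting Theorem~\ref{thm:nooptimal}.

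The main obstacle is the swap-and-concentrate step. Because each licensed swap acts only on adjacent edges or adjacent vertex pairs, one has to verify that a finite sequence of such swaps can actually concentrate all nonzero values on $T_Y$ while keeping $\maxp(B,\cdot,\cdot)$ non-increasing at every intermediate stage; this requires a careful induction on the ``distance'' from the nonzero values to $T_Y$. A second subtlety is the case in which $r$ has degree $1$ and $T$ is not a path: here the embedding of $T_Y$ sharing the root fails outright, and one must descend along the chain of degree-one vertices until reaching the first branching vertex $v$, then either invoke a variant of the padding argument that accounts for the largest cost being forced onto the level-one edge by Lemma~\ref{lem:level1}, or directly mimic the budget-comparison analysis of Theorem~\ref{thm:nooptimal} on the Y-tree sitting below $v$. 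The latter delicate case is where the real work of the proof lies.
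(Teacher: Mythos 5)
Your overall strategy --- pass to the contrapositive, embed the three-edge tree of Theorem~\ref{thm:nooptimal} as a rooted subtree, and zero-pad the remaining edges and vertices --- is exactly the route the paper takes: its entire proof consists of the zero-padding remark in the paragraph preceding the observation together with the example from Theorem~\ref{thm:nooptimal}. But your write-up has two genuine problems, and the more serious one is precisely the case you defer. When the root has degree one and $T$ is not a path, the difficulty is not that the argument becomes delicate; it is that no reduction to Theorem~\ref{thm:nooptimal} can succeed, because the conclusion can fail outright. Take $T$ to be $K_{1,3}$ rooted at a leaf, i.e.\ $r\to a$ with $a\to b$ and $a\to c$: this is neither a path rooted at a leaf nor a star rooted at its center, yet for \emph{every} $C=\{c_1\geq c_2\geq c_3\}$ and $P=\{p_1\leq p_2\leq p_3\}$ the assignment placing $c_1$ on $(r,a)$, $p_1$ on $a$, and pairing $(c_3,p_2)$ and $(c_2,p_3)$ on the two pendant edges is optimal. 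Indeed, every nonempty attack contains $(r,a)$ and $a$, so under any rival assignment the attacks consisting of $(r,a)$ alone, of $(r,a)$ plus the cheaper pendant edge, and of $(r,a)$ plus either pendant edge become affordable no later than budgets $c_1$, $c_1+c_3$, $c_1+c_2$ respectively, and already yield prizes at least $p_1$, $p_1+p_2$, $p_1+p_3$, which is exactly what the displayed assignment achieves at those thresholds. So no choice of $(C,P)$ witnesses the contrapositive for this $T$, and the ``real work'' you postpone cannot be completed; the degree-one-root trees have to be confronted directly, and they expose that the paper's own one-sentence justification silently assumes the root has at least two children.

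The second problem is the swap-and-concentrate step in your easy case. Inequality~(\ref{eqn:p'leqp}) only licenses pushing the \emph{larger} prize from a parent to its child, i.e.\ away from the root, without increasing $\maxp$; to concentrate the nonzero prizes onto your copy of $T_Y$, which sits at the top of $T$, you need the reverse swaps, and those can strictly increase $\maxp$. So the induction on the distance of the nonzero values from $T_Y$ is not driven by (\ref{eqn:p'leqp}) and~(\ref{eqn:c'leqc}). What is actually needed is what the paper sketches in Observations~\ref{obs:P-subtree} and~\ref{obs:C-subtree}: a direct comparison of attacks showing that the canonical zero-padded extension of an assignment on the subtree has the same max-prize function as the original, together with an argument that every assignment on the big tree is dominated at every budget by some canonical one. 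Substituting that comparison for the swap argument repairs the case where the root has two children; the degree-one-root case remains a hole in your proof and, as far as I can see, in the paper's as well.
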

In light of Observation~\ref{obs:path-and-star}, we seek some natural
restrictions on our CSM $M$ that will guarantee it having an optimal
SS.  Since both the penetration costs and the prizes of $M = (T,C,P)$
take values in $\prats$ we can, by an appropriate
scaling, obtain an equivalent CSM where both the costs and prizes take
values in $\nats \cup \{ 0 \}$, that is, we may assume $c(e)\in\nats
\cup \{ 0 \}$ and $p(u)\in\nats \cup \{ 0 \}$ for every $e\in
E(T)$ and $u\in V(T)$, respectively.

First, we consider the restriction on a CSM $M = (T,C,P)$ where $C$
consists of a single penetration-cost value, that is, 
$C = \{1,1,\ldots, 1\}$ consists of $n$ copies of the unit penetration 
cost one. From a realistic point of view, this assumption seems to be
reasonable; many computer networks consist of computers with similar
password/encryption security systems on each computer (that is, the
penetration cost is the same for all of the computers), whereas the
computers might store data of vastly distinct values (that is, the
prizes are distinct).

{\sc Convention:} In what follows, it will be convenient
to denote the multiset containing $n$ (or an arbitrary number of) 
copies of $1$ by ${{I}}$. In a similar way, we will denote by $\mathbf{1}$
the map that maps each element of the appropriate domain to $1$.
As the domain of $\mathbf{1}$ should be self-evident each time, 
there should be no ambiguity about it each time.
\begin{definition}
\label{def:P-model}
A {\em P-model\/} is a CSM $M = (T,{{I}},P)$ where $T$
has $n$ non-root vertices and where ${{I}}$ is constant, consisting 
of $n$ copies of the unit penetration cost.
\end{definition}
Consider a SS $(T,c,p)$ of a CSM $M = (T, C, P)$. We can obtain an
{\em equivalent\/} SS $(T',\mathbf{1},p')$ of a P-model 
$M' = (T', {{I}}, P')$ in the
following way: for each edge $e = (u,v) \in E(T)$ with penetration
cost $c(e)=k\in\nats$ and prizes $p(u), p(v) \in\nats$ of its head and
tail, respectively, replace the 1-path $(u,e,v)$ with a directed path
of new vertices and edges $(u,e_1,u_1,e_2,u_2,\ldots, u_{k-1},e_k,v)$
of length $k$.  We extend the penetration cost and prize functions 
by adding zero-prize vertices where needed, that is, ${\mathbf{1}}(f) = 1$ 
for each $f\in E(T')$, and we let
\[
p'(u) = p(u), \ \ p'(v) = p(v), \mbox{ and }
p'(u_1) = p'(u_2) = \cdots = p'(u_{k-1}) = 0.
\]
In this way we obtain a SS $(T',c',p')$ of a P-model $M' = (T',I,P')$.
We view the vertices $V(T)$ of positive prize as a subset of
$V(T')$ (namely, those vertices of $T'$ with positive
prize).\footnote{Note that there are some redundant definitions on the
prizes of the vertices when considering incident edges, but the
assignments do agree, as they have the same prize values as in $T$.}

Recall that $T$ is a {\em rooted contraction\/} of $T'$ if $T$ is
obtained from $T'$ by a sequence of simple contractions of edges, and
where any vertex contracted into the root remains the root.  With
rooted trees, this means precisely that $T$ is a rooted 
{\em minor\/} of $T'$~\cite[p.~54]{Geir-Ray}.
\begin{proposition}
\label{prp:P-model}
Any SS $(T,c,p)$ of a CSM $M = (T,C,P)$ is
equivalent to a SS $(T',\mathbf{1},p')$ of a P-model 
$M' = (T',I,P')$ where
(i) $T$ is rooted minor of $T'$, and 
(ii) $p'(u) = p(u)$ for each $u\in V(T)\subseteq V(T')$,
and $p'(u) = 0$, otherwise.
\end{proposition}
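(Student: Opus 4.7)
The plan is to follow the explicit construction sketched just before the proposition: for each edge $e=(u,v)\in E(T)$ with $c(e)=k\in\nats$, replace $e$ by a directed path of $k$ unit-cost edges through $k-1$ new, zero-prize vertices. I would first spell out this construction formally to define $T'$, $\mathbf{1}$, and $p'$, note that $|E(T')|=\sum_{e\in E(T)}c(e)$, and observe immediately that contracting the $k-1$ new edges of every subdivision restores $T$ (with the root of $T'$ going to the root of $T$), which verifies claim (i). Claim (ii) is true by definition of $p'$.

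The content of the proposition is the word \emph{equivalent}, which I interpret (as used implicitly in the excerpt) as the equality $\maxp(B,c,p)=\maxp(B,\mathbf{1},p')$ for every budget $B\in\prats$. I would prove this by exhibiting a correspondence between system attacks. In one direction, given an SA $\tau\subseteq T$, define $\tau'\subseteq T'$ by replacing each edge of $\tau$ with its full subdivision path. Then $\tau'$ is a rooted subtree of $T'$ with $\cost(\tau',\mathbf{1},p')=\sum_{e\in E(\tau)}c(e)=\cost(\tau,c,p)$, and since every newly added vertex has $p'=0$, $\pr(\tau',\mathbf{1},p')=\sum_{u\in V(\tau)}p(u)=\pr(\tau,c,p)$. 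Hence any prize achievable in $(T,c,p)$ under budget $B$ is achievable in $(T',\mathbf{1},p')$ under the same budget, giving $\maxp(B,c,p)\le\maxp(B,\mathbf{1},p')$.

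In the other direction, given an SA $\tau'\subseteq T'$ with $\cost(\tau',\mathbf{1},p')\le B$, define $\tau\subseteq T$ by $V(\tau)=V(T)\cap V(\tau')$; since $\tau'$ is a rooted subtree, any $u\in V(T)\cap V(\tau')$ is reached from $r$ within $\tau'$, which forces the whole subdivision path from the parent of $u$ in $T$ to $u$ to lie in $\tau'$, so $\tau$ is indeed a rooted subtree of $T$. The edges of $\tau$ correspond to the subdivision paths fully contained in $\tau'$, so $\cost(\tau,c,p)\le|E(\tau')|=\cost(\tau',\mathbf{1},p')\le B$ (the inequality can be strict exactly when $\tau'$ contains a proper initial segment of some subdivision path). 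Because $p'$ vanishes outside $V(T)$, $\pr(\tau',\mathbf{1},p')=\sum_{u\in V(T)\cap V(\tau')}p(u)=\pr(\tau,c,p)$. Hence $\maxp(B,\mathbf{1},p')\le\maxp(B,c,p)$, and combined with the previous direction the two quantities agree for every $B$.

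The only subtle point, and what I would flag as the main thing to be careful about, is this last observation that a budget spent on an incomplete subdivision path is strictly wasted in $T'$: it contributes to $\cost$ but not to $\pr$. This is what guarantees that the reverse direction does not lose prize under the weaker cost bound, and it is what ultimately makes the two models equivalent rather than merely one dominating the other. Everything else is bookkeeping on the construction.
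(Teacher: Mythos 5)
Your proposal is correct and follows essentially the same route as the paper's own (sketched) proof: the same subdivision construction, the same forward map sending an attack $\tau$ to the union of its full subdivision paths, and the same reverse map taking $V(T)\cap V(\tau')$, with the key observation that partial subdivision paths cost budget but contribute no prize. Your write-up is somewhat more detailed than the paper's sketch (e.g., in verifying that the reverse image is a rooted subtree), but the argument is the same.
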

\begin{proof}
(Sketch) Given a budget $B\in\prats$, clearly any optimal attack
$\tau$ on a SS $(T,c,p)$ having $\pr(\tau,c, p) = \maxp(B,c,p)$ has an
equivalent attack $\tau'$ on a SS $(T',\mathbf{1},p')$ of the same cost
$\cost(\tau',\mathbf{1}, p') = \cost(\tau,c, p)$ and hence within the
budget $B$, where $\tau'$ is the smallest subtree of $T'$ that
contains all of the vertices of $\tau$. By construction, we also
have that $\pr(\tau',\mathbf{1}, p') = \pr(\tau,c, p) = \maxp(B,c,p)$ 
since all of the vertices from $\tau$ are in $\tau'$ and have the 
same prize there, and the other vertices in $\tau'$ have prize zero. 
This shows that $\maxp(B,c,p)\leq \maxp(B,\mathbf{1},p')$.

Conversely, an optimal attack $\tau'$ on 
$(T',\mathbf{1},p')$ with $\pr(\tau',\mathbf{1},p') = \maxp(B,\mathbf{1},p')$ 
yields an attack $\tau$ on $(T,c,p)$ by
letting $\tau$ be the subtree of $T$ induced by the vertices
$V(\tau')\cap V(T)$. In this way 
$\pr(\tau,c, p) = \pr(\tau',\mathbf{1}, p')$ and 
$\cost(\tau,c, p) \leq \cost(\tau',\mathbf{1}, p')$, since some
of the vertices of $\tau'$ might have zero prize, as they are not in
$\tau$. By definition of $\maxp(\cdot)$ we have that
$\maxp(B,\mathbf{1},p')\leq \maxp(B,c,p)$. Hence, the SS $(T,c,p)$ and
$(T',\mathbf{1},p')$ are equivalent.
\end{proof}
Secondly, and dually, we can restrict our attention to the case where
the multiset of prizes $P$ consists of a single unit prize value, 
so $P = {{I}} = \{1,1,\ldots,1\}$ 
consists of $n$ copies of the unit prize.
\begin{definition}
\label{def:C-model}
A {\em C-model\/} is a CSM $M = (T,C,{{I}})$, 
where $T$ has $n$ non-root vertices and where ${{I}}$ is constant,
consisting of $n$ copies of the unit prize.
\end{definition}
As before, consider a SS $(T,c,p)$ of a CSM $M = (T, C, P)$.  We can
obtain an equivalent SS $(T'',c'',\mathbf{1})$ of a C-model 
$M'' = (T'',C'',I)$ in the following way: for each edge 
$e = (u,v) \in E(T)$ with
penetration cost $c(e)=k\in\nats$ and prizes $p(u), p(v) \in\nats$ of
its head and tail, respectively, replace the 1-path $(u,e,v)$ with a
directed path of new vertices and edges $(u,e,u_1,e_1,u_2,\ldots,
u_{k-1},e_{k-1},v)$ of length $k$.  We extend the penetration cost and
prize functions by adding zero-cost edges where needed,
that is, $\mathbf{1}(w) = 1$ for every $w\in V(T'')$, and we let
\[
c''(e) = c(e) \ \ \mbox{ and }
c''(e_1) = c''(e_2) = \cdots = c''(e_{k-1}) = 0.
\]
In this way we obtain a SS $(T'',c'',\mathbf{1})$ of a C-model 
$M'' = (T'',C'',I)$, where the multiset 
of prizes consists of a single unit prize
value ($\sum_{u\in V(T)\setminus\{r\}}p(u)$ copies of it).  We also
view the edges $E(T)$ of positive penetration cost as a subset of
$E(T'')$ (namely, those edges of $T''$ with positive penetration
cost).  We also have the following proposition that is dual to
Proposition~\ref{prp:P-model}.
\begin{proposition}
\label{prp:C-model}
Any SS $(T,c,p)$ of a CSM $M = (T,C,P)$ is
equivalent to a SS $(T'',c'',\mathbf{1})$ of a C-model 
$M'' = (T'', C'',I)$, where
(i) $T$ is rooted minor of $T''$, and 
(ii) $c''(e) = c(e)$ for each $e\in E(T)\subseteq E(T'')$,
and $c''(e) = 0$, otherwise.
\end{proposition}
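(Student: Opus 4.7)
My plan is to mirror the argument of Proposition~\ref{prp:P-model}, with the roles of costs and prizes interchanged. The construction described just above the proposition replaces each non-root vertex $v$ with $p(v)=\ell$ by a directed path of $\ell$ new unit-prize vertices connected by zero-cost edges; the edge entering $v$ from its parent retains its original cost, and any children of $v$ now hang off the last vertex of the new path. Contracting each subdivision path back to a single vertex recovers $T$, so $T$ is a rooted minor of $T''$, settling~(i); and since no original edge is altered, we obtain $c''(e)=c(e)$ for $e\in E(T)$ and $c''(e)=0$ on the new subdivision edges, settling~(ii).

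For the forward inequality, I would fix a budget $B\in\prats$ and an optimal attack $\tau\subseteq T$ with $\pr(\tau,c,p)=\maxp(B,c,p)$. Let $\tau''\subseteq T''$ be the rooted subtree obtained by replacing every $v\in V(\tau)$ by the entire subdivision path that represents it. The only new edges used are the zero-cost subdivision edges, so $\cost(\tau'',c'',\mathbf{1})=\cost(\tau,c,p)\le B$; and each $v\in V(\tau)$ contributes exactly $p(v)$ unit-prize vertices to $\tau''$, giving $\pr(\tau'',c'',\mathbf{1})=\sum_{v\in V(\tau)}p(v)=\pr(\tau,c,p)$. Hence $\maxp(B,c,p)\le\maxp(B,c'',\mathbf{1})$.

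For the reverse inequality, I take an optimal attack $\tau''$ on $(T'',c'',\mathbf{1})$. Here lies the main subtlety: I must first argue that without loss of generality $\tau''$ is \emph{saturated}, meaning that whenever it contains a vertex of some subdivision path it contains every vertex of that path. If not, I can append the remaining vertices along the unused zero-cost edges, which adds nothing to the cost and strictly increases the prize, contradicting optimality. With saturation in hand, contracting $\tau''$ along the subdivision edges yields a rooted subtree $\tau\subseteq T$ with $\cost(\tau,c,p)=\cost(\tau'',c'',\mathbf{1})\le B$ and $\pr(\tau,c,p)=\pr(\tau'',c'',\mathbf{1})$, whence $\maxp(B,c'',\mathbf{1})\le\maxp(B,c,p)$. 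Combined with the forward inequality this gives the required equivalence. The saturation step is the only place where this dual proof departs from a literal symbol-for-symbol translation of the P-model argument, and it is driven by the fact that in a C-model the unit prizes now sit at the heads of the newly introduced free edges rather than along the interiors of subdivided edges.
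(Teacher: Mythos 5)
Your proof is correct and follows essentially the same route as the paper's: the same subdivision construction (spreading each prize $p(v)$ over a path of unit-prize vertices joined by zero-cost edges), the same forward inequality via the saturated lift $\tau''$ of $\tau$, and the same reverse inequality using the observation that an optimal attack in $T''$ must be saturated because appending a zero-cost subdivision edge strictly increases the prize. The "saturation" step you single out is precisely the paper's remark that every leaf of an optimal $\tau''$ must be an original vertex of $T$, so there is no substantive divergence.
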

\begin{proof}
(Sketch) Suppose we are given a budget $B\in\prats$ and an optimal
attack $\tau$ on a SS $(T,c,p)$ with 
$\pr(\tau,c, p) = \maxp(B,c,p)$.  
Here $(T'',c'',\mathbf{1})$ has an equivalent attack
$\tau''$, where $\tau''$ is the largest subtree of $T''$ that
contains all of the edges of $\tau$ and no other edges of $T$\@.
Note that $\cost(\tau'',c'', \mathbf{1}) = \cost(\tau,c, p)$ since all of
the additional edges of $\tau''$ that are not in $V(\tau)$ have zero
penetration cost, and so $\tau''$ is within the budget $B$\@.  Also,
by construction we have 
$\pr(\tau'',c'', \mathbf{1}) = \pr(\tau,c, p) = \maxp(B,c,p)$.  
This result shows that $\maxp(B,c,p)\leq \maxp(B,c'',\mathbf{1})$.

Conversely, consider an optimal attack $\tau''$ on $(T'',c'',\mathbf{1})$
with $\pr(\tau'',c'', \mathbf{1}) = \maxp(B,c'',\mathbf{1})$. 
By the optimality of
$\tau''$, every leaf of $\tau''$ is a tail of an edge of $T$, since
otherwise we can append that edge (of zero penetration cost), and
thereby obtain an attack with a prize strictly more than 
$\pr(\tau'',c'',\mathbf{1})$, a contradiction. 
The edges $E(\tau'')\cap E(T)$ induce
a subtree $\tau$ of $T$ of the same cost 
$\cost(\tau,c, p) = \cost(\tau'',c'', \mathbf{1})$; 
and moreover, $\tau''$ is, by its
optimality, the largest subtree of $T''$ that contains exactly all of the edges
of $\tau$, and so 
$\pr(\tau,c, p) = \pr(\tau'',c'', \mathbf{1}) = \maxp(B,c'',\mathbf{1})$. 
This result shows that 
$\maxp(B,c'',\mathbf{1})\leq \maxp(B,c,p)$. 
This proves that the SS $(T,c,p)$ and $(T'',c'',\mathbf{1})$
are equivalent.
\end{proof}

We now present some examples of both P- and C-models that will play
a pivotal role in our discussion to come.
\begin{definition}
\label{def:T2T3}
Let $T(2)$ denote the rooted tree given as follows:
\begin{eqnarray*}
V(T(2)) & = & \{r,u_1,u_2,u_3,u_4,u_5\}, \\
E(T(2)) & = & \{ (r,u_1), (r,u_2), (u_1,u_3), (u_2,u_4), (u_2,u_5)\}.
\end{eqnarray*}
Note that $T(2)$ has all of its non-root vertices on two non-zero levels.
Similarly, let $T(3)$ denote the rooted tree given as follows:
\begin{eqnarray*}
V(T(3)) & = & \{r,u_1,u_2,u_3,u_4\}, \\
E(T(3)) & = & \{ (r,u_1), (r,u_2), (u_2,u_3), (u_3,u_4)\}.
\end{eqnarray*}
Note that $T(3)$ has all of its vertices on three non-zero levels. 
\end{definition}

{\sc Convention:} For convenience we label the edges of both $T(2)$
and $T(3)$ with the same index as their heads:
\begin{eqnarray*}
T(2) & :  & e_1 = (r,u_1), \ e_2 = (r,u_2), \ e_3 = (u_1,u_3), \
  e_4 = (u_2,u_4), \ e_5 = (u_2,u_5). \\
T(3) & :  & e_1 = (r,u_1), \ e_2 = (r,u_2), \ e_3 = (u_1,u_3), \
  e_4 = (u_3,u_4).
\end{eqnarray*}

\begin{center}
\begin{tikzpicture}[scale=.7]

\node[draw, thick, shape=circle,scale=1.3] (r) at (-6,6) {$r$};
\node[draw, shape=circle,scale=1] (u1) at (-8,4) {$u_1$};
\node[draw, shape=circle,scale=1] (u2) at (-4,4) {$u_2$};
\node[draw, shape=circle,scale=1] (u3) at (-10,2) {$u_3$};
\node[draw, shape=circle,scale=1] (u4) at (-6,2) {$u_4$};
\node[draw, shape=circle,scale=1] (u5) at (-2,2) {$u_5$};

\foreach \from/\to in {r/u1,r/u2,u1/u3,u2/u4,u2/u5}
\path[draw, thick] (\from) -- (\to);

\node[left] at (-7,5) {$e_1$};
\node[right] at (-5,5) {$e_2$};
\node[left] at (-9,3) {$e_3$};
\node[left] at (-5,3) {$e_4$};
\node[right] at (-3,3) {$e_5$};

\node at (-6,0) {$T(2)$};

\node[draw, thick, shape=circle,scale=1.3] (r) at (6,6) {$r$};
\node[draw, shape=circle,scale=1] (u1) at (4,4) {$u_1$};
\node[draw, shape=circle,scale=1] (u2) at (8,4) {$u_2$};
\node[draw, shape=circle,scale=1] (u3) at (10,2) {$u_3$};
\node[draw, shape=circle,scale=1] (u4) at (12,0) {$u_4$};

\foreach \from/\to in {r/u1,r/u2,u2/u3,u3/u4}
\path[draw, thick] (\from) -- (\to);

\node[left] at (5,5) {$e_1$};
\node[right] at (7,5) {$e_2$};
\node[right] at (9,3) {$e_3$};
\node[right] at (11,1) {$e_4$};

\node at (6,0) {$T(3)$};

\end{tikzpicture}
\end{center}

\begin{example}
\label{exa:P-T(2)}
\end{example}  
\vspace{-2 mm}
\noindent Consider a P-model (with $c=\mathbf{1}$) 
on the rooted tree $T(2)$, where
the prize values are given by $P = \{0,1,2,2,3\}$.

{\em Prize Assignment (A):} 
Consider the case where the prizes have
been simultaneously assigned to the non-root vertices of $T(2)$ by
$p(u_1,u_2,u_3,u_4,u_5) := (0,1,3,2,2)$ in the obvious way.  
We will use a similar shorthand
notation later for the bijection $c$.  In this case we see that for
budgets of $B = 2,3$, we have $\maxp(2,\mathbf{1},p) = 3$ and 
$\maxp(3,\mathbf{1},p) = 5$, respectively.

{\em Prize Assignment (B):}
Consider now the case where the prizes have been simultaneously
assigned to the non-root vertices of $T(2)$ by
$p'(u_1,u_2,u_3,u_4,u_5) := (1,0,3,2,2)$. In this case we see that for
the same budgets of $B = 2,3$ as in (A), we have $\maxp(2,\mathbf{1},p') = 4$
and $\maxp(3,\mathbf{1},p') = 4$, respectively.

From these assignments we see that for budget $B = 2$, the SS in (A)
is better than the one in (B), and for $B = 3$, the SS in (B) is better
than the one in (A).

\vspace{3 mm}

\begin{example}
\label{exa:P-T(3)}
\end{example}
\vspace{-2 mm} 
\noindent Consider a P-model
on the rooted tree $T(3)$, where
the prize values are given by $P = \{0,0,1,1\}$.

{\em Prize Assignment (A):}
Consider the case where the prizes have been simultaneously assigned
to the non-root vertices of $T(3)$ by 
$p(u_1,u_2,u_3,u_4) := (0,0,1,1)$. 
In this case we see that for budgets of $B = 1,3$, we have
$\maxp(1,\mathbf{1},p) = 0$ and $\maxp(3,\mathbf{1},p) = 2$, respectively.

{\em Prize Assignment (B):}
Consider now the case where the prizes have been simultaneously
assigned to the non-root vertices of $T(3)$ by 
$p'(u_1,u_2,u_3,u_4) := (1,0,0,1)$. 
In this case we see that for the same budgets of $B = 1,3$
as in (A), we have $\maxp(1,\mathbf{1},p') = 1$ and 
$\maxp(3,\mathbf{1},p') = 1$, respectively.

From these assignments we see that for budget $B = 1$, the SS in (A)
is better than the one in (B), and for $B = 3$, the SS in (B) is better
than the one in (A). From these examples we have the following
observation.
\begin{observation}
\label{obs:P-T(2,3)}
For general prize values $P$, neither of the P-models 
$M = (T(2),I,P)$ nor $M = (T(3),I,P)$ have optimal SSs.
\end{observation}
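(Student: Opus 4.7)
The plan is to derive the observation directly from Examples~\ref{exa:P-T(2)} and~\ref{exa:P-T(3)}. Each example already exhibits two SSs whose max-prize functions $B\mapsto\maxp(B,\mathbf{1},\cdot)$ cross, so a hypothetical optimal SS would have to simultaneously dominate both. By Definition~\ref{def:optimaldefensivestrategy} it therefore suffices, for each of the two trees, to identify two critical budgets at which the pointwise minima of $\maxp(B,\mathbf{1},\cdot)$ over all prize assignments are attained only by disjoint families of assignments, so that no single SS can be optimal at both budgets.

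For $T(2)$ with $P=\{0,1,2,2,3\}$ I would take the critical budgets $B=2$ and $B=3$. Example~\ref{exa:P-T(2)}(A) achieves $\maxp(2,\mathbf{1},p)=3$ and Example~\ref{exa:P-T(2)}(B) achieves $\maxp(3,\mathbf{1},p')=4$, so any optimal $p^\star$ would need both $\maxp(2,\mathbf{1},p^\star)\le 3$ and $\maxp(3,\mathbf{1},p^\star)\le 4$. Applying the first bound to the size-two attack $\{u_1,u_3\}$ yields $p^\star(u_1)+p^\star(u_3)\le 3$, and combining this with the fixed total $\sum_{i=1}^{5}p^\star(u_i)=8$ forces
\[
p^\star(u_2)+p^\star(u_4)+p^\star(u_5) \ge 5.
\]
Since $\{u_2,u_4,u_5\}$ is a size-three attack in $T(2)$, this gives $\maxp(3,\mathbf{1},p^\star)\ge 5$, contradicting the second bound. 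For $T(3)$ with $P=\{0,0,1,1\}$ the critical budgets are $B=1$ and $B=3$; the bound $\maxp(1,\mathbf{1},p^\star)\le 0$ forces $p^\star(u_1)=p^\star(u_2)=0$ and hence $\{p^\star(u_3),p^\star(u_4)\}=\{1,1\}$, so the attack $\{u_2,u_3,u_4\}$ of cost three gives $\maxp(3,\mathbf{1},p^\star)\ge 2$, contradicting the value $1$ attained by Example~\ref{exa:P-T(3)}(B).

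The argument above is a short linear calculation in each case, so the main obstacle is really conceptual: one has to notice that an upper bound on $\maxp$ at the smaller critical budget, applied to a single carefully chosen attack, forces via the fixed total $\sum p^\star(u_i)$ a lower bound on the prize mass lying in a complementary branch of the tree, and that branch becomes fully accessible at a slightly larger budget. Once this observation is in hand, the non-existence of an optimal SS follows immediately for both P-models from Definition~\ref{def:optimaldefensivestrategy}.
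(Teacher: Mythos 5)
Your proof is correct, and it uses the same two examples and the same pairs of critical budgets that the paper relies on. The difference is that the paper simply asserts the observation ``from these examples'': exhibiting assignments (A) and (B) whose max-prize functions cross only shows that neither (A) nor (B) is itself optimal, and the paper leaves implicit the remaining step that no \emph{third} assignment can match the better of the two at both budgets simultaneously. You supply exactly that missing step, and your conservation argument is a clean way to do it: the bound $\maxp(2,\mathbf{1},p^\star)\le 3$ applied to the attack through $u_1,u_3$ pins at least $5$ units of prize onto the branch $\{u_2,u_4,u_5\}$, which is exactly a budget-$3$ attack, forcing $\maxp(3,\mathbf{1},p^\star)\ge 5>4$; the $T(3)$ case is handled analogously. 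This is more rigorous than what the paper prints (the alternative being a small exhaustive check over the finitely many assignments), and it also isolates the structural reason the obstruction arises, namely that an upper bound at a small budget pushes prize mass into a complementary branch that is fully reachable at a slightly larger budget. One pedantic caveat: your opening framing about ``disjoint families of assignments attaining the pointwise minima'' is not what you actually prove (nor what is needed); what you prove is the stronger and simpler fact that the two upper bounds inherited from (A) and (B) are jointly infeasible, which is the right statement.
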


We will now consider the dual cases of the C-models.
\begin{example}
\label{exa:C-T(2)}
\end{example}
\vspace{-2 mm} 
\noindent Consider a C-model (with $p=\mathbf{1}$)
on the rooted tree $T(2)$, where
the penetration costs are given by $C = \{0,1,1,2,3\}$.

{\em Cost Assignment (A):}
Consider the case where the penetration costs have been simultaneously
assigned to the edges of $T(2)$ by $c(e_1,e_2,e_3,e_4,e_5) :=
(3,2,0,1,1)$. In this case we see that for budgets of $B = 2,4$, we
have $\maxp(2,c,\mathbf{1}) = 1$ and $\maxp(4,c,\mathbf{1}) = 3$, respectively.

{\em Cost Assignment (B):}
Consider now the case where the penetration costs have been
simultaneously assigned to the edges of $T(2)$ by
$c'(e_1,e_2,e_3,e_4,e_5) := (2,3,0,1,1)$. In this case we see that for
the same budgets of $B = 2,4$ as in (A), we have $\maxp(2,c',\mathbf{1}) = 2$
and $\maxp(4,c',\mathbf{1}) = 2$, respectively.

From these assignments we see that for budget $B = 2$, the SS in (A)
is better than the one in (B), and for $B = 4$, the SS in (B) is better
than the one in (A).

\vspace{3 mm}

\begin{example}
\label{exa:C-T(3)}
\end{example}
\vspace{-2 mm} 
\noindent Consider now a C-model 
on the rooted tree $T(3)$, where
the penetration costs are given by $C = \{0,0,1,1\}$.

{\em Cost Assignment (A):}
Consider the case where the penetration costs have been simultaneously
assigned to the edges of $T(3)$ by $c(e_1,e_2,e_3,e_4) :=
(1,1,0,0)$. In this case we see that for budgets of $B = 0,1$, we have
$\maxp(0,c,\mathbf{1}) = 0$ and $\maxp(1,c,\mathbf{1}) = 3$, respectively.

{\em Cost Assignment (B):}
Consider now the case where the penetration costs have been assigned
to the edges of $T(3)$ by $c'(e_1,e_2,e_3,e_4) := (0,1,1,0)$. In this
case we see that for the same budgets of $B = 0,1$ as in (A), we have
$\maxp(0,c',\mathbf{1}) = 1$ and $\maxp(1,c',\mathbf{1}) = 2$, respectively.

From these assignments we see that for budget $B = 0$, the SS in (A)
is better than the one in (B), and for $B = 1$, the SS in (B) is better
than the one in (A). 

\vspace{3 mm}

From these examples we conclude the following.
\begin{observation}
\label{obs:C-T(2,3)}
For general penetration costs $C$, neither of the C-models 
$M = (T(2),C,I)$ nor $M = (T(3),C,I)$ have optimal SSs.
\end{observation}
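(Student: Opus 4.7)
The observation is an immediate corollary of Examples~\ref{exa:C-T(2)} and~\ref{exa:C-T(3)}, and my plan is to translate each example into the statement that \emph{every} candidate SS for the corresponding cost multiset is beaten at some budget. The key reduction tool in both cases is Lemma~\ref{lem:level1}, which cuts the candidate pool down to a manageable finite list.

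For $T(3)$ with $C = \{0,0,1,1\}$, Lemma~\ref{lem:level1} forces the two largest costs to be placed on the level-one edges $e_1$ and $e_2$, so any candidate optimal SS must have $c(e_1)=c(e_2)=1$ and $c(e_3)=c(e_4)=0$. This is the unique candidate and coincides with configuration (A) of Example~\ref{exa:C-T(3)}. Since configuration (B) of that example achieves $\maxp(1,c',\mathbf{1}) = 2 < 3 = \maxp(1,c,\mathbf{1})$, assignment (A) is strictly beaten at $B=1$ and is therefore not optimal; as it is the only candidate, no optimal SS exists.

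For $T(2)$ with $C = \{0,1,1,2,3\}$, Lemma~\ref{lem:level1} similarly forces $\{c(e_1),c(e_2)\}=\{2,3\}$, leaving the multiset $\{0,1,1\}$ to be assigned to $\{e_3,e_4,e_5\}$. Using the $u_4$-$u_5$ leaf-symmetry at the vertex $u_2$, the pool of essentially distinct candidate SSs reduces to exactly four. My plan is to tabulate $\maxp(B,c,\mathbf{1})$ at the budgets $B=2$ and $B=4$ for each of these four configurations, show that the minima across the pool are $1$ at $B=2$ and $2$ at $B=4$, and observe that these minima are attained by \emph{different} configurations, namely (A) and (B) of Example~\ref{exa:C-T(2)}, respectively. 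Consequently, an optimal SS would need to realize $\maxp(2,\cdot,\mathbf{1})=1$ and $\maxp(4,\cdot,\mathbf{1})=2$ simultaneously, and the tabulation shows that no single configuration does so. In particular, (A) fails the $B=4$ inequality and (B) fails the $B=2$ inequality, while the two remaining candidates (obtained by placing the zero cost on $e_4$ rather than on $e_3$) can be checked to fail the $B=4$ bound as well.

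The main obstacle is the finite but not entirely trivial case analysis for $T(2)$: once Lemma~\ref{lem:level1} and the $u_4$-$u_5$ symmetry have reduced the pool to four candidates, one must carry out the routine computation of $\maxp(2,\cdot,\mathbf{1})$ and $\maxp(4,\cdot,\mathbf{1})$ for each and verify that no configuration is simultaneously optimal at both budgets. For $T(3)$ the corresponding analysis is trivial, since Lemma~\ref{lem:level1} by itself isolates a single candidate.
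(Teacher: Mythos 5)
Your proposal is correct, and it is in fact more complete than the paper's own treatment. The paper derives Observation~\ref{obs:C-T(2,3)} directly from Examples~\ref{exa:C-T(2)} and~\ref{exa:C-T(3)}: it exhibits the two configurations (A) and (B), notes that each strictly beats the other at some budget, and stops there --- the verification that no \emph{third} configuration could simultaneously match both budget-wise minima is left implicit. Your argument supplies exactly that missing step. Lemma~\ref{lem:level1} is legitimately applicable (a C-model is a CSM with every prize equal to $1$, so the $p(v_s)>0$ step in its proof goes through, and there are no ties between $C_L$ and its complement here), and it pins $\{2,3\}$ (resp.\ $\{1,1\}$) onto the level-one edges; the $u_4$--$u_5$ symmetry then cuts the $T(2)$ pool to four candidates, and your tabulation at $B=2$ and $B=4$ (resp.\ $B=1$ for $T(3)$, where the candidate is unique) shows every candidate is strictly beaten somewhere. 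I verified the four $T(2)$ candidates: the values are $(\maxp(2,\cdot,\mathbf{1}),\maxp(4,\cdot,\mathbf{1})) = (1,3)$ for (A), $(2,2)$ for (B), and $(2,3)$ and $(1,3)$ for the two configurations with the zero cost below $u_2$, so indeed the minimum $1$ at $B=2$ and the minimum $2$ at $B=4$ are never achieved by the same configuration. In short, the route is the same in spirit --- the same two witness configurations at the same witness budgets --- but your version converts the paper's ``observation by example'' into an actual exhaustive non-existence proof, at the cost of a small finite case analysis.
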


{\sc Remarks:} (i) Note that in Examples~\ref{exa:P-T(2)} 
and~\ref{exa:C-T(2)} involving the rooted tree $T(2)$, we 
have that the prize assignments to the non-root vertices and 
cost assignments to the corresponding edges sum 
up to a constant vector for both assignments {\em (A)} and {\em (B)}:
\begin{eqnarray*}
(A) & : & p(u_1,u_2,u_3,u_4,u_5) + c(e_1,e_2,e_3,e_4,e_5)
= (0,1,3,2,2) + (3,2,0,1,1) = (3,3,3,3,3), \\
(B) & : & p'(u_1,u_2,u_3,u_4,u_5) + c'(e_1,e_2,e_3,e_4,e_5) 
= (1,0,3,2,2) + (2,3,0,1,1) = (3,3,3,3,3),
\end{eqnarray*}
and similarly for the rooted tree $T(3)$:
\begin{eqnarray*}
(A) & : & p(u_1,u_2,u_3,u_4) + c(e_1,e_2,e_3,e_4)
= (0,0,1,1) + (1,1,0,0) = (1,1,1,1), \\
(B) & : & p'(u_1,u_2,u_3,u_4) + c'(e_1,e_2,e_3,e_4) 
= (1,0,0,1) + (0,1,1,0) = (1,1,1,1).
\end{eqnarray*}
This duality is not a coincidence and will discussed in more detail
in section~\ref{sec:duality}. (ii) Although special cases of
Theorems~\ref{thm:3-cat-SS},~\ref{thm:4-spid-SS},~\ref{thm:3-cat-CSS}
and~\ref{thm:4-spid-CSS},
it is an easy combinatorial exercise to see that both a 
C- or P-model $M = (T,C,P)$, where $T$ is a proper rooted 
subtree of either
$T(2)$ or $T(3)$ does indeed have an optimal SS, and so $T(2)$ and
$T(3)$ are the smallest rooted trees, in either model, with no optimal
SS. This point will also be discussed and stated explicitly in 
sections~\ref{sec:optimal-P-trees} and~\ref{sec:duality}.

\vspace{3 mm}

Consider now a given rooted tree $T$ and another rooted tree $T^{\dagger}$
containing $T$ as a rooted subtree, so $T\subseteq T^{\dagger}$.  Assume that
the P-model $M = (T,I,P)$ has no optimal SS\@.  Extend $M$ to a P-model
on $T^{\dagger}$ by adding a zero prize for each vertex in 
$V(T^{\dagger})\setminus V(T)$, so $P^{\dagger} = P\cup Z$, 
where $Z$ is the multiset consisting of
$|V(T^{\dagger})| - |V(T)|$ copies of $0$. In this case we have the following.
\begin{observation}
\label{obs:P-subtree}
If $M = (T,I,P)$ is a P-model with no optimal SS, and $T^{\dagger}$ contains $T$
as a rooted subtree, then the P-model 
$M^{\dagger} = (T^{\dagger},I,P^{\dagger})$ has no optimal SS.
\end{observation}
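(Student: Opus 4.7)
The plan is to argue by contrapositive: assume $M^{\dagger}=(T^{\dagger},I,P^{\dagger})$ admits an optimal SS $(T^{\dagger},\mathbf{1},q^{\dagger})$, and construct from it an optimal SS of $M$, contradicting the hypothesis. The first ingredient is the zero-extension correspondence. Given any SS $(T,\mathbf{1},p)$ of $M$, define $p^{\dagger}$ on $V(T^{\dagger})\setminus\{r\}$ by $p^{\dagger}(v)=p(v)$ for $v\in V(T)\setminus\{r\}$ and $p^{\dagger}(v)=0$ for $v\in V(T^{\dagger})\setminus V(T)$. Since $P^{\dagger}=P\cup Z$, this is a bijection onto $P^{\dagger}$ and hence a legal SS of $M^{\dagger}$. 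Following the two-direction argument used in the proof of Proposition~\ref{prp:P-model}, I would check that $\maxp(B,\mathbf{1},p^{\dagger})=\maxp(B,\mathbf{1},p)$ for every $B\in\prats$: every SA of $T$ is an SA of $T^{\dagger}$ with the same cost and prize, and conversely every SA $\tau^{\dagger}$ of $T^{\dagger}$ restricts via $V(\tau^{\dagger})\cap V(T)$ (which is ancestor-closed in $T$) to an SA $\tau$ of $T$ satisfying $\cost(\tau,\mathbf{1},p)\leq\cost(\tau^{\dagger},\mathbf{1},p^{\dagger})$ and $\pr(\tau,\mathbf{1},p)=\pr(\tau^{\dagger},\mathbf{1},p^{\dagger})$, because the extra vertices of $T^{\dagger}$ carry prize $0$.

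Next, I would argue that without loss of generality $q^{\dagger}$ may be taken to coincide with some zero-extension $p^{\dagger}$, i.e., that $q^{\dagger}$ places every added zero from $Z$ on $V(T^{\dagger})\setminus V(T)$. The strategy is: among all optimal SSs of $M^{\dagger}$, pick one that maximizes the number of positive prizes from $P$ sitting on vertices of $V(T)\setminus\{r\}$. If the chosen $q^{\dagger}$ still assigns a positive value to some $u\in V(T^{\dagger})\setminus V(T)$ while a zero from $Z$ sits on some $v\in V(T)\setminus\{r\}$, then swapping these two values produces a new SS; the key claim is that an appropriately chosen such swap does not strictly increase $\maxp(B,\mathbf{1},\cdot)$ at any budget $B$, contradicting the maximality assumed in the choice of $q^{\dagger}$. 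The swap argument should exploit the fact that every edge of $T^{\dagger}$ has unit cost in a P-model and that vertices in $V(T^{\dagger})\setminus V(T)$ can only be reached by an attack that has first traversed a path through $V(T)$.

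Finally, with $q^{\dagger}=p^{\dagger}$ in hand, the restriction $p$ is an SS of $M$, and for any other SS $p'$ of $M$ one has the chain
\[
\maxp(B,\mathbf{1},p)=\maxp(B,\mathbf{1},q^{\dagger})\leq\maxp(B,\mathbf{1},p'^{\dagger})=\maxp(B,\mathbf{1},p')
\]
for every $B$, combining the zero-extension equality from the first step with the optimality of $q^{\dagger}$ in $M^{\dagger}$. Thus $p$ is an optimal SS for $M$, the desired contradiction. The main obstacle will be the swap step in the middle paragraph: the naive exchange of a positive value at $u\in V(T^{\dagger})\setminus V(T)$ with a zero at $v\in V(T)\setminus\{r\}$ need not be weakly improving, since moving a positive value from deep in $T^{\dagger}$ onto a shallow vertex $v$ that lies on a cheap attack can strictly raise $\maxp$ at small budgets. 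The pair $(u,v)$ must therefore be chosen with care, perhaps requiring that $v$ be an ancestor of $u$ in $T^{\dagger}$ or a sibling in a specific position, and the argument should invoke the unit-cost property together with the zero-only nature of $Z$ to rule out the problematic configurations.
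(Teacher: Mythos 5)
Your first paragraph reproduces the paper's own argument: the published proof (itself only a sketch) establishes exactly the zero-extension correspondence $\maxp(B,\mathbf{1},p^{\dagger})=\maxp(B,\mathbf{1},p)$ by the same two-way restriction/extension of attacks, and your closing chain of inequalities correctly shows that an optimal SS of $M^{\dagger}$ \emph{which happens to be a zero-extension} would pull back to an optimal SS of $M$. Up to that point you are on the paper's track, and you deserve credit for noticing what the paper glosses over: an optimal SS of $M^{\dagger}$ need not a priori be a zero-extension, so the contrapositive is not finished at that stage.

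The genuine gap is your middle step, and your own caveat understates how serious it is. You want: among optimal SSs of $M^{\dagger}$, one maximizing the number of positive prizes on $V(T)\setminus\{r\}$ admits, if it is not yet a zero-extension, a swap of a positive value at some $u\in V(T^{\dagger})\setminus V(T)$ with a zero at some $v\in V(T)\setminus\{r\}$ that is weakly improving at every budget. No such swap need exist on structural grounds alone. Take $T=T(3)$ with $P=\{0,0,1,1\}$ and let $T^{\dagger}$ append a long path below the deepest leaf of $T(3)$; the assignment $q$ placing both unit prizes on the two deepest new vertices has $\maxp(B,\mathbf{1},q)=0$ for every budget $B$ smaller than the depth of those vertices, whereas any assignment with a positive prize somewhere on $V(T)$ already has $\maxp(3,\mathbf{1},\cdot)\geq 1$. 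Hence \emph{every} swap of a positive prize onto a vertex of $V(T)$ --- including onto an ancestor of $u$, your suggested refinement --- strictly increases $\maxp$ at some budget, and no zero-extension weakly dominates this $q$. The observation survives only because such a $q$ is itself non-optimal (it is beaten at large budgets by, e.g., moving one unit prize to the level-one leaf of $T(3)$), so the key claim can only be rescued by exploiting the optimality of $q^{\dagger}$ against \emph{other non-zero-extension} competitors, not merely the unit costs and the zeros of $Z$. Your proposal gives no mechanism for doing this, so as written the argument does not close; you have correctly located the hard point of the proof but not filled it.
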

\begin{proof}
(Sketch) For any budget consisting of $B = m$ edges and a
SS $(T, \mathbf{1}, p)$, there is a rooted subtree $\tau$ of $T$ with $m$ edges
such that $\pr(\tau,\mathbf{1},p) = \maxp(m,\mathbf{1},p)$. 
Let $\mathbf{1}$ and $p^{\dagger}$ 
be the obvious extensions of $\mathbf{1}$ and $p$ to $T^{\dagger}$, 
by letting $\mathbf{1}(e) =1$
for all $e\in E(T^{\dagger})$ and $p^{\dagger}(u) = 0$ for any $u\in
V(T^{\dagger})\setminus V(T)$. If $\tau'$ is a rooted subtree of 
$T^{\dagger}$ with
$m$ edges, then $\tau'\cap T$ is a rooted subtree of both $T$ and
$T^{\dagger}$ on $m$ or fewer edges. Since any vertex of 
$V(\tau')\setminus V(T)$ has zero prize, we have
\[
\pr(\tau',\mathbf{1},p^{\dagger}) 
= \pr(\tau'\cap T,\mathbf{1},p^{\dagger}) 
= \pr(\tau'\cap T,\mathbf{1},p) \leq 
\maxp(m,\mathbf{1},p),
\]
with equality for $\tau' = \tau$ since 
$\tau\subseteq T\subseteq T^{\dagger}$.  Hence, 
$\maxp(m,\mathbf{1},p^{\dagger}) = \maxp(m,\mathbf{1},p)$, 
and we conclude that
if $M = (T,I,P)$ has no optimal SS, then neither does 
$M^{\dagger} = (T^{\dagger},I,P^{\dagger})$.
\end{proof}

Dually, assume that we have a C-model $M = (T,C,I)$ that has no optimal
SS, and similarly, let $T^{\dagger}$ be a rooted subtree containing $T$ as a
rooted subtree. Extend $M$ to a C-model on $T^{\dagger}$ by adding penetration
costs of $\infty$\footnote{Where here we can choose $\infty$ to be the
number of edges of $T$ plus one, that is, a large number exceeding any
sensible attack budget.}  for each edge of $T^{\dagger}$ that is not in $T$,
so $C^{\dagger} = C\cup Y$, where $Y$ is the multiset consisting of 
$|E(T^{\dagger})| - |E(T)|$ copies of $\infty$. 
\begin{observation}
\label{obs:C-subtree}
If $M = (T,C,I)$ is a C-model with no optimal SS, and $T^{\dagger}$ contains $T$
as a rooted subtree, then the C-model $M^{\dagger} = (T^{\dagger},C^{\dagger},I)$ 
has no optimal SS.
\end{observation}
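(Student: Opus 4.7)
The plan is to carry out the construction dual to Observation~\ref{obs:P-subtree}, replacing prize extensions by cost extensions. Given any SS $(T, c, \mathbf{1})$ of $M$, define its natural extension $c^{\dagger}$ on $T^{\dagger}$ by $c^{\dagger}(e) = c(e)$ for $e \in E(T)$ and $c^{\dagger}(e) = \infty$ for $e \in E(T^{\dagger}) \setminus E(T)$, where $\infty$ is chosen, as in the footnote preceding this observation, to exceed any sensible attack budget. The prize function $\mathbf{1}$ assigns $1$ to every non-root vertex of $T^{\dagger}$, so $(T^{\dagger}, c^{\dagger}, \mathbf{1})$ is an SS of $M^{\dagger}$.

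The first step is to verify that $\maxp(B, c^{\dagger}, \mathbf{1}) = \maxp(B, c, \mathbf{1})$ for every finite budget $B$. The direction $\geq$ is immediate since any rooted subtree $\tau \subseteq T$ is equally a rooted subtree of $T^{\dagger}$ with the same cost and the same prize (each non-root vertex contributing $1$). For $\leq$, any rooted subtree $\tau^{\dagger} \subseteq T^{\dagger}$ with $\cost(\tau^{\dagger}, c^{\dagger}, \mathbf{1}) \leq B < \infty$ cannot traverse an $\infty$-edge, so $E(\tau^{\dagger}) \subseteq E(T)$ and hence $\tau^{\dagger} \subseteq T$; this gives $\pr(\tau^{\dagger}, c^{\dagger}, \mathbf{1}) \leq \maxp(B, c, \mathbf{1})$. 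With this matching identity in place, the argument proceeds in parallel with Observation~\ref{obs:P-subtree}: since $M$ has no optimal SS, for every SS $c$ of $M$ one finds another SS $c'$ and a budget $B$ where $\maxp(B, c', \mathbf{1}) < \maxp(B, c, \mathbf{1})$; taking extensions yields the same strict inequality in $M^{\dagger}$, ruling out optimality for every extension SS of $M^{\dagger}$.

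The hard part will be dealing with SSs of $M^{\dagger}$ that are not extensions, i.e., those in which one or more of the $\infty$-costs has landed on an edge $e_T \in E(T)$ while a finite cost from $C$ sits on an edge $e_E \in E(T^{\dagger}) \setminus E(T)$. The plan is a swap argument showing that moving each such $\infty$ onto an outer edge (with the finite cost sliding back into $E(T)$) weakly decreases $\maxp$ at every budget; the key leverage is that every vertex carries unit prize, so any unit prize lost by severing the subtree hanging off $e_E$ in $T^{\dagger} \setminus T$ is compensated by unit prizes gained through newly-unblocked access inside $T$ across $e_T$. Making this transfer rigorous requires a careful case split on whether an optimal attack under each of the two SSs traverses the swapped edges, and is the main obstacle; once in place, iterating the swap reduces any non-extension SS to an extension with pointwise no-larger $\maxp$, after which the extension case handled above completes the contradiction.
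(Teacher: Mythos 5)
Your first two paragraphs reproduce the paper's own (sketched) argument: the extension $c^{\dagger}$ of any SS $(T,c,\mathbf{1})$ of $M$ satisfies $\maxp(B,c^{\dagger},\mathbf{1})=\maxp(B,c,\mathbf{1})$ for every budget below the $\infty$ threshold, since no attack within such a budget can traverse an $\infty$-edge. That part is correct and is essentially all that the paper's proof of Observation~\ref{obs:C-subtree} establishes; it shows that no \emph{extension} SS of $M^{\dagger}$ can be optimal. You are also right to flag that a complete proof must say something about SSs of $M^{\dagger}$ in which an $\infty$ lands on an edge of $T$; the paper's sketch is silent on this point too.

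The problem is the lemma you propose for that remaining case: it is false. Moving an $\infty$ from an edge $e_T\in E(T)$ out to an edge $e_E\in E(T^{\dagger})\setminus E(T)$ unblocks the entire subtree of $T$ below $e_T$ while blocking only the part of $T^{\dagger}\setminus T$ below $e_E$; since every vertex carries unit prize, this trade can strictly \emph{increase} $\maxp$, i.e., it helps the attacker, not the defender. Concretely, let $T$ be the rooted path $r\to u_1\to u_2$ with $C=\{1,1\}$, and let $T^{\dagger}$ add a single leaf $v$ with edge $(r,v)$. The non-extension assignment $c((r,u_1))=\infty$, $c((u_1,u_2))=1$, $c((r,v))=1$ has $\maxp(B,c,\mathbf{1})\leq 1$ for every budget below $\infty$, whereas after your swap (the extension with $c((r,u_1))=c((u_1,u_2))=1$ and $\infty$ on $(r,v)$) one gets $\maxp(2,c,\mathbf{1})=2$. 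So the swap does not weakly decrease $\maxp$ at every budget, and iterating it cannot reduce an arbitrary SS of $M^{\dagger}$ to a dominating extension. This also shows that the statement your strategy ultimately needs --- that for every SS of $M^{\dagger}$ some extension SS is pointwise at least as good --- fails in general, so the case analysis you defer as ``the main obstacle'' is not merely unfinished: it is aimed at proving something that is not true, and a correct completion would have to take a different route (for instance, producing from an arbitrary SS of $M^{\dagger}$ some SS of $M$, not obtained by your outward swaps, whose $\maxp$ is pointwise no larger).
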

\begin{proof}
(Sketch) The proof is similar to the one for
Observation~\ref{obs:P-subtree}.  For any budget $B\in\prats$
and a SS $(T, c, \mathbf{1})$ of $M$, there is a rooted subtree
$\tau$ of $T$ with $m$ edges such that 
$\pr(\tau,c,\mathbf{1}) = \maxp(B,c,\mathbf{1})$.  
Let $c^{\dagger}$ and $\mathbf{1}$ be the obvious 
extensions of $c$
and $\mathbf{1}$ to $T^{\dagger}$, by letting $c^{\dagger}(e) =\infty$
for all $e\in E(T^{\dagger})\setminus E(T)$.
If $\tau'$ is a rooted subtree of $T^{\dagger}$ within the
attacker's budget of $B < \infty$, then every edge of $\tau'$ must
be in $T$, and so $\tau'\subseteq T\subseteq T^{\dagger}$.  Since 
$c^{\dagger}$ agrees with $c$ on the edges of $T$ we have
\[
\pr(\tau',c^{\dagger},\mathbf{1}) = \pr(\tau',c,\mathbf{1}) 
\leq \maxp(B,c,\mathbf{1}),
\]
with equality for $\tau' = \tau$.  Hence, 
$\maxp(B,c^{\dagger},\mathbf{1}) = \maxp(B,c,\mathbf{1})$, and we conclude that 
if $M = (T,C,I)$ has no SS, then
neither does $M^{\dagger} = (T^{\dagger}, C^{\dagger},I)$.
\end{proof}
By Observations~\ref{obs:P-T(2,3)},~\ref{obs:C-T(2,3)},~\ref{obs:P-subtree},
and~\ref{obs:C-subtree} we have the following corollary.
\begin{corollary}
\label{cor:forbidden-subtrees}
If $T$ is a rooted tree such that any P- or C-model 
$M = (T,C,P)$ has an optimal SS, then $T$ contains neither $T(2)$ nor
$T(3)$ as rooted subtrees.
\end{corollary}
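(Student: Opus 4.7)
The plan is to prove the contrapositive: if $T$ contains either $T(2)$ or $T(3)$ as a rooted subtree, then there exists either a P-model or a C-model $M = (T,C,P)$ that has no optimal SS. The four observations cited just before the corollary have already done essentially all of the work, so the task reduces to combining them in the right order.

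First, suppose $T$ contains $T(2)$ as a rooted subtree. By Observation~\ref{obs:P-T(2,3)}, there exists a multiset $P_0$ of prizes (for instance, the one constructed in Example~\ref{exa:P-T(2)}) such that the P-model $M_0 = (T(2),I,P_0)$ has no optimal SS. I would then extend $P_0$ to a multiset $P = P_0 \cup Z$, where $Z$ consists of $|V(T)|-|V(T(2))|$ copies of $0$, so as to form a P-model $M = (T,I,P)$ on the larger tree. Applying Observation~\ref{obs:P-subtree} to $T(2) \subseteq T$ then yields that $M$ also has no optimal SS, contradicting the hypothesis on $T$. The case where $T$ contains $T(3)$ as a rooted subtree is handled identically, using the prize values from Example~\ref{exa:P-T(3)}.

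Dually, if one prefers to witness the failure via a C-model, the symmetric argument works: use Observation~\ref{obs:C-T(2,3)} to obtain a cost multiset $C_0$ on $T(2)$ or $T(3)$ admitting no optimal SS, extend the penetration costs by assigning $\infty$ to every edge of $T$ not in $T(2)$ (respectively $T(3)$) to form $C = C_0 \cup Y$, and invoke Observation~\ref{obs:C-subtree} to conclude that the C-model $M = (T,C,I)$ has no optimal SS. Either direction alone suffices for the corollary, but recording both makes the statement's universal quantification over P- \emph{and} C-models transparent.

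There is no real obstacle here; the only subtlety worth flagging is being careful about the identification of the subtree $T(2)$ or $T(3)$ inside $T$ so that the extensions in Observations~\ref{obs:P-subtree} and~\ref{obs:C-subtree} apply verbatim (in particular, the root of the embedded copy must coincide with the root of $T$, which is the correct notion of rooted subtree used throughout the excerpt). Once that identification is fixed, the proof is just a two-line invocation of the preceding observations.
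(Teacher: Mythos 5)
Your proposal is correct and matches the paper's own (implicit) argument exactly: the paper derives this corollary directly from Observations~\ref{obs:P-T(2,3)}, \ref{obs:C-T(2,3)}, \ref{obs:P-subtree}, and~\ref{obs:C-subtree} via precisely the contrapositive extension-by-zero-prizes (resp.\ by-$\infty$-costs) that you describe. Your remark about the embedded copy's root coinciding with the root of $T$ is the right reading of ``rooted subtree'' here, so nothing is missing.
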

Let $T$ be a rooted tree such that any CSM $M = (T,C,P)$ has an
optimal SS\@. Assume further that $T$ is not a path rooted at one of
its two leaves.  If $T$ has at least three non-zero levels (we
consider the root $r$ to be the unique level-$0$ vertex), then $T$
must contain $T(3)$ as a rooted subtree and hence, by
Corollary~\ref{cor:forbidden-subtrees}, there is a CSM $M = (T,C,P)$
with no optimal SS, contradicting our assumption on
$T$\@. Consequently, $T$ has at most two non-zero levels.

If $T$ has at most two non-zero levels, and it has two leaves of
distance four apart (with the root $r$ being midways between them),
then neither parent of the leaves is of degree three or more, because
then $T$ has $T(2)$ as a rooted subtree. And, so again, by
Corollary~\ref{cor:forbidden-subtrees}, there is a CSM $M = (T,C,P)$
with no optimal SS\@.  This observation again contradicts our
assumption on $T$\@. As a result, either (i) $T$ has a diameter of three
and is obtained by attaching an arbitrary number of leaves to the end
vertices of a single edge and then rooting it at one of the
end-vertices of the edge, or (ii) $T$ has diameter of four and each
level-one vertex has degree at most two.

Recall that a {\em caterpillar tree\/} is a tree where each vertex is
within distance one of a central path, and that a {\em spider tree\/}
is a tree with one vertex of degree at least three and all other
vertices of degree at most two.
\begin{definition}
\label{def:caterp-spider}
\begin{itemize}
A {\em rooted path\/} is a path rooted at one of its two leaves.

A {\em rooted star\/} is a star rooted at its unique center vertex.

A {\em 3-caterpillar\/} is a caterpillar tree of diameter three.

A {\em rooted 3-caterpillar\/} is a 3-caterpillar rooted at one
of its two center vertices. 

A {\em 4-spider\/} is a spider tree of diameter four with its unique center
vertex of degree at least three.

A {\em rooted 4-spider\/} is a 4-spider rooted at its unique center vertex. 
\end{itemize}
\end{definition}
By Corollary~\ref{cor:forbidden-subtrees} and the discussion just
before Definition~\ref{def:caterp-spider}, we therefore have the
following main theorem of this section.
\begin{theorem}
\label{thm:SS-trees}
If $T$ is a rooted tree such that any P- or C-model 
$M = (T,C,P)$ has an optimal SS, then $T$ is one of the following types:
(i) a rooted path, 
(ii) a rooted star, 
(iii) a rooted 3-caterpillar, or 
(iv) a rooted 4-spider.
\end{theorem}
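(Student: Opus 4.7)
The plan is to proceed by contrapositive: I will show that if $T$ is not one of the four listed types, then $T$ contains either $T(2)$ or $T(3)$ as a rooted subtree, whence by Corollary~\ref{cor:forbidden-subtrees} some P- or C-model on $T$ fails to admit an optimal SS. So assume $T$ is not a rooted path, and first argue that $T$ has at most two non-zero levels. If, on the contrary, $T$ has a vertex at level three, fix a root-to-level-three path $r\to v_1\to v_2\to v_3$. Because $T$ is not a rooted path, the root $r$ must have at least two children, so there is a child $w\neq v_1$ of $r$. The five vertices $\{r,v_1,v_2,v_3,w\}$ together with the four edges $\{(r,v_1),(v_1,v_2),(v_2,v_3),(r,w)\}$ form a rooted subtree isomorphic to $T(3)$, contradicting Corollary~\ref{cor:forbidden-subtrees}.

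With $T$ now of depth at most two, I would case-split on the diameter of $T$, which must be at most four. If $\mathrm{diam}(T)\leq 2$, every non-root vertex is at level one and $T$ is a rooted star. If $\mathrm{diam}(T)=3$, then $T$ is automatically a caterpillar (every diameter-three tree is), and the two central vertices of the diameter-path sit at levels zero and one, so $T$ is a 3-caterpillar rooted at one of them. If $\mathrm{diam}(T)=4$, then $r$ sits midway between two level-two leaves and hence has at least two children each carrying a grandchild; the absence of a rooted $T(2)$ subtree then forces every level-one vertex carrying a grandchild to have degree exactly two, so $T$ is a rooted 4-spider.

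The main technical obstacle is the first step, namely inferring from ``$T$ is not a rooted path'' that the root $r$ must have at least two children whenever $T$ has depth at least three. This is immediate when $T$'s underlying graph is not a path, or when it is a path but rooted at an internal vertex; the subtle sub-case is when the branching of $T$ lies strictly below the root, so that $r$ has only one child. In that sub-case one has to descend to the first branching vertex and exhibit a forbidden rooted subtree (either $T(3)$ or $T(2)$) relative to the fixed root $r$, not merely an unrooted subgraph, which requires careful selection of the substructure's edges along the path from $r$ down to the branching vertex.
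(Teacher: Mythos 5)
Your overall strategy is the same as the paper's: argue the contrapositive through Corollary~\ref{cor:forbidden-subtrees}, first bounding the depth by excluding a rooted $T(3)$ and then analyzing the diameter by excluding a rooted $T(2)$. However, the difficulty you flag in your last paragraph is not a deferrable technicality --- it is a genuine gap, and the repair you propose cannot work. If the root $r$ has exactly one child, then \emph{every} rooted subtree of $T$ (in the sense used in Observations~\ref{obs:P-subtree} and~\ref{obs:C-subtree}, namely a subtree containing $r$ and rooted at $r$) has a root of degree at most one, whereas the roots of both $T(2)$ and $T(3)$ have degree two. So no amount of ``descending to the first branching vertex'' can exhibit a forbidden rooted subtree; Corollary~\ref{cor:forbidden-subtrees} gives no information whatsoever about such trees. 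Concretely, let $T$ have $V(T)=\{r,v_1,v_2,v_3,v_4\}$ and $E(T)=\{(r,v_1),(v_1,v_2),(v_2,v_3),(v_2,v_4)\}$. This $T$ is none of the four listed types (it is a $3$-caterpillar rooted at a \emph{leaf} rather than at a center vertex) and contains neither $T(2)$ nor $T(3)$ as a rooted subtree. Worse, for this $T$ every P-model and every C-model \emph{does} admit an optimal SS: its rooted subtrees are totally ordered by inclusion except at the last level, so for prizes $q_1\leq q_2\leq q_3\leq q_4$ one checks that every assignment $p'$ satisfies $\maxp(1,\mathbf{1},p')\geq q_1$, $\maxp(2,\mathbf{1},p')= p'(v_1)+p'(v_2)\geq q_1+q_2$, and $\maxp(3,\mathbf{1},p')=\sum_i q_i-\min(p'(v_3),p'(v_4))\geq q_1+q_2+q_4$, and all bounds are attained simultaneously by $p(v_i)=q_i$; the dual computation handles C-models. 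Hence this sub-case cannot be closed by any argument, and your case analysis does not go through.

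You should also know that the paper's own proof makes exactly the leap you were worried about: it asserts that a non-path tree with at least three non-zero levels ``must contain $T(3)$ as a rooted subtree,'' which fails whenever the root has degree one. So you have correctly located the weak point, but misjudged its severity: the example above indicates that Theorem~\ref{thm:SS-trees} (and with it Theorem~\ref{thm:C-P-main}) requires either a strengthened containment notion in Corollary~\ref{cor:forbidden-subtrees} or an enlarged list of admissible tree types covering roots of degree one, not a local patch inside the depth/diameter case split.
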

It remains to be seen whether or not a rooted 3-caterpillar or a rooted
4-spider $T$ is such that any P- or C-model $M = (T,C,P)$ has an
optimal SS\@.  This item will be the main topic of the next two
sections.

\section{P-models with Optimal Security Systems}
\label{sec:optimal-P-trees} 

In this section we prove that if $T$ is one of the four types of
rooted trees mentioned in Theorem~\ref{thm:SS-trees}, then any P-model 
$M = (T,I,P)$ indeed has an optimal SS. The C-models will be discussed
in section~\ref{sec:duality}. We already have that any P-model 
$M = (T,I,P)$ (in fact, any CSM $M = (T,C,P)$), where $T$ is a 
rooted path or a rooted star, does have an optimal SS, so it
suffices to consider rooted 3-caterpillars and rooted 4-spiders.

Let $T$ be a rooted 3-caterpillar on vertices $\{r,u_1,\ldots,u_n\}$
with edges given by
\begin{equation}
\label{eqn:3-cat-label}
E(T) = \{(r,u_1), \ldots, (r,u_k), (u_1,u_{k+1}),\ldots, (u_1,u_n)\},
\end{equation}
where $2\leq k\leq n-1$. As before, we label the edges by the index
of their heads, so $e_i = (r,u_i)$ for $i\in\{1,\ldots,k\}$
and $e_i = (u_1,u_i)$ for $i\in\{k+1,\ldots,n\}$. Our first 
result is the following.
\begin{theorem}
\label{thm:3-cat-SS}
Let $M = (T,I,P)$ be a P-model where $T$ is a rooted 3-caterpillar and
$P = \{p_1,\ldots,p_n\}$ is a multiset of possible prizes indexed
increasingly $p_1\leq p_2\leq\cdots\leq p_n$. Then the SS $(T,\mathbf{1},p)$,
where $p(u_i) = p_i$ for each $i\in \{1,\ldots,n\}$ is an optimal SS
for $M$.
\end{theorem}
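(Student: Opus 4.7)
The plan is to prove optimality of $(T,\mathbf{1},p)$ via a sorting/exchange argument. Given any other security system $(T,\mathbf{1},p')$, I intend to exhibit a finite sequence of prize-swaps transforming $p'$ into a bijection $p^{*}$ equivalent to $p$, such that the function $B\mapsto\maxp(B,\mathbf{1},\cdot)$ is non-increasing along the sequence for every budget $B\in\prats$. This will immediately give $\maxp(B,\mathbf{1},p)=\maxp(B,\mathbf{1},p^{*})\leq\maxp(B,\mathbf{1},p')$.

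The allowed swap types (producing $p''$ from $p'$) are: Type~1, exchange $p'(u_1)$ with $p'(u_i)$ for some $i\in\{2,\ldots,k\}$ whenever $p'(u_1)>p'(u_i)$; Type~2, exchange $p'(u_1)$ with $p'(u_j)$ for some $j\in\{k+1,\ldots,n\}$ whenever $p'(u_1)>p'(u_j)$; Type~3, exchange $p'(u_i)$ with $p'(u_j)$ for some $i\in\{2,\ldots,k\}$ and $j\in\{k+1,\ldots,n\}$ whenever $p'(u_i)>p'(u_j)$. For each type I would verify $\maxp(B,\mathbf{1},p'')\leq\maxp(B,\mathbf{1},p')$ for every $B$ by an exchange argument on attacks: given any valid attack $\tau''$ in $(T,\mathbf{1},p'')$ of cost at most $B$, construct a valid attack $\tau'$ in $(T,\mathbf{1},p')$ of cost at most $B$ with $\pr(\tau',\mathbf{1},p')\geq\pr(\tau'',\mathbf{1},p'')$. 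If both or neither of the two exchanged vertices lie in $\tau''$, then $\tau'=\tau''$ has the same prize since the sum of the two contributions is preserved. If only the vertex receiving the smaller prize under $p''$ lies in $\tau''$, then $\tau'=\tau''$ already does better in $p'$ because that contribution grows from the smaller to the larger value. The remaining case is when only the vertex receiving the larger prize under $p''$ lies in $\tau''$; then I set $\tau'$ equal to $\tau''$ with the two vertices exchanged.

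The hardest step will be verifying that this final vertex exchange produces a valid subtree $\tau'$ of the same cost as $\tau''$, and this is where the 3-caterpillar shape is essential. For a Type~1 swap, if $u_1\notin\tau''$ then $\tau''$ contains no grandchild, so $\tau'=(\tau''\setminus\{u_i\})\cup\{u_1\}$ is still a subtree rooted at $r$ of unchanged cost. For a Type~3 swap, the presence of the grandchild $u_j$ in $\tau''$ forces $u_1\in\tau''$, so $\tau'=(\tau''\setminus\{u_j\})\cup\{u_i\}$ is a valid subtree and its cost equals that of $\tau''$. For a Type~2 swap, the only problematic case $u_j\in\tau''$, $u_1\notin\tau''$ simply cannot occur because $u_j$ is a child of $u_1$. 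In each admissible case the contributions from the two exchanged vertices sum to the same quantity on both sides and the remaining vertices of the attack agree, yielding $\pr(\tau',\mathbf{1},p')\geq\pr(\tau'',\mathbf{1},p'')$.

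Termination of the swap process is controlled by the potential $\Phi(p'):=2\,p'(u_1)+\sum_{i=2}^{k}p'(u_i)$, which strictly decreases under each of the three swap types, namely by $\beta-\alpha$ for Type~1, by $2(\alpha-\beta)$ for Type~2, and by $\alpha-\beta$ for Type~3, where $\beta>\alpha$ denote the two exchanged values. Since the set of bijections $V(T)\setminus\{r\}\to P$ is finite, the process terminates at some $p^{*}$ admitting no further swap. The absence of any swap forces $p^{*}(u_1)\leq p^{*}(v)$ for every non-root $v$, and $p^{*}(u_i)\leq p^{*}(u_j)$ for every $i\in\{2,\ldots,k\}$, $j\in\{k+1,\ldots,n\}$; consequently $\{p^{*}(u_1),\ldots,p^{*}(u_k)\}=\{p_1,\ldots,p_k\}$ with $p^{*}(u_1)=p_1$, and $\{p^{*}(u_{k+1}),\ldots,p^{*}(u_n)\}=\{p_{k+1},\ldots,p_n\}$ as multisets. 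Since arbitrary permutations within $\{u_2,\ldots,u_k\}$ and within $\{u_{k+1},\ldots,u_n\}$ are automorphisms of $T$ fixing $r$ and $u_1$ and therefore preserve $\maxp$, we conclude $\maxp(B,\mathbf{1},p^{*})=\maxp(B,\mathbf{1},p)$ for every $B$, which completes the argument.
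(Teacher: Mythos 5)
Your proposal is correct, but it proves the theorem by a different route than the paper does. The paper's proof is a direct, global argument: for each integer budget $m$ it splits into two cases according to whether the edge $e_1=(r,u_1)$ belongs to an optimal attack under the canonical assignment, computes $\maxp(m,\mathbf{1},p)$ explicitly in each case ($p_n+\cdots+p_{n-m+2}+p_1$, respectively $p_k+\cdots+p_{k-m+1}$), and then exhibits, for an arbitrary assignment $p'$, an attack achieving at least that value. Your proof instead is a local exchange/sorting argument: three types of prize swaps, each shown (by reassigning the attack when only the vertex that gained the larger prize is attacked) to be weakly improving for every budget, with termination guaranteed by the potential $\Phi$ and the terminal configuration identified with the canonical one up to root-fixing automorphisms. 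Interestingly, this is essentially the strategy the paper itself adopts for the harder 4-spider case (Lemma~\ref{lmm:swapi1jleaf}, Corollary~\ref{cor:max-parts}, Lemma~\ref{lmm:xy-ordered}), so your argument unifies the two cases, and it additionally yields a monotone chain of non-worsening security systems from any assignment to the optimal one; the paper's approach, on the other hand, buys an explicit closed form for $\maxp(m,\mathbf{1},p)$ at the optimum. All the delicate validity checks in your case (c) exchanges are handled correctly (in particular that a Type~2 conflict cannot occur since $u_j$ is a child of $u_1$, and that the vertices removed are leaves of $T$ while the vertices added are children of $r$). The only blemishes are sign slips in the stated decrements of $\Phi$ for Types~2 and~3, which should read $2(\beta-\alpha)$ and $\beta-\alpha$; these do not affect the argument.
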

\begin{proof} 
Let $B = m\in\{0,1,\ldots,n\}$ be the attacker's budget, that is
the number of edges an adversary can afford to penetrate. We want
to show that $\maxp(m,\mathbf{1},p)\leq \maxp(m,\mathbf{1},p')$ 
for any prize assignment
$p'$ to the vertices of the rooted 3-caterpillar $T$.

Let $\tau\subseteq T$ be a rooted subtree of $T$ on $m$ edges
with $\pr(\tau,\mathbf{1},p) = \maxp(m,\mathbf{1},p)$. 
There are two cases we need to consider.

{\sc First case:} $e_1 \in E(\tau)$. Since all the leaves are
connected to one of the end-vertices of $e_1 = (r,u_1)$, the remaining
$m-1$ edges of $\tau$ must be incident to the $m-1$ maximum prize
vertices, and so 
$\maxp(m,\mathbf{1},p) = \pr(\tau,\mathbf{1},p) = p_n + p_{n-1} + \cdots +
p_{n-m+2} + p_1$.  If $p'$ is another prize assignment to the vertices
of $T$, then $p'(u_1) = p_c$, where $c\in \{1,\ldots,n\}$. Therefore,
$\maxp(m,\mathbf{1},p') \geq \pr(\tau',\mathbf{1},p')$, where 
$\tau'$ is a rooted subtree of
$T$ that contains $e_1$ and contains all the remaining $m-1$ maximum
prizes, and so
\[
\pr(\tau',\mathbf{1},p') = \left\{ 
\begin{array}{ll}
p_n + p_{n-1} + \cdots + p_{n-m+1} & \mbox{ if } c\in \{n-m+1,\ldots,n\}, \\
p_n + p_{n-1} + \cdots + p_{n-m+2} + p_c
  & \mbox{ if } c\not\in \{n-m+1,\ldots,n\}. 
\end{array}
\right.
\]
In either case we have 
$\pr(\tau',\mathbf{1},p') \geq p_n + p_{n-1} + 
\cdots p_{n-m+2} + p_1 = \maxp(m,\mathbf{1},p)$, 
and so $\maxp(m,\mathbf{1},p') \geq \maxp(m,\mathbf{1},p)$ in this case.

{\sc Second case:} $e_1\not\in E(\tau)$. For this case to be
possible we must have $m\leq k-1$, since otherwise $e_1$ must be
in $\tau$. Secondly, we must have that $\tau$ contains all the maximum
prize vertices on level one and so 
$\maxp(m,\mathbf{1},p) = \pr(\tau,\mathbf{1},p) 
= p_k + p_{k-1} + \cdots + p_{k-m+1}$.
In particular, we must have 
\[
p_k + p_{k-1} + \cdots + p_{k-m+1} \geq p_n + p_{n-1} + \cdots + p_{n-m+2} + p_1,
\]
since a tree containing $e_1$ does not have a greater total prize than
$\tau$.  If $p'$ is another prize assignment to the vertices of $T$,
then let $\{{\ell}_1,\ldots,{\ell}_k\}$ be the indices of the prizes
assigned to vertices on level one by $p'$, that is,
$\{p_{{\ell}_1},\ldots,p_{{\ell}_k}\} = \{p'(u_1),\ldots, p'(u_k)\}$
as multisets. If now $\tau'$ is the rooted subtree of $T$ with $m$ edges
containing the $m$ vertices with the largest prizes, then, since
$p_{{\ell}_i}\geq p_i$ for each $i\in\{1,\ldots,k\}$, we have
\[
\maxp(m,\mathbf{1},p')\geq \pr(\tau',\mathbf{1},p') 
= p_{{\ell}_k} + p_{{\ell}_{k-1}} + \cdots + p_{{\ell}_{k-m+1}} 
\geq p_k + p_{k-1} + \cdots + p_{k-m+1} = \maxp(m,\mathbf{1},p),
\]
in this case as well. This completes the proof that
the SS $(T,p)$ is optimal.
\end{proof}

Now, let $T$ be a rooted 4-spider on vertices $\{r,u_1,\ldots,u_n\}$
with edges given by
\begin{equation}
\label{eqn:4-spider-label}
E(T) = \{(r,u_1),\ldots, (r,u_k), 
(u_1,u_{k+1}),(u_2,u_{k+2}),\ldots, (u_{n-k} ,u_n)\},
\end{equation}
where $n/2 \leq k\leq n-2$. As before, the edges are labeled by the
index of their heads: $e_i = (r,u_i)$ for $i\in \{1,\ldots,k\}$ and
$e_i = (u_{i-k},u_i)$ for $i\in\{k+1,\ldots,n\}$.  Our second result
is the following.
\begin{theorem}
\label{thm:4-spid-SS}
Let $M = (T,I,P)$ be a P-model, where $T$ is a rooted 4-spider and 
$P = \{p_1,\ldots,p_n\}$ is a multiset of possible prizes indexed
increasingly $p_1\leq p_2\leq\cdots\leq p_n$. Then the SS $(T,\mathbf{1},p)$,
where $p(u_i) = p_i$ for $i\in \{1,\ldots,k\}$ and 
$p(u_i) = p_{n+k+1-i}$ for $i\in\{k+1,\ldots,n\}$ is an optimal SS for $M$.
\end{theorem}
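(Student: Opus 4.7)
The plan is to follow the same template as the proof of Theorem~\ref{thm:3-cat-SS} for 3-caterpillars, but with the added complication that the attacker now has a meaningful choice of how many \emph{level-2 dives} to make. Fix a budget $B = m \in \{0, 1, \ldots, n\}$ and consider an optimal attack $\tau \subseteq T$ under the assignment $p$, with $|E(\tau)| = m$. Any attack on the 4-spider is characterized by the pair $(a, b)$, where $a$ is the number of level-1 vertices included and $b$ is the number of level-2 vertices included; these satisfy $a + b = m$, $0 \leq b \leq \min(a, n-k)$, and each chosen level-2 vertex's parent lies among the chosen level-1 vertices. My first step is to determine, for each valid pair $(a, b)$, the maximum prize of an attack under $p$. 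Because of the way $p$ orders the prizes on the level-1 and level-2 vertices, this maximum is realized by including the level-2 vertices $u_{k+1}, \ldots, u_{k+b}$ (which carry the $b$ largest prizes $p_n, \ldots, p_{n-b+1}$), their forced parents $u_1, \ldots, u_b$ (which carry the $b$ smallest prizes $p_1, \ldots, p_b$), and the $a-b$ additional level-1 vertices with the largest remaining prizes, namely $u_k, u_{k-1}, \ldots, u_{k-(a-b)+1}$. Maximizing over $b$ yields some $b^* = b^*(m)$ and the total prize
\[
F(m) := \maxp(m, \mathbf{1}, p) = \sum_{i=n-b^*+1}^{n} p_i \; + \; \sum_{i=1}^{b^*} p_i \; + \; \sum_{i=k-m+2b^*+1}^{k} p_i.
\]

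Next, for an arbitrary prize assignment $p'$, I would exhibit an attack $\tau'$ on $T$ of cost $m$ with $\pr(\tau', \mathbf{1}, p') \geq F(m)$. The construction mirrors the structure of $\tau$: include $b^*$ ``complete pair-attacks'' and $m - 2b^*$ additional level-1 vertices. Specifically, select $b^*$ level-1 vertices (among the $n-k$ that have a child) whose children carry the largest $p'$-prizes, include both these parents and their children in $\tau'$, and then adjoin the $m - 2b^*$ largest-$p'$-prize level-1 vertices from those not already selected. Showing $\pr(\tau', \mathbf{1}, p') \geq F(m)$ then becomes a rearrangement-inequality style comparison between partial sums of the prizes assigned by $p'$ and by $p$.

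The heart of the matter is that under $p$ the anti-correlated pairing between small and large prizes forces the attacker, whenever she reaches for the large level-2 prizes, to also pay to uncover the smallest level-1 prizes; under any other assignment $p'$ this anti-correlation is weakened. I would split into two subcases according to how $p'$ distributes the $b^*$ largest prizes of $P$. In case (i), where at least $b^*$ of the top prizes of $P$ are placed on level-2 vertices, the top $b^*$ level-2 prizes chosen by $\tau'$ already sum to at least $\sum_{i=n-b^*+1}^n p_i$, and a direct comparison of the remaining terms handles the inequality. In case (ii), where $p'$ moves some of the top prizes onto level 1, the level-1 contribution collected by $\tau'$ exceeds that in $F(m)$ by enough to compensate for the loss on level 2; here the attacker may even choose a different $b \neq b^*$ under $p'$ to beat $F(m)$, which is allowed since she is optimizing over all valid attacks.

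The main obstacle will be handling the mixed case in which $p'$ distributes the top prizes across both levels so that neither (i) nor (ii) alone clearly dominates. In that setting one must argue that the attacker's freedom to choose the optimal number of dives $b$ under $p'$, possibly different from $b^*$, always yields an attack of prize at least $F(m)$. I anticipate that the cleanest way to verify this is an interchange-of-summation argument comparing, term-by-term, the ordered sequence of prizes collected by $\tau'$ to those collected by $\tau$, showing that each partial sum of the former dominates the corresponding partial sum of the latter; this will rely critically on the sorted ordering $p_1 \leq p_2 \leq \cdots \leq p_n$ and on the structural constraint $b \leq \min(a, n-k)$ that bounds how many level-2 vertices can be reached.
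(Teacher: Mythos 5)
Your strategy --- compute $\maxp(m,\mathbf{1},p)$ explicitly for the canonical assignment and then beat that value under an arbitrary $p'$ --- differs fundamentally from the paper's, but it founders already at the first step: your formula $F(m)$ is not equal to $\maxp(m,\mathbf{1},p)$. The claim that, for fixed $(a,b)$, the best attack dives at positions $1,\ldots,b$ (collecting the children $u_{k+1},\ldots,u_{k+b}$ and their forced parents $u_1,\ldots,u_b$) is false: a dive at position $j$ collects $p_j + p_{n+1-j}$, and it can pay to dive where the \emph{parent} prize is large even though the child prize is smaller. Concretely, take $n=6$, $k=3$ and $P=\{0,0,3,3,3,4\}$. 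The canonical $p$ gives $p(u_1,\ldots,u_6)=(0,0,3,4,3,3)$, and for $m=2$ the attack $\{u_3,u_6\}$ has prize $6$, whereas your restricted family yields $F(2)=\max(p_2+p_3,\;p_1+p_6)=4$. Since your plan is to exhibit, for every $p'$, an attack of prize at least $F(m)$, and $F(m)$ can be strictly smaller than $\maxp(m,\mathbf{1},p)$, the argument as designed cannot establish optimality even if every remaining step were carried out. On top of this, the comparison $\maxp(m,\mathbf{1},p')\geq F(m)$ --- which you yourself identify as the heart of the matter --- is only sketched; the ``mixed case'' you flag is exactly where a term-by-term partial-sum comparison becomes delicate, because the attacker's optimal number of dives under $p'$ interacts with which of the top prizes sit on level one versus level two.

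The paper avoids both difficulties by never computing the optimal value at all. It shows (Lemma~\ref{lmm:swapi1jleaf}, then Corollary~\ref{cor:max-parts}, then Lemma~\ref{lmm:xy-ordered}) that an arbitrary assignment $p'$ can be transformed into the canonical $p$ by a sequence of single transpositions, each of which is proven not to increase $\maxp(m,\mathbf{1},\cdot)$ for any $m$ via a short case analysis on whether the two swapped vertices lie in a max-prize subtree $\tau$ (replacing a leaf of $\tau$ by another available vertex when needed). If you want to salvage a direct approach, you would first need the correct structure of optimal attacks under $p$ (cf.\ Observation~\ref{obs:opt-4-spider}, which the paper establishes \emph{after} and independently of the theorem), and that structure is already more subtle than your $(a,b)$ parametrization suggests.
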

Before we prove Theorem~\ref{thm:4-spid-SS}, we need a few
lemmas that will come in handy for the proof.
\begin{lemma}
\label{lmm:swapi1jleaf}
Let $T$ be a 4-spider presented as in (\ref{eqn:4-spider-label}) and
$m\in\nats$.  Let $p$ be a prize assignment on $V(T)$ such that $p_i =
p(u_i) \leq p(u_j) = p_j$, where $u_i$ is on level one and $u_j$ is a
leaf of $T$\@.  If $p'$ is the prize assignment obtained from $p$ by
swapping the prizes of $u_i$ and $u_j$, then 
$\maxp(m,\mathbf{1},p)\leq \maxp(m,\mathbf{1},p')$.
\end{lemma}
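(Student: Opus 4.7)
The plan is to fix an optimal attack $\tau\subseteq T$ with $|E(\tau)|=m$ and $\pr(\tau,\mathbf{1},p)=\maxp(m,\mathbf{1},p)$, and to exhibit from $\tau$ a rooted subtree $\tau^*\subseteq T$ with at most $m$ edges such that $\pr(\tau^*,\mathbf{1},p')\geq \pr(\tau,\mathbf{1},p)$; the desired inequality then follows from the definition of $\maxp$. We may assume $p_i<p_j$, since otherwise $p=p'$ and there is nothing to prove.

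I would proceed by case analysis on how $\{u_i,u_j\}$ meets $V(\tau)$. In the three easy cases $\tau^*:=\tau$ already works. Indeed, if both $u_i,u_j\in V(\tau)$ or neither is in $V(\tau)$, then $\pr(\tau,\mathbf{1},p')=\pr(\tau,\mathbf{1},p)$ because swapping the prizes of two vertices that are both inside or both outside $\tau$ does not affect the sum. If $u_i\in V(\tau)$ and $u_j\notin V(\tau)$, then $\pr(\tau,\mathbf{1},p')=\pr(\tau,\mathbf{1},p)-p_i+p_j\geq \pr(\tau,\mathbf{1},p)$.

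The only substantive case is $u_i\notin V(\tau)$ and $u_j\in V(\tau)$. Here I would use the 4-spider structure: the leaf $u_j$ has a unique parent $v=u_{j-k}$, which is a level-one vertex, and necessarily $v\in V(\tau)$ with $(v,u_j)\in E(\tau)$. The case hypothesis forces $v\neq u_i$, so the level-one vertex $u_i$ is not yet attached in $\tau$. Define $\tau^*$ by removing $u_j$ together with the edge $(v,u_j)$ and then adjoining $u_i$ together with the edge $(r,u_i)$. This is a legitimate rooted subtree of $T$ since we removed a leaf and attached a new level-one leaf to the root that was already in $\tau$, and $|E(\tau^*)|=|E(\tau)|=m$. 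A direct computation gives
\[
\pr(\tau^*,\mathbf{1},p')=\pr(\tau,\mathbf{1},p)-p(u_j)+p'(u_i)=\pr(\tau,\mathbf{1},p)-p_j+p_j=\pr(\tau,\mathbf{1},p),
\]
so $\maxp(m,\mathbf{1},p')\geq \pr(\tau^*,\mathbf{1},p')=\maxp(m,\mathbf{1},p)$.

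The main obstacle, such as it is, lies entirely in this last case, and the key geometric observation that makes it routine is that in a rooted 4-spider each level-one vertex and each leaf contributes exactly one edge to any rooted subtree containing it. Hence swapping the roles of $u_i$ and $u_j$ inside the attack is edge-neutral, which is precisely what is needed to stay within the budget $m$ while reaping the larger prize $p_j$ at the now-level-one location $u_i$.
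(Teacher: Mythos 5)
Your proof is correct and follows essentially the same route as the paper's: the same three-way case split on how $\{u_i,u_j\}$ meets an optimal attack $\tau$, with the exchange $\tau^*=(\tau-u_j)\cup u_i$ in the one substantive case. One small caveat: a childless level-one vertex is also a leaf of $T$ (and the lemma is later applied in exactly that situation), so your formula $v=u_{j-k}$ for the parent of $u_j$ only covers level-two leaves; but the construction and the prize computation go through unchanged with $v=r$, so this is cosmetic rather than a gap.
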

\begin{proof}
If $j=k+i$, so $u_j$ is the unique child of $u_i$, then 
the lemma holds by (\ref{eqn:p'leqp}).
Hence, we can assume that $u_j$ is not a child of $u_i$.
Let $\tau\subseteq T$ be a max-prize rooted subtree on $m$
edges, so $\pr(\tau,\mathbf{1},p) = \maxp(m,\mathbf{1},p)$. 
We now consider the following cases.

If either both $u_i$ and $u_j$ are vertices of $\tau$, or neither of
them are, then clearly $\maxp(m,\mathbf{1},p) = 
\pr(\tau,\mathbf{1},p) = \pr(\tau,\mathbf{1},p')\leq \maxp(m,\mathbf{1},p')$.

If $u_i\in V(\tau)$ and $u_j\not\in V(\tau)$,
then 
\[
\maxp(m,\mathbf{1},p) = \pr(\tau,\mathbf{1},p) \leq 
\pr(\tau,\mathbf{1},p)-p_i+p_j = \pr(\tau,\mathbf{1},p')
\leq \maxp(m,\mathbf{1},p).
\]

If $u_i\not\in V(\tau)$ and $u_j\in V(\tau)$, then, since $u_i$ is on
level one and $u_j$ is a leaf of $\tau$, we have that $\tau' = (\tau -
u_j)\cup u_i$ is also a rooted subtree of $T$ on $m$ vertices and
$\maxp(m,\mathbf{1},p) = \pr(\tau,\mathbf{1},p) 
= \pr(\tau',\mathbf{1},p')\leq \maxp(m,\mathbf{1},p')$, which
completes our proof.
\end{proof}
Let $M = (T,I,P)$ be a P-model where $T$ is a rooted 4-spider, 
$P = \{p_1,\ldots,p_n\}$, and $p'$ be an arbitrary prize assignment on
$V(T)$.  Since every vertex of $T$ on level two is automatically a
leaf, we can, by repeated use of Lemma~\ref{lmm:swapi1jleaf}, obtain a
prize assignment with smaller max-prize with respect to any $m$ that
has its $n-k$ largest prizes on its level-two vertices, and hence has
its $k$ smallest prizes on the level-one vertices $u_1,\ldots, u_k$ of
$T$\@.  By further use of the same Lemma~\ref{lmm:swapi1jleaf} when
considering these level-one vertices of $T$, we can obtain a prize
assignment $p$ that has its smallest prizes on the non-leaf vertices
on level one and yet with smaller max-prize, so 
$\maxp(m,\mathbf{1},p)\leq \maxp(m,\mathbf{1},p')$ for any $m$. 
Note that our $p$ satisfies
\[
p(\{u_1,\ldots,u_{n-k}\}) = \{p_1,\ldots,p_{n-k}\}, \ \ 
p(\{u_{k+1},\ldots,u_n\}) = \{p_{k+1},\ldots,p_n\}.
\]
As the level-one vertices of $T$ can be assumed to be ordered by 
their prizes, we summarize in the following.
\begin{corollary}
\label{cor:max-parts}
From any prize assignment $p'$ we can by repeated use 
of Lemma~\ref{lmm:swapi1jleaf} obtain a prize assignment
$p$ on our 4-spider $T$, presented as in
(\ref{eqn:4-spider-label}), such that 
\[
p(u_i) = p_i \mbox{ for all }i\in\{1,\ldots,k\}, \mbox{ and }
p(u_i) = p_{\pi(i)} \mbox{ for all } i\in\{k+1,\ldots,n\},
\]
where $\pi$ is a permutation of $\{k+1,\ldots,n\}$,
and with $\maxp(m,\mathbf{1},p)\leq \maxp(m,\mathbf{1},p')$ 
for any $m\in\nats$.
\end{corollary}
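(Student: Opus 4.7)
The plan is to prove the corollary by two bubble-sort-like swapping phases driven by Lemma~\ref{lmm:swapi1jleaf}, followed by a symmetry-based relabeling. In the first phase, I would repeatedly locate a pair $(u_i, u_j)$ with $u_i$ any level-one vertex ($1 \leq i \leq k$) and $u_j$ a level-two leaf ($k+1 \leq j \leq n$) for which the current assignment satisfies $p(u_i) > p(u_j)$, and swap their prizes. Since every level-two vertex is a leaf of $T$, Lemma~\ref{lmm:swapi1jleaf} applies (with the roles of $p$ and $p'$ reversed) and shows the swap cannot increase $\maxp(m, \mathbf{1}, \cdot)$ for any $m$. Termination is ensured by the potential $\Phi_1 := \sum_{i=1}^{k} p(u_i)$, which strictly decreases at each such swap because a larger prize leaves level one while a smaller one takes its place. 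At termination no such pair exists, so every level-one prize is at most every level-two leaf prize, meaning the $n-k$ largest elements of $P$ now occupy the level-two leaves.

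In the second phase, starting from the output of Phase~1, I would swap pairs $(u_i, u_j)$ with $u_i$ a non-leaf level-one vertex ($1 \leq i \leq n-k$) and $u_j$ a level-one leaf ($n-k+1 \leq j \leq k$) whenever $p(u_i) > p(u_j)$. Since $u_j$ is a leaf of $T$, Lemma~\ref{lmm:swapi1jleaf} again shows $\maxp$ is non-increasing. The potential $\Phi_2 := \sum_{i=1}^{n-k} p(u_i)$ strictly decreases with each Phase~2 swap, so the phase terminates. Phase~2 only permutes prizes within level one, so it preserves the Phase~1 achievement that the $n-k$ largest prizes lie on level-two leaves. At termination, the multiset of prizes on $\{u_1, \ldots, u_{n-k}\}$ is $\{p_1, \ldots, p_{n-k}\}$ and the multiset on $\{u_{n-k+1}, \ldots, u_k\}$ is $\{p_{n-k+1}, \ldots, p_k\}$.

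Finally, the non-leaf level-one vertices $u_1, \ldots, u_{n-k}$ are structurally interchangeable, as each has exactly one level-two child; the level-one leaves $u_{n-k+1}, \ldots, u_k$ are likewise interchangeable. Any permutation within either group, extended to its level-two children where applicable, is a rooted-tree automorphism of $T$ and hence leaves $\maxp(m, \mathbf{1}, \cdot)$ invariant. Applying suitable automorphisms, I obtain $p(u_i) = p_i$ for every $i \in \{1, \ldots, k\}$, with the level-two leaves inheriting some permutation $\pi$ of $\{p_{k+1}, \ldots, p_n\}$ as asserted. The main subtlety, in my view, is not the swap analysis itself but ruling out cycling on ``horizontal'' swaps between symmetric vertices; splitting the procedure into two phases governed by the distinct monotone potentials $\Phi_1$ and $\Phi_2$ sidesteps this, and the final ordering within each symmetric group is free by automorphism.
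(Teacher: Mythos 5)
Your proposal is correct and follows essentially the same route as the paper: the paper's justification (the paragraph preceding the corollary) likewise applies Lemma~\ref{lmm:swapi1jleaf} in two stages---first moving the $n-k$ largest prizes to the level-two leaves, then pushing the smallest prizes onto the non-leaf level-one vertices---and finishes by reordering the level-one vertices by their prizes. Your explicit termination potentials $\Phi_1,\Phi_2$ and the automorphism argument for the final relabeling are just a more careful write-up of the same idea.
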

Our next lemma provides our final tool in proving
Theorem~\ref{thm:4-spid-SS}.
\begin{lemma}
\label{lmm:xy-ordered}
Let $T$ be a 4-spider presented as in (\ref{eqn:4-spider-label}) and
$m\in\nats$.  Let $p$ be a prize assignment on $V(T)$ such that for
some $i,j\in \{1,\ldots,n-k\}$ with $i<j$, we have $p(u_i) \leq
p(u_j)$ and $p(u_{i+k}) \geq p(u_{j+k})$.  If $p'$ is a prize
assignment where the prizes on $u_{i+k}$ and $u_{j+k}$ have been
swapped, then $\maxp(m,\mathbf{1},p)\leq \maxp(m,\mathbf{1},p')$.
\end{lemma}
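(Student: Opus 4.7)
The plan is a ``redirect the optimal attack'' argument. Fix an optimal attack $\tau \subseteq T$ with $\pr(\tau,\mathbf{1},p) = \maxp(m,\mathbf{1},p)$; it will suffice to construct a rooted subtree $\tau' \subseteq T$ on exactly $m$ edges with $\pr(\tau',\mathbf{1},p') \geq \pr(\tau,\mathbf{1},p)$, for then $\maxp(m,\mathbf{1},p') \geq \pr(\tau',\mathbf{1},p') \geq \maxp(m,\mathbf{1},p)$, which is the claim.

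The subtree $\tau'$ will agree with $\tau$ outside the ``double branch'' formed by the four vertices $u_i, u_j, u_{i+k}, u_{j+k}$ together with their four incident edges $e_i, e_j, e_{i+k}, e_{j+k}$. Because $u_{i+k}$ and $u_{j+k}$ are leaves of $T$ whose only access from the root passes through $u_i$ and $u_j$ respectively, the restriction of any rooted subtree of $T$ to this branch is captured by an indicator vector $(x_i,x_j,y_i,y_j) \in \{0,1\}^4$ satisfying the precedence constraints $y_i \leq x_i$ and $y_j \leq x_j$, and contributes $\delta := x_i + x_j + y_i + y_j$ edges. Writing $a = p(u_i)$, $b = p(u_j)$, $c = p(u_{i+k})$, $d = p(u_{j+k})$ (so $a \leq b$ and $c \geq d$ by hypothesis), the branch's prize under $\tau$ is $a x_i + b x_j + c y_i + d y_j$, and for any precedence-feasible $(x'_i,x'_j,y'_i,y'_j)$ the branch's prize under $\tau'$ with respect to $p'$ is $a x'_i + b x'_j + d y'_i + c y'_j$. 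Moreover, since $u_i$ has only the child $u_{i+k}$ (and similarly for $u_j$), deleting these four vertices from $\tau$ leaves a rooted subtree $\tau_0$ containing $r$, to which any precedence-feasible branch configuration of weight $\delta$ can be re-grafted to produce a valid rooted subtree. Consequently, preserving $\delta$ automatically yields a $\tau'$ with exactly $m$ edges.

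The combinatorial heart of the argument is to verify, for each $\delta \in \{0,1,2,3,4\}$ and each precedence-feasible original vector of weight $\delta$, that some precedence-feasible replacement of the same weight has a prize under $p'$ that dominates the original prize under $p$. The extremes $\delta = 0, 4$ are trivial with $\tau' = \tau$. For $\delta = 1$ we always use $(0,1,0,0)$, whose value $b$ dominates the only alternative value $a$. For $\delta = 2$, if the original is $(1,1,0,0)$ we keep it (the branch prize is unchanged by $p \mapsto p'$), and otherwise (i.e.\ from $(1,0,1,0)$ or $(0,1,0,1)$) we switch to $(0,1,0,1)$, yielding $b+c \geq a+c$ or $b+c \geq b+d$, respectively. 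For $\delta = 3$ we always use $(1,1,0,1)$, whose value $a+b+c$ dominates both $a+b+c$ (from the original $(1,1,1,0)$ under $p$) and $a+b+d$ (from the original $(1,1,0,1)$ under $p$).

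The main obstacle is simply the bookkeeping of the case analysis, but the two hypotheses $a \leq b$ and $c \geq d$ are exactly what make the dominating choice available in every case; no case requires any additional information about the values $p(u_\ell)$ for $\ell \notin \{i, j, i+k, j+k\}$. Once the case analysis is complete, the chain of inequalities in the first paragraph gives $\maxp(m,\mathbf{1},p) \leq \maxp(m,\mathbf{1},p')$ for the chosen budget $m$, and since $m$ was arbitrary the lemma follows.
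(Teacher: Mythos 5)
Your proposal is correct and is essentially the paper's argument: both fix an optimal attack $\tau$ for $p$ and locally modify it on the branch $\{u_i,u_j,u_{i+k},u_{j+k}\}$ to obtain an equally large attack of the same size for $p'$. The paper organizes the exchange by cases on which of $u_{i+k},u_{j+k}$ lie in $V(\tau)$ (with a subcase on $u_j$), whereas you enumerate the precedence-feasible branch configurations by edge count $\delta$; the replacement subtrees produced are the same, so this is only a difference in bookkeeping.
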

\begin{proof}
Let $\tau\subseteq T$ be a max-prize rooted subtree on $m$
edges with respect to $p$, so 
$\pr(\tau,\mathbf{1},p) = \maxp(m,\mathbf{1},p)$. We now 
consider the following cases.

If either both $u_{i+k}$ and $u_{j+k}$ are vertices of $\tau$, or
neither of them are, then clearly 
$\maxp(m,\mathbf{1},p) = \pr(\tau,\mathbf{1},p) 
= \pr(\tau,\mathbf{1},p')\leq \maxp(m,\mathbf{1},p')$.

If $u_{i+k}\not\in V(\tau)$ and $u_{j+k}\in V(\tau)$, then 
\[
\maxp(m,\mathbf{1},p) = \pr(\tau,\mathbf{1},p) 
\leq \pr(\tau,\mathbf{1},p)-p(u_{j+k})+p(u_{i+k}) 
= \pr(\tau,\mathbf{1},p')\leq \maxp(m,\mathbf{1},p').
\]

If $u_{i+k}\in V(\tau)$ and $u_{j+k}\not\in V(\tau)$, then we
consider two (sub-)cases.
If $u_j\in V(\tau)$, then since $u_j$ is a leaf in $\tau$,
we have that $\tau' = (\tau - u_{i+k})\cup u_{j+k}$ is
also a rooted subtree of $T$ on $m$ vertices and
$\maxp(m,\mathbf{1},p) = \pr(\tau,\mathbf{1},p)
= \pr(\tau',\mathbf{1},p')\leq \maxp(m,\mathbf{1},p')$.
If $u_j\not\in V(\tau)$, then 
$\tau'' = (\tau - \{u_i,u_{i+k}\})\cup \{u_j,u_{j+k}\}$
is also a rooted subtree of $T$ on $m$ vertices, and
\begin{eqnarray*}
\maxp(m,\mathbf{1},p) & = & \pr(\tau,\mathbf{1},p) \\
  & \leq & \pr(\tau,\mathbf{1},p) - p(u_i) - p(u_{j+k}) + p(u_j) + p(u_{i+k}) \\ 
  & =    & \pr(\tau'',\mathbf{1},p') \\ 
  & \leq & \maxp(m,\mathbf{1},p'),
\end{eqnarray*}
which completes the proof.
\end{proof}
\begin{proof}[Proof of Theorem~\ref{thm:4-spid-SS}]
Let $T$ be a 4-spider, $p$ a prize assignment as given in
Theorem~\ref{thm:4-spid-SS}, and $m\in\nats$.  Let $p'$ be an
arbitrary prize assignment of the vertices of $T$\@. By
Corollary~\ref{cor:max-parts} we can obtain a prize assignment $p''$
such that
\[
p''(u_i) = p_i \mbox{ for all }i\in\{1,\ldots,k\}, \mbox{ and }
p''(u_i) = p_{\pi(i)} \mbox{ for all } i \in \{k+1,\ldots,n\},
\]
where $\pi$ is a permutation of $\{k+1,\ldots,n\}$, and with
$\maxp(m,\mathbf{1},p'')\leq \maxp(m,\mathbf{1},p')$ for any $m\in\nats$.  By
Lemma~\ref{lmm:xy-ordered} we can obtain a prize assignment $p$ on
$V(T)$ from $p''$ simply by ordering the prizes on the level-two
leaves in a decreasing order, thereby obtaining the very prize
assignment $p$ from Theorem~\ref{thm:4-spid-SS} that satisfies
$\maxp(m,\mathbf{1},p)\leq \maxp(m,\mathbf{1},p'')$ for any $m\in\nats$.  
This proves that
for any $m\in\nats$ we have 
$\maxp(m,\mathbf{1},p)\leq \maxp(m,\mathbf{1},p'')\leq \maxp(m,\mathbf{1},p')$, 
and since $p'$ was an arbitrary prize assignment, the proof is complete.
\end{proof}
As a further observation, we can describe the optimal SAs on the
P-model $M = (T,I,P)$, where $T$ is a rooted 4-spider with the vertices
and edges labeled as in (\ref{eqn:4-spider-label}), as follows.
\begin{observation}
\label{obs:opt-4-spider}
Let $T$ be a 4-spider, $p$ a prize assignment as in 
Theorem~\ref{thm:4-spid-SS}, and $m\in\nats$.
Then there is a max-prize rooted subtree 
$\tau\subseteq T$ on $m$ edges with respect to $p$, so
$\pr(\tau,\mathbf{1},p) = \maxp(m,\mathbf{1},p)$, 
with the following property:
\begin{enumerate}
  \item If $n\leq 2k-1$, then all the leaves of $\tau$ 
are leaves in $T$, and hence in $\{u_{n-k+1},\ldots,u_n\}$.
  \item If $n = 2k$, then $\tau$ has at most one leaf on
level one, in which case it can assumed to be $u_k$.
\end{enumerate}
\end{observation}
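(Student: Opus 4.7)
My plan is to establish the observation by an exchange argument: starting from any optimal attack $\tau$ with $\pr(\tau,\mathbf{1},p) = \maxp(m,\mathbf{1},p)$, I would perform a sequence of local swaps, each preserving the total prize (hence optimality), until I reach an optimal attack satisfying the stated leaf property. Throughout, call a vertex $u_i \in V(\tau)$ a \emph{bad leaf} of $\tau$ if it is a leaf of $\tau$ but not a leaf of $T$; equivalently, $1 \leq i \leq n-k$ and $u_{i+k} \notin V(\tau)$.

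For case~(1), when $n \leq 2k-1$, the goal is to eliminate all bad leaves. Given a bad leaf $u_i$, I split into two sub-cases. If some level-one leaf $u_j$ of $T$ (with $n-k+1 \leq j \leq k$) is not in $\tau$, then I replace the edge $(r,u_i)$ by $(r,u_j)$; since $i \leq n-k < j$ gives $p_j \geq p_i$ under the prize assignment of Theorem~\ref{thm:4-spid-SS}, the total prize does not decrease. Otherwise, all level-one leaves of $T$ already lie in $\tau$; then I pick any such leaf $u_\ell \in V(\tau)$, delete its edge $(r,u_\ell)$, and insert the edge $(u_i,u_{i+k})$, which also preserves the prize because $p(u_{i+k}) = p_{n+1-i} \geq p_{k+1} \geq p_\ell = p(u_\ell)$. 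Both moves strictly decrease the number of bad leaves (in the first, $u_i$ is removed; in the second, $u_i$ gains a child in $\tau$), so after finitely many iterations we obtain an optimal $\tau$ whose leaves all lie in $\{u_{n-k+1},\ldots,u_n\}$.

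For case~(2), when $n = 2k$, every level-one vertex of $T$ has a child in $T$, so every level-one leaf of $\tau$ is automatically bad. I first reduce the number of bad leaves to at most one: if $u_i,u_j$ are bad leaves with $i < j$, then removing the edge $(r,u_j)$ and adding $(u_i,u_{i+k})$ yields another optimal attack, since $i+j \leq 2k-1 < 2k+1$ implies $p(u_{i+k}) = p_{2k+1-i} \geq p_j = p(u_j)$; this move eliminates both $u_i$ and $u_j$ as bad leaves. If a unique bad leaf $u_i$ with $i < k$ then remains, I perform a final swap to shift it onto $u_k$: if $u_k \in V(\tau)$ then uniqueness forces $u_{2k} \in V(\tau)$, and exchanging $u_{2k}$ for $u_{i+k}$ works via $p_{2k+1-i} \geq p_{k+1}$; otherwise $u_k \notin V(\tau)$, and I simply replace the edge $(r,u_i)$ by $(r,u_k)$, using $p_k \geq p_i$.

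The main obstacle is verifying that each of these exchanges indeed produces a valid rooted subtree with the same number of edges and a non-decreasing total prize, together with a proper accounting of how the set of bad leaves changes. Every key inequality relies on the precise pairing from Theorem~\ref{thm:4-spid-SS}: a level-one vertex $u_i$ with $i \leq n-k$ carries the small prize $p_i$ while its level-two child $u_{i+k}$ carries the large prize $p_{n+1-i}$, and this controlled imbalance along each parent--child pair is what drives every swap to be prize-non-decreasing.
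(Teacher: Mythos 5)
Your proof is correct and follows essentially the same exchange argument as the paper's: both repeatedly trade a ``bad'' level-one leaf of $\tau$ for either another level-one vertex or a level-two child, justified by the same inequalities (e.g.\ $p_{n+1-i}\geq p_j$ for $j\leq k\leq n-k+i$ and $p_k\geq p_i$) arising from the pairing in Theorem~\ref{thm:4-spid-SS}. If anything, your treatment of the $n=2k$ case is slightly more complete, since you explicitly handle the sub-case $u_k\notin V(\tau)$, which the paper's sketch passes over by assuming $u_k$ already has its child $u_{2k}$ inside $\tau$.
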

\begin{proof}
Suppose $\tau$ has two leaves $u_i,u_j\in\{u_1,\ldots,u_{n-k}\}$.  In
this case $\tau' = (\tau - u_j)\cup u_{k+i}$ is also a rooted subtree
of $T$ on $m$ edges and has 
$\pr(\tau',\mathbf{1},p)\geq \pr(\tau,\mathbf{1},p)$. Hence, we
can assume $\tau$ to have at most one leaf from
$\{u_1,\ldots,u_{n-k}\}$.

Suppose $\tau$ has one leaf $u_i\in\{u_1,\ldots,u_{n-k}\}$.  We now
consider the two cases; $k>n-k$ and $k=n-k$.

{\sc First case:} $k>n-k$ or $n\leq 2k-1$. If $\tau$ has another leaf
$u_j\in\{u_{n-k+1},\ldots,n_k\}$, then, as above, 
$\tau' = (\tau - u_j)\cup u_{k+i}$ has 
$\pr(\tau',\mathbf{1},p)\geq \pr(\tau,\mathbf{1},p)$.
Otherwise, $\tau$ has no leaves from $\{u_{n-k+1},\ldots,n_k\}\neq\emptyset$.
In this case $\tau'' = (\tau - u_i)\cup u_k$ is a rooted subtree of $T$
on $m$ edges with $\pr(\tau'',\mathbf{1},p)\geq \pr(\tau,\mathbf{1},p)$.
Hence, we can assume that $\tau$ has no leaves 
from $\{u_1,\ldots,u_{n-k}\}$, which proves or claim in this case.

{\sc Second case:} $k=n-k$ or $n=2k$. In this case $\tau$ has the
unique level-one leaf $u_i$. If $i<k$, then $u_k$ has a unique child
$u_{2k}$ in $\tau$, and so $\tau' = (\tau - u_{2k})\cup u_{k+i}$ has
the unique level-one leaf $u_k$ and 
$\pr(\tau',\mathbf{1},p)\geq \pr(\tau,\mathbf{1},p)$.
Hence, we can assume that $\tau$ has its unique level-one leaf $u_k$.
\end{proof}

{\sc Remark:} Note that in the case $n\leq 2k-1$ in the proof of
Observation~\ref{obs:opt-4-spider}, all the level-one leaves of $\tau$
can be assumed to be from $\{u_{n-k+1},\ldots, u_k\}$.  If we have $\ell$
of them, then they can further be assumed to be $u_{k-\ell+1},\ldots,u_k$.

\section{Duality between P- and C-Models}
\label{sec:duality} 

In this section we state and use a duality between
the P- and C-models, which then can be used to obtain similar 
results for C-models that we obtained for P-models in the previous section. 
In particular, we will demonstrate that if $T$ is one of the four types of
rooted trees mentioned in Theorem~\ref{thm:SS-trees}, then any 
C-model $M = (T,C,{{I}})$ indeed has an optimal SS, as we proved was
the case for the P-model. As with the P-model, we already have that any
C-model $M = (T,C,I)$ (in fact, any CSM $M = (T,C,P)$), where $T$
is a rooted path or a rooted star, does have an optimal SS.

As mentioned in the remarks right after Observation~\ref{obs:C-T(2,3)},
we now explicitly examine an example of a rooted proper subtree 
$T_p(2)$ of $T(2)$, for which any P- or C-model $M = (T_p(2),C,P)$ 
has an optimal security system. 
For the next two examples, and just as in the convention right before
Example~\ref{exa:P-T_p(3)}, let $T_p(2)$ denote the 
rooted tree, whose underlying graph is a path, on five vertices 
$V(T_p(2)) = \{r,u_1,u_2,u_3,u_4\}$
and edges $E(T_p(2)) = \{ (r,u_1), (r,u_2), (u_1,u_3), (u_2,u_4)\}$
rooted at its center vertex.
We continue the convention of labeling the edges by the same index
as their heads: $e_1 = (r,u_1)$, $e_2 = (r,u_2)$, $e_3 = (u_1,u_3)$,
and $e_4 = (u_2,u_4)$.
\begin{center}
\begin{tikzpicture}[scale=.7] 

\node[draw, thick, shape=circle,scale=1.3] (r) at (0,6) {$r$};
\node[draw, shape=circle,scale=1] (u1) at (-2,4) {$u_1$};
\node[draw, shape=circle,scale=1] (u2) at (2,4) {$u_2$};
\node[draw, shape=circle,scale=1] (u3) at (-4,2) {$u_3$};
\node[draw, shape=circle,scale=1] (u4) at (4,2) {$u_4$};

\foreach \from/\to in {r/u1,r/u2,u1/u3,u2/u4}
\path[draw, thick] (\from) -- (\to);

\node[left] at (-1,5) {$e_1$};
\node[right] at (1,5) {$e_2$};
\node[left] at (-3,3) {$e_3$};
\node[right] at (3,3) {$e_4$};

\node at (0,0) {$T_p(2)$};

\end{tikzpicture}
\end{center}
\begin{example}
\label{exa:P-T_p(2)}
\end{example}
\vspace{-2 mm} 
\noindent Consider a P-model (with $c=\mathbf{1}$)
on the rooted tree $T_p(2)$ where the prize values 
$P = \{p_1,p_2,p_3,p_4\}$ are general real positive values
ordered increasingly $p_1\leq p_2\leq p_3\leq p_4$. By 
Theorem~\ref{thm:4-spid-SS} an optimal SS for our CSM $M=(T_p(2),{{I}},P)$
is obtained by assigning the prizes as
$p(u_1,u_2,u_3,u_4) := (p_1,p_2,p_4,p_3)$. We can explicitly 
obtain the max-prize subtree for each given budgets $B\in {\reals}$
that yields the following:
\[
\maxp(B,\mathbf{1},p) = \left\{
\begin{array}{ll}
0   & \mbox{ for } B < 1, \\
p_2 & \mbox{ for } 1\leq B < 2, \\
\max(p_1+p_4, p_2+p_3) & \mbox{ for } 2\leq B < 3, \\
p_1+p_2+p_4 & \mbox{ for } 3\leq B < 4, \\
p_1+p_2+p_3+p_4 & \mbox{ for } 4\leq B.
\end{array}
\right.
\]
\begin{example}
\label{exa:C-T_p(2)}
\end{example}
\vspace{-2 mm} 
\noindent Consider a C-model (with $p=\mathbf{1}$)
on the rooted tree $T_p(2)$ where the penetration cost values 
$C = \{c_1,c_2,c_3,c_4\}$ are general real positive values
ordered decreasingly $c_1\geq c_2\geq c_3\geq c_4$. It is now
an easy combinatorial exercise to verify directly that 
an optimal SS for our CSM $M=(T_p(2),C,{{I}})$
can be obtained by assigning penetration costs as
$c(u_1,u_2,u_3,u_4) := (c_1,c_2,c_4,c_3)$, in the same (index-)order
as for the P-model in Example~\ref{exa:P-T_p(2)}. We explicitly 
obtain the max-prize subtree for each given budget $B\in {\reals}$
that yields the following:
\[
\maxp(B,c,\mathbf{1}) = \left\{
\begin{array}{ll}
0   & \mbox{ for } B < c_2, \\
1   & \mbox{ for } c_2 \leq B < \min(c_1+c_4, c_2+c_3), \\
2   & \mbox{ for } \min(c_1+c_4, c_2+c_3) \leq B < c_1+c_2+c_4, \\
3   & \mbox{ for } c_1+c_2+c_4\leq B < c_1+c_2+c_3+c_4, \\
4   & \mbox{ for } c_1+c_2+c_3+c_4\leq B.
\end{array}
\right.
\]
Let $K$ be a sufficiently large cost number (any real number 
$\geq \max(c_1,\ldots,c_4) + 1$ will do), 
and write each edge-cost of the form  $c_i = K - c_i'$.
In this way $\maxp(B,c,\mathbf{1})$ will take the following form
\[
\maxp(B,c,\mathbf{1}) = \left\{
\begin{array}{ll}
0   & \mbox{ for } B < K - c_2', \\
1   & \mbox{ for } K - c_2' \leq B < 2K - \max(c_1'+c_4', c_2'+c_3'), \\
2   & \mbox{ for } 2K - \max(c_1'+c_4', c_2'+c_3') \leq B < 
   3K - (c_1'+c_2'+c_4'), \\
3   & \mbox{ for } 3K - (c_1'+c_2'+c_4') \leq B < 4K - (c_1'+c_2'+c_3'+c_4'), \\
4   & \mbox{ for } 4K - (c_1'+c_2'+c_3'+c_4') \leq B.
\end{array}
\right.
\]
From the above we see the evident resemblance to the expression for 
$\maxp(B,\mathbf{1},p)$ 
of the P-model in Example~\ref{exa:P-T_p(2)}. This is a glimpse of a duality
between the P-models and the C-models that we will now describe.

{\sc Convention:} In what follows, it will be convenient to view
the cost and prize assignments $c$ and $p$ not as functions as in 
Definition~\ref{def:GO-defensive}, but rather as vectors
$\tilde{c} = (c_1,\ldots,c_n)$ and $\tilde{p} = (p_1,\ldots,p_n)$ 
in the $n$-dimensional Euclidean space ${\reals}^n$, which can
be obtained by a fixed labeling of the $n$ non-root vertices 
$u_1,\ldots,u_n$ and a corresponding labeling of the edges
$e_1,\ldots,e_n$, with our usual convention that for each $i$ the vertex
$u_i$ is the head of $e_i$, and by letting $c_i := c(e_i)$ and $p_i := p(u_i)$.

\vspace{3 mm}

For a given $n\in{\nats}$, let ${\cal{B}}({\reals}^n)$ denote
the group of all bijections ${\reals}^n \rightarrow {\reals}^n$ with
respect to compositions of maps. For $a\in{\prats}$ and $b\in{\rats}$
the affine map $\alpha : {\reals}^n \rightarrow {\reals}^n$ given
by $\alpha(\tilde{x}) = a\tilde{x} + b\tilde{1}$, where 
$\tilde{1} = (1,\ldots,1)\in {\reals}^n$, is bijective with
an inverse 
$\alpha^{-1}(\tilde{x}) = \frac{1}{a}\tilde{x} - \frac{b}{a}\tilde{1}$
of the same type.
Further, if $\alpha'(\tilde{x}) = a'\tilde{x} + b'\tilde{1}$ is 
another such map,
then the composition 
$(\alpha'\circ\alpha)(\tilde{x}) = a'a\tilde{x} + (a'b + b')\tilde{1}$
is also a bijection of this very type. Since the identity map of ${\reals}^n$ 
has $a = 1\in {\prats}$ and $b = 0 \in \rats$, we have the following.
\begin{observation}
\label{obs:subgroup}
If $n\in {\nats}$ then $G_n = \{\alpha \in {\cal{B}}({\reals}^n) : 
\alpha(\tilde{x}) = a\tilde{x} + b\tilde{1}, \mbox{ for some } a\in{\prats}
\mbox{ and } b\in{\rats}\}$ is a subgroup of ${\cal{B}}({\reals}^n)$.
\end{observation}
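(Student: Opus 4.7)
The plan is to apply the standard subgroup criterion: verify that $G_n$ contains the identity of ${\cal{B}}({\reals}^n)$, is closed under composition, and is closed under taking inverses. Each of these three items is essentially the content of one sentence of the paragraph immediately preceding the observation, so the proof amounts to organizing those sentences under the heading of the subgroup test rather than producing new calculations.

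For the identity, I would choose $a=1\in\prats$ and $b=0\in\rats$; then $\alpha(\tilde{x}) = \tilde{x}$ is the identity map on ${\reals}^n$ and lies in $G_n$ by construction. For closure under composition, if $\alpha(\tilde{x}) = a\tilde{x} + b\tilde{1}$ and $\alpha'(\tilde{x}) = a'\tilde{x} + b'\tilde{1}$ both lie in $G_n$, a direct substitution gives
\[
(\alpha'\circ\alpha)(\tilde{x}) = (a'a)\tilde{x} + (a'b + b')\tilde{1},
\]
with $a'a\in\prats$ because $\prats$ is closed under multiplication and $a'b + b'\in\rats$ because $\rats$ is closed under addition and multiplication; hence $\alpha'\circ\alpha\in G_n$.

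The only step that warrants a brief comment, and the place I would slow down, is closure under inverses. Although the definition of $G_n$ only requires $a\in\prats$, the ambient condition $\alpha\in{\cal{B}}({\reals}^n)$ forces $\alpha$ to be a bijection, and the affine map $\tilde{x}\mapsto a\tilde{x} + b\tilde{1}$ collapses ${\reals}^n$ onto the single point $b\tilde{1}$ when $a=0$; so every element of $G_n$ in fact has $a$ strictly positive. Consequently $1/a\in\prats$ is well defined, and solving $\tilde{y} = a\tilde{x} + b\tilde{1}$ for $\tilde{x}$ yields
\[
\alpha^{-1}(\tilde{y}) = \tfrac{1}{a}\tilde{y} - \tfrac{b}{a}\tilde{1},
\]
which is again of the required form with $1/a\in\prats$ and $-b/a\in\rats$. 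No real obstacle arises beyond this remark that bijectivity forces $a\neq 0$; once that is noted, the three conditions of the subgroup test are immediate and $G_n\leq{\cal{B}}({\reals}^n)$.
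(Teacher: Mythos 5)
Your proof is correct and follows essentially the same route as the paper, which likewise verifies the identity, closure under composition via $(\alpha'\circ\alpha)(\tilde{x}) = a'a\tilde{x} + (a'b+b')\tilde{1}$, and closure under inverses via $\alpha^{-1}(\tilde{x}) = \frac{1}{a}\tilde{x} - \frac{b}{a}\tilde{1}$. Your extra remark that the ambient bijectivity requirement rules out $a=0$ (which the paper's convention $\prats = $ non-negative rationals would otherwise permit) is a small point the paper glosses over, and it is handled correctly.
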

By letting $G_n$ act on the set ${\reals}^n$ in the natural way,
$(\alpha, \tilde{x}) \mapsto \alpha(\tilde{x})$, then the group orbits
$G_n(\tilde{x}) = \{ \alpha(\tilde{x}) : \alpha \in G_n\}$ yield a partition 
of ${\reals}^n$ into corresponding equivalence classes 
${\reals}^n = \bigcup_{\tilde{x}\in {\reals}^n}G_n(\tilde{x})$.
By intersecting with ${\prats^n}$ we obtain
the following equivalence classes that we seek.
\begin{definition}
\label{def:equiv-class}
For each $\tilde{x}\in {\prats^n}$ let $[\tilde{x}]$ denote
the equivalence class of $\tilde{x}$ with respect to the 
partition of ${{\reals}}^n$ into the $G_n$ orbits: 
$[\tilde{x}] = G_n(\tilde{x})\cap {\prats^n}$.
\end{definition}
We now justify the above equivalence of vectors of ${\prats^n}$.
The following observation is obtained directly from  
Definition~\ref{def:GO-defensive}.
\begin{observation}
\label{obs:alpha-equiv}
Let $T$ be a rooted tree on $n$ labeled non-root vertices and edges, $\tau$
a rooted subtree of $T$, and $\alpha\in G_n$ given by 
$\alpha(\tilde{x}) = a\tilde{x} + b\tilde{1}$. 
If $\tilde{c}, \tilde{p}\in {\prats^n}$ are a cost and prize
vector, respectively, then we have
\begin{eqnarray*}
\pr(\tau,\tilde{c},\alpha(\tilde{p})) 
  & =  & a\pr(\tau,\tilde{c},\tilde{p}) + |E(\tau)|b, \\
\cost(\tau,\alpha(\tilde{c}),\tilde{p}) 
  & =  & a\cost(\tau,\tilde{c},\tilde{p}) + |E(\tau)|b.
\end{eqnarray*}
\end{observation}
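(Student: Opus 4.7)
The plan is a direct computation using linearity and the definitions of $\pr$ and $\cost$ from Definition~\ref{def:GO-defensive}. The whole statement is essentially the observation that affine maps commute with finite sums up to an additive term counting the number of summands, so the proof will just amount to counting summands carefully.

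First I would unfold the action of $\alpha$ coordinatewise: if $\tilde{p} = (p_1,\ldots,p_n)$ then $\alpha(\tilde{p})$ has $i$-th component $a p_i + b$, and similarly for $\alpha(\tilde{c})$. Next, for the prize equation, I expand
\[
\pr(\tau,\tilde{c},\alpha(\tilde{p})) = \sum_{u_i \in V(\tau)\setminus\{r\}} (a p_i + b) = a\!\!\sum_{u_i \in V(\tau)\setminus\{r\}}\!\! p_i \;+\; |V(\tau)\setminus\{r\}|\cdot b.
\]
The first sum is $\pr(\tau,\tilde{c},\tilde{p})$, and since $\tau$ is a rooted subtree of $T$ containing $r$, we have $|V(\tau)\setminus\{r\}| = |V(\tau)|-1 = |E(\tau)|$, giving the stated formula. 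This identification is the one small point worth calling out, but it is immediate from the fact that trees on $k$ vertices have $k-1$ edges.

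For the cost equation the argument is even more direct, since $\cost$ already sums over edges:
\[
\cost(\tau,\alpha(\tilde{c}),\tilde{p}) = \sum_{e_i \in E(\tau)} (a c_i + b) = a\cost(\tau,\tilde{c},\tilde{p}) + |E(\tau)|\cdot b.
\]
There is no real obstacle here; the statement is essentially bookkeeping, and I expect the proof to occupy only a couple of lines. The substantive use of the observation comes later, when combined with Observation~\ref{obs:subgroup} it will let us transfer optimality of a SS under the $G_n$-action and thereby set up the duality between P- and C-models.
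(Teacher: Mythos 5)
Your computation is correct and is exactly the direct unfolding the paper has in mind (the paper states this observation without proof, as "obtained directly from Definition~\ref{def:GO-defensive}"). The one point worth making explicit, the identification $|V(\tau)\setminus\{r\}| = |E(\tau)|$ for a rooted subtree containing $r$, is handled correctly.
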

If $J\subseteq \{1,\ldots,n\}$ and $\Sigma_J : {\reals}^n \rightarrow \reals$
is given by $\tilde{x}\mapsto \sum_{i\in J}x_i$, then we clearly have
\begin{equation}
\label{eqn:order-preserving}
\Sigma_J(\alpha(\tilde{x}))\leq \Sigma_J(\alpha(\tilde{y})) 
\Leftrightarrow \Sigma_J(\tilde{x})\leq \Sigma_J(\tilde{y}),
\end{equation}
and hence the following corollary.
\begin{corollary}
\label{cor:alpha-max}
Let $T$ be a rooted tree on $n$ labeled non-root vertices and edges, 
$B\in {\prats}$ a budget, and $\alpha\in G_n$ given by 
$\alpha(\tilde{x}) = a\tilde{x} + b\tilde{1}$. 

(i) If $\tilde{p}\in {\prats^n}$ is a prize vector, then we have
\begin{equation}
\label{eqn:alpha-P}
\maxp(B,\tilde{1},\alpha(\tilde{p})) = 
a\maxp(B,\tilde{1},\tilde{p}) + b\lfloor B\rfloor.
\end{equation}
Further, both max prizes in (\ref{eqn:alpha-P}) are attained at the 
same rooted subtree $\tau$ of $T$ where $|E(\tau)| = \lfloor B\rfloor$. 

(ii) If $\tilde{c} \in {\prats^n}$ is a cost vector, then we have
\[
\maxp(aB+bm,\alpha(\tilde{c}), \tilde{1}) = m
\Leftrightarrow
\maxp(B,\tilde{c}, \tilde{1}) = m,
\]
and further, both max prizes are attained at the same rooted subtree $\tau$
of $T$ within the budget; that is, $|E(\tau)| = m$ and 
$\cost(\tau,\tilde{c},\tilde{1}) \leq B$.
\end{corollary}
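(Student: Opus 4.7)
The plan is to derive both parts as near-immediate consequences of Observation~\ref{obs:alpha-equiv}, together with the structural fact that a P-model has edge-cost identically $1$ (so the cost of any attack $\tau$ is just $|E(\tau)|$) and a C-model has vertex-prize identically $1$ (so the prize of any attack is just $|E(\tau)|$). The key point in both cases is that restricting attention to subtrees of a fixed edge count turns the affine action of $\alpha$ into an order-preserving transformation on prizes/costs, so argmaxes are preserved.

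For part (i), I would first argue that any optimal attack in the P-model $(T,\tilde 1,\tilde p)$ can be chosen with $|E(\tau^*)|=\lfloor B\rfloor$ (or $|E(\tau^*)|=n$ in the degenerate case $\lfloor B\rfloor>n$): since every prize is non-negative, one can keep appending edges without decreasing $\pr(\tau,\tilde 1,\tilde p)$ until the budget is spent. The same is true of $(T,\tilde 1,\alpha(\tilde p))$ once one observes that $\alpha(\tilde p)$ also has non-negative entries. Restricting to subtrees of size exactly $\lfloor B\rfloor$, Observation~\ref{obs:alpha-equiv} gives
\[
\pr(\tau,\tilde 1,\alpha(\tilde p)) \;=\; a\,\pr(\tau,\tilde 1,\tilde p) \;+\; \lfloor B\rfloor\,b,
\]
which, because $a>0$ and $\lfloor B\rfloor b$ is a fixed constant, is a strictly increasing affine function of $\pr(\tau,\tilde 1,\tilde p)$. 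By (\ref{eqn:order-preserving}) (applied with $J$ the head-index set of $E(\tau)$), a subtree $\tau^*$ maximizes one side if and only if it maximizes the other, and taking the max of both sides yields (\ref{eqn:alpha-P}) and the attaining-subtree statement.

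For part (ii), I would prove the equivalence by exhibiting the same optimizer $\tau^*$ for both budget/cost pairs. If $\maxp(B,\tilde c,\tilde 1)=m$, let $\tau^*$ be an optimal attack, so $|E(\tau^*)|=m$ and $\cost(\tau^*,\tilde c,\tilde 1)\le B$. By Observation~\ref{obs:alpha-equiv}
\[
\cost(\tau^*,\alpha(\tilde c),\tilde 1) \;=\; a\cost(\tau^*,\tilde c,\tilde 1)+mb \;\le\; aB+mb,
\]
so $\tau^*$ is within the new budget and gives prize $m$, i.e.\ $\maxp(aB+mb,\alpha(\tilde c),\tilde 1)\ge m$. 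For the reverse inequality, I would show that no subtree $\tau'$ with $|E(\tau')|=m'>m$ satisfies $\cost(\tau',\alpha(\tilde c),\tilde 1)\le aB+mb$: by Observation~\ref{obs:alpha-equiv} this is $a\cost(\tau',\tilde c,\tilde 1)+m'b\le aB+mb$, i.e.\ $\cost(\tau',\tilde c,\tilde 1)\le B+(m-m')b/a$, which under the standing convention that $\alpha$ preserves $\prats^n$ (so $b\ge 0$) is strictly stronger than $\cost(\tau',\tilde c,\tilde 1)\le B$, and this already fails by the optimality of $m$. The converse direction is the same argument applied to $\alpha^{-1}\in G_n$ (which has the same affine form with $a^{-1}>0$), noting that $a^{-1}(aB+mb)-(b/a)m=B$.

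The main obstacle I expect is the cross-size comparison in the reverse direction of part (ii): when $m'\neq m$ the extra term $(m'-m)b$ in the transformed cost inequality can swing the comparison the wrong way unless $b$ has the correct sign. This is the point where the non-triviality hides, since for part (i) the factor $\lfloor B\rfloor b$ is the same across all competing subtrees and drops out of the argmax, whereas in part (ii) the per-edge contribution $b$ depends on $|E(\tau)|$ itself. Once one pins $b\ge 0$ down via the requirement $\alpha(\prats^n)\subseteq\prats^n$, both parts reduce cleanly to bookkeeping on top of Observation~\ref{obs:alpha-equiv}.
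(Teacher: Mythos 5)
Part (i) of your argument is correct and is essentially the paper's own (the paper gives no written proof, deriving the corollary in one line from Observation~\ref{obs:alpha-equiv} and (\ref{eqn:order-preserving})): since all entries of $\tilde{p}$ and of $\alpha(\tilde{p})$ are non-negative, every competing optimal attack can be taken to have exactly $\lfloor B\rfloor$ edges (or $n$ when $\lfloor B\rfloor>n$, as you note), so the additive correction $b\lfloor B\rfloor$ is constant over the class being maximized and the argmax is preserved because $a>0$. That is the right way to make the paper's one-liner precise.

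Part (ii) is where the trouble is, and you have correctly located it without resolving it. Your patch rests on ``the standing convention that $\alpha$ preserves $\prats^n$ (so $b\ge 0$)'': no such convention appears in the paper --- $G_n$ is defined with $b\in\rats$ arbitrary, and the later material on scaled vectors and duality uses shifts of both signs --- and in any case $\alpha(\tilde{c})\in\prats^n$ for the particular $\tilde{c}$ at hand does not force $b\ge 0$. Worse, even granting $b\ge 0$, your converse direction ``apply the same argument to $\alpha^{-1}$'' cannot close: $\alpha^{-1}$ has additive coefficient $-b/a\le 0$, so the sign condition you needed for the forward direction now points the other way, and both directions go through simultaneously only when $b=0$. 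The underlying reason is that part (ii) as stated is in fact false for $b\neq 0$. Take $T$ the path $r\to u_1\to u_2$ with $\tilde{c}=(2,5)$ and $B=13/2$, so $\maxp(B,\tilde{c},\tilde{1})=1=m$ since the full tree costs $7>13/2$; with $\alpha(\tilde{x})=\tilde{x}-\tilde{1}$ we get $\alpha(\tilde{c})=(1,4)\in\prats^2$ and $aB+bm=11/2$, yet the full tree now costs $5\le 11/2$, so $\maxp(aB+bm,\alpha(\tilde{c}),\tilde{1})=2\neq m$. What survives, and what the paper actually uses downstream in Lemma~\ref{lmm:B_m(alpha(c))}, is the transformation law for the thresholds, $B_m(\alpha(\tilde{c}))=aB_m(\tilde{c})+bm$: this follows cleanly from Observation~\ref{obs:alpha-equiv} because $B_m(\tilde{c})$ is in effect the minimum of $\cost(\tau,\tilde{c},\tilde{1})$ over subtrees with exactly $m$ edges, a class on which the correction $mb$ is constant. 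A sound write-up of (ii) should be restated and proved in that threshold form rather than as a biconditional over arbitrary budgets $B$.
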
 

{\sc Remarks:} (i) That both max prizes are attained at the same rooted
subtree $\tau$ in (i) in Corollary~\ref{cor:alpha-max} simply
means that 
\[
\pr(\tau,\tilde{1},\alpha(\tilde{p})) = 
\maxp(B,\tilde{1},\alpha(\tilde{p})) 
\Leftrightarrow 
\pr(\tau,\tilde{1},\tilde{p}) = \maxp(B,\tilde{1},\tilde{p}),
\]
which is a direct consequence of Observation~\ref{obs:alpha-equiv}
and (\ref{eqn:alpha-P}).
(ii) Also, for a rooted subtree $\tau$ with $|E(\tau)| = m$ and 
$\cost(\tau,\tilde{c},\tilde{1}) \leq B$, then by
Observation~\ref{obs:alpha-equiv} we also have
$\cost(\tau,\alpha(\tilde{c}),\tilde{1}) \leq aB + bm$, and 
\[
\pr(\tau,\tilde{c},\tilde{1}) = m = \maxp(B,\tilde{c},\tilde{1})
\Leftrightarrow
\pr(\tau,\alpha(\tilde{c}),\tilde{1}) = m 
= \maxp(aB+bm,\alpha(\tilde{c}),\tilde{1}).
\]

We can, in fact, say a tad more than Corollary~\ref{cor:alpha-max}
for C-models $M=(T,C,{{I}})$.
\begin{definition}
\label{def:B_m(c)}
Let $M=(T,C,{{I}})$ be a C-model. 
For a given cost vector $\tilde{c}\in{\prats^n}$ let $B_m(\tilde{c})$
denote the smallest cost $B\in\prats$ with 
$\maxp(B,\tilde{c},\tilde{1}) = m$.
\end{definition}
Note that
\[
\maxp(B,\tilde{c},\tilde{1}) = m \Leftrightarrow
B_m(\tilde{c}) \leq B < B_{m+1}(\tilde{c}).
\]
We also have the following useful lemma.
\begin{lemma}
\label{lmm:B_m(alpha(c))}
If $\alpha\in G_n$ is given by $\alpha(\tilde{x}) = a\tilde{x} + b\tilde{1}$,
then $B_m(\alpha(\tilde{c})) = aB_m(\tilde{c}) + bm$.
\end{lemma}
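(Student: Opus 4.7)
The plan is to identify $B_m(\tilde c)$ with the minimum cost of a rooted subtree of $T$ having exactly $m$ edges, and then transport this minimum through $\alpha$ by means of Observation~\ref{obs:alpha-equiv}.

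First I would establish the combinatorial reformulation
\[
B_m(\tilde c)\;=\;\min\bigl\{\cost(\tau,\tilde c,\tilde 1)\,:\,\tau\subseteq T\text{ rooted subtree with }|E(\tau)|=m\bigr\}.
\]
Since a C-model has prize vector $\tilde 1$ and the root has prize $0$, the prize of any rooted subtree $\tau$ equals $|E(\tau)|$. Hence $\maxp(B,\tilde c,\tilde 1)\ge m$ iff some rooted subtree of cost at most $B$ has at least $m$ edges; by pruning leaves one at a time (which can only lower the cost, because the entries of $\tilde c$ are non-negative) one may assume exactly $m$ edges. The smallest $B$ with $\maxp(B,\tilde c,\tilde 1)=m$ is therefore precisely the minimum cost over rooted subtrees with $m$ edges.

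Once this is in hand, Observation~\ref{obs:alpha-equiv} finishes the job: for every such subtree $\tau$,
\[
\cost(\tau,\alpha(\tilde c),\tilde 1)\;=\;a\cost(\tau,\tilde c,\tilde 1)+b\,|E(\tau)|\;=\;a\cost(\tau,\tilde c,\tilde 1)+bm.
\]
Because $\alpha\in G_n$ is bijective with $a\in\prats$, we must have $a>0$, so taking the minimum over all rooted subtrees $\tau$ with $|E(\tau)|=m$ on both sides preserves the ordering and yields $B_m(\alpha(\tilde c))=aB_m(\tilde c)+bm$.

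The only genuinely delicate step is the subtree-minimum reformulation of $B_m(\tilde c)$; after that, the identity is purely mechanical. As an alternative I could deduce the two inequalities $B_m(\alpha(\tilde c))\le aB_m(\tilde c)+bm$ and $aB_m(\tilde c)+bm\le B_m(\alpha(\tilde c))$ separately by applying Corollary~\ref{cor:alpha-max}(ii) once to $\alpha$ and once to $\alpha^{-1}$ (which lies in $G_n$ with parameters $1/a\in\prats$ and $-b/a\in\rats$), but this is essentially the same computation in disguise.
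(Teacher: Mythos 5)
Your proof is correct, but your primary route differs from the paper's. The paper argues directly at the level of $\maxp$: it applies Corollary~\ref{cor:alpha-max}(ii) once to get $\maxp(aB_m(\tilde{c})+bm,\alpha(\tilde{c}),\tilde{1})=m$, and then rules out any smaller $B'$ by writing $B'=aB''+bm$ and deriving $\maxp(B'',\tilde{c},\tilde{1})=m$ with $B''<B_m(\tilde{c})$, a contradiction --- this is exactly the ``alternative'' you sketch in your last sentence. Your main argument instead first reformulates $B_m(\tilde{c})$ as $\min\{\cost(\tau,\tilde{c},\tilde{1}):|E(\tau)|=m\}$ and then pushes that minimum through the affine identity of Observation~\ref{obs:alpha-equiv}, using $a>0$ to preserve the minimizer. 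This buys a concrete combinatorial description of $B_m$ that the paper never states explicitly (and which also makes transparent why the increment $bm$ rather than $b\lfloor B\rfloor$ appears), at the price of the extra pruning step --- which does legitimately need the non-negativity of the entries of $\tilde{c}$, as you note, and which tacitly uses (as does the paper's Definition~\ref{def:B_m(c)} itself) that a budget $B$ with $\maxp(B,\tilde{c},\tilde{1})=m$ actually exists; when zero-cost edges cause some value of $m$ to be skipped, both your reformulation and the paper's definition need the same caveat. The paper's version is shorter but leans on Corollary~\ref{cor:alpha-max}(ii), which is itself left unproved; yours is more self-contained. Either is acceptable.
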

\begin{proof}
By definition of $B_m(\tilde{c})$ we have
$\maxp(B_m(\tilde{c}),\tilde{c},\tilde{1}) = m$, and hence
by Corollary~\ref{cor:alpha-max} 
$\maxp(aB_m(\tilde{c})+bm,\alpha(\tilde{c}),\tilde{1}) = m$ as well.
Suppose that $\maxp(B',\alpha(\tilde{c}),\tilde{1}) = m$, where
$B'< aB_m(\tilde{c})+bm$. If now $B' = aB'' + bm$, then $B'' < B_m(\tilde{c})$
and we have again by Corollary~\ref{cor:alpha-max} that 
$\maxp(B'',\tilde{c},\tilde{1}) = m$. This contradicts the
definition of $B_m(\tilde{c})$. Hence, 
$B_m(\alpha(\tilde{c}) = aB_m(\tilde{c}) + bm$.
\end{proof}
\begin{proposition}
\label{prp:B_m(c)-equiv}
For $m\in\{0,1,\ldots,n\}$ and a cost vectors $\tilde{c}$ and $\tilde{c}'$
we have $B_m(\tilde{c})\geq B_m(\tilde{c}')$ if and only if for
every budget $B$ with $\maxp(B,\tilde{c},\tilde{1}) = m$ we
have $\maxp(B,\tilde{c},\tilde{1})\leq\maxp(B,\tilde{c}',\tilde{1})$.
\end{proposition}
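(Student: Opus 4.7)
The plan is to reduce both directions of the equivalence to a single observation: $\maxp(B,\tilde{c}',\tilde{1})\geq m$ if and only if $B\geq B_m(\tilde{c}')$. This follows because $\maxp(B,\tilde{c}',\tilde{1})$ is a non-decreasing step function of $B$ taking integer values in $\{0,1,\ldots,n\}$, and $B_m(\tilde{c}')$ is by definition the smallest budget at which the value $m$ is first attained. I would state this observation at the very beginning (as a one-line consequence of Definition~\ref{def:B_m(c)}) and use it repeatedly.

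For the forward direction, assume $B_m(\tilde{c})\geq B_m(\tilde{c}')$ and pick any $B$ with $\maxp(B,\tilde{c},\tilde{1})=m$. By the characterization noted just before Lemma~\ref{lmm:B_m(alpha(c))}, such a $B$ satisfies $B\geq B_m(\tilde{c})$, and then chaining with the hypothesis gives $B\geq B_m(\tilde{c}')$. By the opening observation applied to $\tilde{c}'$, this yields $\maxp(B,\tilde{c}',\tilde{1})\geq m = \maxp(B,\tilde{c},\tilde{1})$, which is the desired inequality.

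For the backward direction, I would exploit that the assumption is universal over the interval $[B_m(\tilde{c}),B_{m+1}(\tilde{c}))$ of budgets at which $\maxp(\cdot,\tilde{c},\tilde{1})$ equals $m$, but by monotonicity in $B$ it suffices to evaluate at the left endpoint. Thus taking $B=B_m(\tilde{c})$ (which lies in this interval by definition) gives $\maxp(B_m(\tilde{c}),\tilde{c}',\tilde{1})\geq m$, and by the opening observation this is exactly $B_m(\tilde{c})\geq B_m(\tilde{c}')$.

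I do not anticipate a real obstacle; the statement is essentially a bookkeeping reformulation of the definition of $B_m(\tilde{c})$. The only point requiring care is the implicit use of integer-valued, non-decreasing step-function behavior of $\maxp(\cdot,\tilde{c},\tilde{1})$, which I would make explicit up front so that the reduction $\{\maxp(B,\tilde{c}',\tilde{1})\geq m\}\Leftrightarrow\{B\geq B_m(\tilde{c}')\}$ is unambiguous. Once that is in hand, both directions are immediate and symmetric, and no case analysis on the shape of the tree $T$ or on the structure of $\tilde{c},\tilde{c}'$ is needed.
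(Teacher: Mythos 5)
Your proposal is correct and follows essentially the same route as the paper: both directions reduce to the equivalence $\maxp(B,\tilde{c}',\tilde{1})\geq m \Leftrightarrow B\geq B_m(\tilde{c}')$, with the forward direction chaining $B\geq B_m(\tilde{c})\geq B_m(\tilde{c}')$ and the backward direction specializing to $B=B_m(\tilde{c})$. The only difference is that you state the monotone step-function observation explicitly up front, where the paper invokes it implicitly via ``by definition.''
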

\begin{proof}
Suppose $B_m(\tilde{c})\geq B_m(\tilde{c}')$, and let $B$ be a budget
with $\maxp(B,\tilde{c},\tilde{1}) = m$. By definition we then 
have $B\geq B_m(\tilde{c})$ and hence $B\geq B_m(\tilde{c}')$
and therefore 
$\maxp(B,\tilde{c}',\tilde{1})\geq m = \maxp(B,\tilde{c},\tilde{1})$.

Conversely, if for every budget $B$ with $\maxp(B,\tilde{c},\tilde{1}) = m$ we
have $\maxp(B,\tilde{c},\tilde{1})\leq\maxp(B,\tilde{c}',\tilde{1})$,
then, in particular for $B = B_m(\tilde{c})$ we have 
$m = \maxp(B_m(\tilde{c}),\tilde{c},\tilde{1})\leq 
\maxp(B_m(\tilde{c}),\tilde{c}',\tilde{1})$, and hence, by definition,
$B_m(\tilde{c}')\leq B_m(\tilde{c})$.
\end{proof}
{\sc Convention:} For a vector $\tilde{x} = (x_1,\ldots, x_n) \in {\prats^n}$ 
let $\{\tilde{x}\}$ denote its underlying multiset. So if 
$(T,\tilde{c},\tilde{p})$
is an SS for a CSM $M = (T,C,P)$, then we necessarily have
$C = \{\tilde{c}\}$ and $P = \{\tilde{p}\}$ as multisets. Also,
we have $\{\tilde{1}\} = {{I}}$ as the multiset containing $n$
copies of $1$.

Suppose $\maxp(B,\tilde{1},\tilde{p}) \leq \maxp(B,\tilde{1},\tilde{p}')$
for all $\tilde{p}'$ with $\{\tilde{p}'\} = \{\tilde{p}\}$. Then
by Corollary~\ref{cor:alpha-max} we get for any $\alpha\in G_n$
with $\alpha(\tilde{x}) = a\tilde{x} + b\tilde{1}$, that 
\[
\maxp(B,\tilde{1},\alpha(\tilde{p})) = 
a\maxp(B,\tilde{1},\tilde{p})  + b\lfloor B\rfloor \leq 
a\maxp(B,\tilde{1},\tilde{p}') + b\lfloor B\rfloor =
\maxp(B,\tilde{1},\alpha(\tilde{p}')),
\]
and so we have the following.
\begin{proposition}
\label{prp:P-optimal}
The SS $(T,\tilde{1},\tilde{p})$ is optimal for the 
P-model $M = (T,{{I}},\{\tilde{p}\})$ with respect to the 
budget $B\in\prats$ if and only if 
the SS $(T,\tilde{1},\alpha(\tilde{p}))$ is optimal for the
P-model $M = (T,{{I}},\{\alpha(\tilde{p})\})$ with respect to $B$.
\end{proposition}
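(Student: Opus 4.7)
The plan is to prove the equivalence by directly unwinding Definition~\ref{def:optimaldefensivestrategy} and invoking Corollary~\ref{cor:alpha-max}(i) to transfer the inequality that defines optimality between the two P-models. Since $G_n$ is a group (Observation~\ref{obs:subgroup}) and $\alpha^{-1}\in G_n$, it suffices to prove just one direction; the converse is obtained by replacing $\alpha$ with $\alpha^{-1}$ and $\tilde{p}$ with $\alpha(\tilde{p})$.

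The first step is to set up a bijection between the feasible prize vectors of the two P-models. Since $\alpha(\tilde{x}) = a\tilde{x} + b\tilde{1}$ acts coordinatewise as the affine map $x\mapsto ax+b$, we have $\{\alpha(\tilde{p}')\} = \{\alpha(\tilde{p})\}$ as multisets if and only if $\{\tilde{p}'\} = \{\tilde{p}\}$ as multisets. Hence $\tilde{p}'\mapsto \alpha(\tilde{p}')$ is a bijection from the set of prize vectors for $M = (T,{{I}},\{\tilde{p}\})$ onto the set of prize vectors for $M = (T,{{I}},\{\alpha(\tilde{p})\})$. This means that the universal quantifier "for all competing prize vectors" in Definition~\ref{def:optimaldefensivestrategy} ranges over corresponding families on the two sides.

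The second step is to observe that Corollary~\ref{cor:alpha-max}(i), applied to each such $\tilde{p}'$, gives
\[
\maxp(B,\tilde{1},\alpha(\tilde{p}')) \;=\; a\,\maxp(B,\tilde{1},\tilde{p}') + b\lfloor B\rfloor,
\]
and of course the analogous identity holds for $\tilde{p}$ itself. Because $a\in\prats$ must be strictly positive for $\alpha$ to be a bijection, the map $y\mapsto ay + b\lfloor B\rfloor$ is strictly increasing, so
\[
\maxp(B,\tilde{1},\tilde{p}) \leq \maxp(B,\tilde{1},\tilde{p}')
\;\Longleftrightarrow\;
\maxp(B,\tilde{1},\alpha(\tilde{p})) \leq \maxp(B,\tilde{1},\alpha(\tilde{p}')).
\]
Combining this equivalence with the bijection of the first step, the optimality condition for $(T,\tilde{1},\tilde{p})$ w.r.t.~$B$ is term-by-term equivalent to the optimality condition for $(T,\tilde{1},\alpha(\tilde{p}))$ w.r.t.~$B$, which is the claim.

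There is no substantive obstacle here, as the proof is really just a careful bookkeeping argument. The only subtlety worth highlighting is the verification in the first step that $\alpha$ respects the multiset constraint, so that one is genuinely quantifying over the competing prize assignments in each respective P-model rather than over unrelated sets; once this is noted, the rest of the proof is a one-line consequence of Corollary~\ref{cor:alpha-max}(i) and the monotonicity of $\alpha$.
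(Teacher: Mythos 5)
Your proof is correct and follows essentially the same route as the paper: the paper likewise derives the proposition directly from Corollary~\ref{cor:alpha-max}(i) by noting that $\maxp(B,\tilde{1},\alpha(\tilde{p})) = a\,\maxp(B,\tilde{1},\tilde{p}) + b\lfloor B\rfloor$ and that this affine transformation preserves the defining inequalities, with the converse direction supplied by the group structure of $G_n$. Your explicit check that $\tilde{p}'\mapsto\alpha(\tilde{p}')$ bijects the competing prize vectors of the two models is a detail the paper leaves implicit, but it is the same argument.
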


In a similar way, we have by Proposition~\ref{prp:B_m(c)-equiv} that
$\maxp(B,\tilde{c},\tilde{1}) = m \leq \maxp(B,\tilde{c}',\tilde{1})$
whenever $B_m(\tilde{c})\leq B < B_{m+1}(\tilde{c})$ and 
$\{\tilde{c}'\} = \{\tilde{c}\}$ if and only if 
$B_m(\tilde{c})\geq B_m(\tilde{c}')$, which by
Lemma~\ref{lmm:B_m(alpha(c))} holds if and only if
\[
B_m(\alpha(\tilde{c})) = aB_m(\tilde{c}) + bm 
  \geq aB_m(\tilde{c}') + bm = B_m(\alpha(\tilde{c}')).
\]
In other words, 
$\maxp(B,\tilde{c},\tilde{1})\leq \maxp(B,\tilde{c}',\tilde{1})$
when $B_m(\tilde{c})\leq B < B_{m+1}(\tilde{c})$ holds if and only if
$\maxp(B',\alpha(\tilde{c}),\tilde{1})\leq 
\maxp(B',\alpha(\tilde{c}'),\tilde{1})$
when $B_m(\alpha(\tilde{c}))\leq B' < B_{m+1}(\alpha(\tilde{c}))$.
Since this holds for every $\alpha\in G_n$, which is a group with
each element having an inverse, then we have the following.
\begin{proposition}
\label{prp:C-optimal}
The SS $(T,\tilde{c},\tilde{1})$ is optimal for the 
C-model $M = (T,\{\tilde{c}\},{{I}})$ 
with respect to $B\in [B_m(\tilde{c}), B_{m+1}(\tilde{c})[ \cap \prats$ 
if and only if
the SS $(T,\alpha(\tilde{c}),\tilde{1})$ is optimal for the 
C-model $M' = (T,\{\alpha(\tilde{c})\},{{I}})$ 
with respect to 
$B'\in [B_m(\alpha(\tilde{c})), B_{m+1}(\alpha(\tilde{c}))[ \cap \prats$.
\end{proposition}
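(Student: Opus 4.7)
The plan is to reduce both sides of the equivalence to inequalities about the threshold budgets $B_m$ using Proposition~\ref{prp:B_m(c)-equiv}, and then to transport these inequalities through the affine map $\alpha$ via Lemma~\ref{lmm:B_m(alpha(c))}.

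First I would apply Proposition~\ref{prp:B_m(c)-equiv} to rewrite optimality of $(T,\tilde{c},\tilde{1})$ with respect to any $B\in[B_m(\tilde{c}),B_{m+1}(\tilde{c})[\cap\prats$ as the condition
\[
B_m(\tilde{c}) \geq B_m(\tilde{c}') \quad \text{for every } \tilde{c}' \text{ with } \{\tilde{c}'\}=\{\tilde{c}\},
\]
and symmetrically to rewrite optimality of $(T,\alpha(\tilde{c}),\tilde{1})$ with respect to $B'\in[B_m(\alpha(\tilde{c})),B_{m+1}(\alpha(\tilde{c}))[\cap\prats$ as $B_m(\alpha(\tilde{c}))\geq B_m(\tilde{d}')$ for every $\tilde{d}'$ with $\{\tilde{d}'\}=\{\alpha(\tilde{c})\}$. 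Since $\alpha$ acts componentwise, it sets up a bijection between the two collections of competing vectors: $\{\tilde{c}'\}=\{\tilde{c}\}$ iff $\{\alpha(\tilde{c}')\}=\{\alpha(\tilde{c})\}$, and every $\tilde{d}'$ with $\{\tilde{d}'\}=\{\alpha(\tilde{c})\}$ equals $\alpha(\alpha^{-1}(\tilde{d}'))$ where $\alpha^{-1}\in G_n$ by Observation~\ref{obs:subgroup}. Applying Lemma~\ref{lmm:B_m(alpha(c))} twice gives $B_m(\alpha(\tilde{c}))=aB_m(\tilde{c})+bm$ and $B_m(\alpha(\tilde{c}'))=aB_m(\tilde{c}')+bm$, so since $a>0$ (forced by $\alpha$ being a bijection), the inequality $B_m(\tilde{c})\geq B_m(\tilde{c}')$ is equivalent to $B_m(\alpha(\tilde{c}))\geq B_m(\alpha(\tilde{c}'))$, and both directions of the proposition follow simultaneously.

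The main subtlety requiring care is the consistency of the index $m$ on the two sides of the equivalence. Lemma~\ref{lmm:B_m(alpha(c))} translates $B_m(\tilde{c})$ into $B_m(\alpha(\tilde{c}))$ with the same subscript, which aligns the intervals $[B_m(\tilde{c}),B_{m+1}(\tilde{c})[$ and $[B_m(\alpha(\tilde{c})),B_{m+1}(\alpha(\tilde{c}))[$; without this index-preservation the reformulations on the two sides would not correspond. Beyond that, the argument is essentially an unwinding of definitions that uses the group structure of $G_n$ only to guarantee invertibility of $\alpha$.
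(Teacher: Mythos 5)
Your proposal is correct and follows essentially the same route as the paper: reduce optimality over the budget interval to the threshold inequality $B_m(\tilde{c})\geq B_m(\tilde{c}')$ via Proposition~\ref{prp:B_m(c)-equiv}, transport it through $\alpha$ via Lemma~\ref{lmm:B_m(alpha(c))} using $a>0$, and invoke the group structure of $G_n$ to get both directions. Your explicit remarks on the bijection between the two families of competing cost vectors and on the index $m$ being preserved are points the paper leaves implicit, but they do not change the argument.
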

Combining Propositions~\ref{prp:P-optimal} and~\ref{prp:C-optimal}, we
have the following summarizing corollary.
\begin{corollary}
\label{cor:alpha(CandP)}
Let $\alpha\in G_n$.

The SS $(T,\tilde{1},\tilde{p})$ is optimal for the P-model
$M = (T,{{I}},\{\tilde{p}\})$ if and only if
the SS $(T,\tilde{1}, \alpha(\tilde{p}))$ is optimal for the P-model
$M' = (T,{{I}},\{\alpha(\tilde{p})\})$.

The SS $(T,\tilde{c},\tilde{1})$ is optimal for the C-model
$M = (T,\{\tilde{c}\},{{I}})$ if and only if
the SS $(T, \alpha(\tilde{c}),\tilde{1})$ is optimal for the C-model
$M' = (T,\{\alpha(\tilde{p})\},{{I}})$.
\end{corollary}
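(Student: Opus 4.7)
The plan is to derive the corollary directly from Propositions~\ref{prp:P-optimal} and~\ref{prp:C-optimal} by quantifying over budgets, invoking Definition~\ref{def:optimaldefensivestrategy}(ii), which says that a SS is optimal precisely when it is optimal with respect to every budget $B\in\prats$.

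For the P-model half, Proposition~\ref{prp:P-optimal} already provides a budget-by-budget equivalence: for each fixed $B\in\prats$, the SS $(T,\tilde{1},\tilde{p})$ is optimal w.r.t.\ $B$ if and only if $(T,\tilde{1},\alpha(\tilde{p}))$ is optimal w.r.t.\ $B$. Since the admissible budget set $\prats$ is the same for both P-models, I would simply conjoin these biconditionals over all $B\in\prats$; the result is exactly the claimed equivalence of overall optimality for $M$ and $M'$.

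For the C-model half, the same strategy applies, but with a small twist: Proposition~\ref{prp:C-optimal} does not identify optimality at a common budget but rather at corresponding budgets lying in matched half-open intervals indexed by $m$. The key observation is that, with the natural conventions $B_0(\tilde{c})=0$ and $B_{n+1}(\tilde{c})=\infty$, the intervals $[B_m(\tilde{c}),B_{m+1}(\tilde{c}))\cap\prats$ for $m\in\{0,1,\ldots,n\}$ partition $\prats$; by Lemma~\ref{lmm:B_m(alpha(c))} and the fact that an $\alpha\in G_n$ has $a>0$, the analogous intervals for $\alpha(\tilde{c})$ also partition $\prats$, and $B\mapsto aB+bm$ carries the $m$-th interval for $\tilde{c}$ bijectively onto the $m$-th interval for $\alpha(\tilde{c})$. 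Quantifying Proposition~\ref{prp:C-optimal} over $m$ therefore upgrades ``optimal at every budget in the $m$-th interval'' for $\tilde{c}$ to the same statement for $\alpha(\tilde{c})$, and taking the conjunction over $m$ yields optimality over all of $\prats$ on both sides.

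I do not expect any real obstacle; this corollary is essentially a bookkeeping/packaging result. The only point requiring a moment's care is in the C-model case: verifying that the interval decomposition truly exhausts $\prats$ and that the affine reparametrization of budgets given by $\alpha$ is consistent across different values of $m$. Both facts are immediate from Lemma~\ref{lmm:B_m(alpha(c))} together with the monotonicity of the sequence $(B_m(\tilde{c}))_{m}$, and the reverse direction of each biconditional is supplied for free by Observation~\ref{obs:subgroup}, which guarantees that $\alpha^{-1}\in G_n$.
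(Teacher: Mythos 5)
Your proposal is correct and follows essentially the same route as the paper, which introduces the corollary with the single line ``Combining Propositions~\ref{prp:P-optimal} and~\ref{prp:C-optimal}, we have the following summarizing corollary''; you have merely made explicit the quantification over budgets (and, in the C-model case, over the intervals $[B_m(\tilde{c}),B_{m+1}(\tilde{c}))$ that partition $\prats$) that the paper leaves implicit. No gaps.
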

Corollary~\ref{cor:alpha(CandP)} shows that optimality
of security systems of both C- and P-models is $G_n$-invariant
when applied to the prize and cost vector, respectively.

Recall the equivalence class $[\tilde{x}] = G_n(\tilde{x})\cap {\prats^n}$
from Definition~\ref{def:equiv-class}. We can now define induced 
equivalence classes of SS of both C- and P-models.
By Corollary~\ref{cor:alpha(CandP)} the following definition
is valid (that is, the terms are all well defined).
\begin{definition}
\label{def:CandP-class}
For a C-model $M = (T,C,{{I}})$ and a SS $(T,\tilde{c},\tilde{1})$
of $M$, we let
\[
[(T,\tilde{c},\tilde{1})] := 
  \{(T,\tilde{x},\tilde{1}) : \tilde{x}\in [\tilde{c}]\}.
\]
We say that $[(T,\tilde{c},\tilde{1})]$ is {\em optimal} if one
$(T,\tilde{x},\tilde{1})\in [(T,\tilde{c},\tilde{1})]$ is optimal
for its corresponding $M = (T,\{\tilde{x}\},{{I}})$, since then
each element in $[(T,\tilde{c},\tilde{1})]$ is also optimal.

Likewise, for a P-model $M = (T,{{I}},P)$ and a SS 
$(T,\tilde{1},\tilde{p})$ of $M$, we let 
\[
[(T,\tilde{1},\tilde{p})] := 
  \{(T,\tilde{1},\tilde{y}) : \tilde{y}\in [\tilde{p}]\}.
\]
We say that $[(T,\tilde{1},\tilde{p})]$ is {\em optimal} if one 
$(T,\tilde{1},\tilde{y})\in [(T,\tilde{1},\tilde{p})]$ is optimal
for its corresponding $M = (T,{{I}},\{\tilde{y}\})$, since
then each element in $[(T,\tilde{1},\tilde{p})]$ is also optimal.
\end{definition}
With the setup just presented we now can define the dual of both
vector classes and SS classes for C- and P-models in the following.
\begin{definition}
\label{def:dual}
For a vector $\tilde{x}$ and $[\tilde{x}] = G_n(\tilde{x})\cap {\prats^n}$
as in Definition~\ref{def:equiv-class}, then $[\tilde{x}]^* := [-\tilde{x}]$ 
is the {\em dual vector class} of $[\tilde{x}]$.

For a C-model $M = (T,C,{{I}})$ and a SS $(T,\tilde{c},\tilde{1})$
of $M$, then $[(T,\tilde{c},\tilde{1})]^* := [(T,\tilde{1},-\tilde{c})]$
is the corresponding 
{\em dual P-model security system class (dual P-model SS class)}
of the C-model class $[(T,\tilde{c},\tilde{1})]$.

Likewise, for a P-model $M = (T,{{I}},P)$ and a SS 
$(T,\tilde{1},\tilde{p})$ of $M$, then 
$[(T,\tilde{1},\tilde{p})]^* := [(T,-\tilde{p},\tilde{1})]$
is the corresponding 
{\em dual C-model security system class (dual C-model SS class)}
of the P-model class $[(T,\tilde{1},\tilde{p})]$.
\end{definition}
Note that the double-dual yields the original class in each case:
$[\tilde{x}]^{**} = [-\tilde{x}]^* = [\tilde{x}]$, and 
\[
[(T,\tilde{c},\tilde{1})]^{**} = [(T,\tilde{1},-\tilde{c})]^* 
  = [(T,\tilde{c},\tilde{1})], \ \ 
[(T,\tilde{1},\tilde{p})]^{**} = [(T,-\tilde{p},\tilde{1})]^* 
  = [(T,\tilde{1},\tilde{p})]. 
\]

For a P-model $M=(T,{{I}},P)$ and a SS P-model class 
$[(T,\tilde{1},\tilde{p})]$
we can always assume the prize vector $\tilde{p}$ is such  
$p_i\in [0,1]\cap\prats$ for each $i$, since 
$\alpha(\tilde{x}) = a\tilde{x}$ is indeed an element of $G_n$ for
any $a>0$. In this way 
$\tilde{c} = \tilde{1}-\tilde{p}\in ([0,1]\cap\prats)^n$
is a legitimate cost vector, and we have $[\tilde{p}]^* = [\tilde{1}-\tilde{p}]$
and $[(T,\tilde{1},\tilde{p})]^* = [(T,\tilde{1}-\tilde{p},\tilde{1})]$. 
In what follows, we will call such a prize vector {\em scaled}.
The following is easy to show.
\begin{claim}
\label{clm:sum-dual}
For a scaled prize vector $\tilde{p}$ with $p_i\in [0,1]\cap\prats$ 
for each $i$, and a rooted subtree $\tau$ of $T$ with $|E(\tau)| = m$, then 
$\pr(\tau,\tilde{1},\tilde{p}) + \cost(\tau,\tilde{1}-\tilde{p},\tilde{1}) = m$.
\end{claim}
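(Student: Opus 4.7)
The plan is to unfold the definitions and exploit the canonical bijection between non-root vertices and edges in a rooted tree. Recall the convention established earlier in the paper: the non-root vertices $u_1,\ldots,u_n$ are indexed so that $u_i$ is the head of the edge $e_i$, and we view $\tilde{p}=(p_1,\ldots,p_n)$ and $\tilde{c}=(c_1,\ldots,c_n)$ as indexed by this common bijection. In particular, for every rooted subtree $\tau$ of $T$, the map $u_i \mapsto e_i$ restricts to a bijection between the non-root vertices of $\tau$ and the edges of $\tau$, since each non-root vertex of $\tau$ has a unique incoming edge (which lies in $\tau$). Consequently, $|V(\tau)\setminus\{r\}| = |E(\tau)| = m$, and the same index set $I_\tau := \{i : u_i \in V(\tau)\setminus\{r\}\} = \{i : e_i \in E(\tau)\}$ parametrizes both sums.

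Given this, the proof is a one-line calculation. First, I would write
\[
\pr(\tau,\tilde{1},\tilde{p}) = \sum_{u_i \in V(\tau)\setminus\{r\}} p_i = \sum_{i\in I_\tau} p_i,
\]
using that the root has prize $0$ (so its presence in $V(\tau)$ does not contribute). Next, since $\tilde{1}-\tilde{p} = (1-p_1,\ldots,1-p_n)$ is a legitimate cost vector (here is where $p_i \in [0,1]\cap\prats$ is needed, to ensure $1-p_i \geq 0$ so that $\tilde{1}-\tilde{p}\in\prats^n$),
\[
\cost(\tau,\tilde{1}-\tilde{p},\tilde{1}) = \sum_{e_i \in E(\tau)}(1-p_i) = \sum_{i\in I_\tau}(1-p_i) = m - \sum_{i\in I_\tau} p_i.
\]
Adding the two identities yields
\[
\pr(\tau,\tilde{1},\tilde{p}) + \cost(\tau,\tilde{1}-\tilde{p},\tilde{1}) = \sum_{i\in I_\tau} p_i + m - \sum_{i\in I_\tau} p_i = m,
\]
which is the claim.

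There is no real obstacle here; the only subtlety worth flagging is the scaling hypothesis $p_i \in [0,1]\cap\prats$, which is precisely what makes $\tilde{1}-\tilde{p}$ a valid (nonnegative) cost vector so that $\cost(\tau,\tilde{1}-\tilde{p},\tilde{1})$ is well defined as a penetration cost in the sense of Definition~\ref{def:GO-defensive}. The content of the lemma is really the observation that in a rooted tree, ``non-root vertices'' and ``edges'' of any rooted subtree are the same counted object, which makes the duality between prizes and costs in the scaled regime exactly additive (summing to $m$ on any $m$-edge subtree).
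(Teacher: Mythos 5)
Your proof is correct; the paper itself omits any argument for this claim (it is introduced only with ``The following is easy to show''), and your computation --- using the head-of-edge bijection between the $m$ edges and the $m$ non-root vertices of $\tau$, together with the convention that the root's prize is $0$ --- is exactly the routine verification the authors had in mind. Your remark that the scaling hypothesis is what makes $\tilde{1}-\tilde{p}$ a legitimate nonnegative cost vector is also the right point to flag.
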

Let $\tilde{p}$ be a scaled prize vector and assume $B$ is a budget
with $\maxp(B,\tilde{1}-\tilde{p},\tilde{1}) = m$. Then there is
a rooted subtree $\tau$ of $T$ on $m$ edges such that
$\cost(\tau,\tilde{1}-\tilde{p},\tilde{1}) \leq B$, and hence
there is such a $\tau$ of smallest cost. Hence, we may assume
$\tau$ is indeed such a rooted subtree of smallest cost.
By Claim~\ref{clm:sum-dual} applied to $\tilde{1}-\tilde{p}$, which is
also scaled, we then have
$\pr(\tau,\tilde{1},\tilde{p}) = m - \cost(\tau,\tilde{1}-\tilde{p},\tilde{1})$
with the smallest $\cost(\tau,\tilde{1}-\tilde{p},\tilde{1})$ among
rooted subtrees $\tau$ on $m$ edges, and hence  
$\pr(\tau,\tilde{1},\tilde{p})$ is maximum among all rooted subtrees
$\tau$ on $m$ edges, and so 
$\pr(\tau,\tilde{1},\tilde{p}) = \maxp(m,\tilde{1},\tilde{p})$. Hence,
\[
B\geq \cost(\tau,\tilde{1}-\tilde{p},\tilde{1}) = 
m - \pr(\tau,\tilde{1},\tilde{p}) = m - \maxp(m,\tilde{1},\tilde{p}).
\]
Since $\cost(\tau,\tilde{1}-\tilde{p},\tilde{1})$ is the smallest cost
among all rooted subtrees on $m$ edges, then 
\[
B'=\cost(\tau,\tilde{1}-\tilde{p},\tilde{1})=m-\maxp(m,\tilde{1},\tilde{p})
\]
is indeed the smallest cost with $\maxp(B',\tilde{1}-\tilde{p},\tilde{1}) = m$.
By Definition~\ref{def:B_m(c)} we then have the following.
\begin{lemma}
\label{lmm:B_m(1-p)}
For $m\in\{0,1,\ldots,n\}$ and a scaled (prize) vector $\tilde{p}$, we have
\[
B_m(\tilde{1}-\tilde{p}) = m - \maxp(m,\tilde{1},\tilde{p}).
\]
\end{lemma}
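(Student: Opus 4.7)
The plan is to convert the statement into a minimum-cost problem over $m$-edge rooted subtrees, and then use Claim~\ref{clm:sum-dual} to turn that minimization into the maximization that defines $\maxp(m,\tilde{1},\tilde{p})$.

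First, I would unpack Definition~\ref{def:B_m(c)} for the cost vector $\tilde{c}=\tilde{1}-\tilde{p}$. Since each edge still carries a positive cost (because $\tilde{p}$ is scaled with $p_i\in[0,1]\cap\prats$) and the prize vector is $\tilde{1}$, the value $\maxp(B,\tilde{1}-\tilde{p},\tilde{1})$ equals the maximum number of edges of any rooted subtree $\tau$ of $T$ with $\cost(\tau,\tilde{1}-\tilde{p},\tilde{1})\leq B$. Consequently $B_m(\tilde{1}-\tilde{p})$ is the smallest budget under which some rooted subtree on exactly $m$ edges is affordable, that is,
\[
B_m(\tilde{1}-\tilde{p}) \;=\; \min\{\cost(\tau,\tilde{1}-\tilde{p},\tilde{1}) : \tau\subseteq T \text{ rooted}, \ |E(\tau)|=m\}.
\]
(At this infimum, the attack achieves prize $m$ under unit prizes, and no smaller budget can reach $m$ edges.) This step is routine but the one I expect to need the most care, since it requires checking both that the minimum cost indeed yields $\maxp = m$ and that no strictly smaller budget suffices.

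Next I would apply Claim~\ref{clm:sum-dual}, which for every rooted subtree $\tau$ with $|E(\tau)|=m$ asserts
\[
\cost(\tau,\tilde{1}-\tilde{p},\tilde{1}) \;=\; m - \pr(\tau,\tilde{1},\tilde{p}).
\]
Substituting into the expression for $B_m(\tilde{1}-\tilde{p})$ and pulling the constant $m$ out of the minimum transforms the minimum cost into a maximum prize:
\[
B_m(\tilde{1}-\tilde{p}) \;=\; m - \max\{\pr(\tau,\tilde{1},\tilde{p}) : \tau\subseteq T \text{ rooted}, \ |E(\tau)|=m\}.
\]

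Finally, the maximum on the right is by definition $\maxp(m,\tilde{1},\tilde{p})$, since under unit edge-costs, having budget $m$ is exactly the same as restricting to rooted subtrees with at most $m$ edges, and any optimal such attack can be taken to have exactly $m$ edges (adding zero-prize edges never hurts under cost $\tilde{1}$ and budget $m$; alternatively the function $m\mapsto \maxp(m,\tilde{1},\tilde{p})$ is non-decreasing and any maximizer over $\leq m$ edges extends to $m$ edges with the same prize). This yields $B_m(\tilde{1}-\tilde{p}) = m - \maxp(m,\tilde{1},\tilde{p})$, completing the proof.
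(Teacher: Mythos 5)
Your proposal is correct and follows essentially the same route as the paper: both identify $B_m(\tilde{1}-\tilde{p})$ with the minimum cost over rooted subtrees on exactly $m$ edges and then invoke Claim~\ref{clm:sum-dual} to turn that minimum into $m-\maxp(m,\tilde{1},\tilde{p})$. Your version is, if anything, slightly more careful than the paper's in explicitly noting that the maximum over subtrees with at most $m$ edges is attained on one with exactly $m$ edges.
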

As a direct consequence of Lemma~\ref{lmm:B_m(1-p)}, we then have 
\begin{corollary}
\label{cor:B-vs-maxp}
For any $m\in\{0,1,\ldots,n\}$ and scaled vectors $\tilde{p}$ 
and $\tilde{p}'$, we have
\[
B_m(\tilde{1}-\tilde{p})\geq B_m(\tilde{1}-\tilde{p}')
\Leftrightarrow
\maxp(m,\tilde{1},\tilde{p}) \leq \maxp(m,\tilde{1},\tilde{p}').
\]
\end{corollary}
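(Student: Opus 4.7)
The plan is to apply Lemma~\ref{lmm:B_m(1-p)} separately to each of the two scaled prize vectors $\tilde{p}$ and $\tilde{p}'$ and then do a one-line algebraic rearrangement. Since the lemma gives an explicit equality relating $B_m(\tilde{1}-\tilde{p})$ to $\maxp(m,\tilde{1},\tilde{p})$, and the equivalence we want is between an inequality on the left quantities and the reverse inequality on the right quantities, essentially no additional content is needed beyond the lemma itself.

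In more detail, I would write $B_m(\tilde{1}-\tilde{p}) = m - \maxp(m,\tilde{1},\tilde{p})$ and $B_m(\tilde{1}-\tilde{p}') = m - \maxp(m,\tilde{1},\tilde{p}')$ from Lemma~\ref{lmm:B_m(1-p)}, then subtract to obtain
\[
B_m(\tilde{1}-\tilde{p}) - B_m(\tilde{1}-\tilde{p}')
= \maxp(m,\tilde{1},\tilde{p}') - \maxp(m,\tilde{1},\tilde{p}).
\]
The left-hand side is non-negative if and only if the right-hand side is, which is precisely the claimed equivalence.

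Since both $\tilde{p}$ and $\tilde{p}'$ are assumed scaled (so $\tilde{1}-\tilde{p}$ and $\tilde{1}-\tilde{p}'$ are legitimate cost vectors in $([0,1]\cap\prats)^n$), the hypotheses of Lemma~\ref{lmm:B_m(1-p)} are met in both applications, and no additional verification is needed. There is essentially no obstacle here; the only thing to be careful about is keeping the direction of the inequality straight under the sign flip, which follows immediately because $x \mapsto m-x$ is order-reversing. This corollary is really just the "order-reversing dictionary entry" between the C-side quantity $B_m$ and the P-side quantity $\maxp$, which sets up the duality used to translate the P-model optimality results of section~\ref{sec:optimal-P-trees} into the corresponding C-model statements.
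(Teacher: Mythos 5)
Your proof is correct and matches the paper's intent exactly: the paper states this corollary as a direct consequence of Lemma~\ref{lmm:B_m(1-p)}, and your argument---apply the lemma to both $\tilde{p}$ and $\tilde{p}'$ and observe that $x\mapsto m-x$ reverses the inequality---is precisely that direct consequence spelled out. No gaps.
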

We can now prove one of the main results in this section.
\begin{theorem}
\label{thm:dual-opt}
Let $M = (T,{{I}},P)$ be a P-model, $(T,\tilde{1},\tilde{p})$ a SS
for $M$ where $\tilde{p}$ is scaled, and $m\in \{0,1,\ldots, n\}$.
Then $\maxp(m,\tilde{1},\tilde{p}) \leq \maxp(m,\tilde{1},\tilde{p}')$
for any $\tilde{p}'$ with $\{\tilde{p}'\} = P$ if and only if
$\maxp(B,\tilde{1}-\tilde{p},\tilde{1}) \leq 
\maxp(B,\tilde{1}-\tilde{p}',\tilde{1})$ for any budget $B$
with $\maxp(B,\tilde{1}-\tilde{p},\tilde{1}) = m$ and for any
$\tilde{p}'$ with $\{\tilde{p}'\} = P$. 
\end{theorem}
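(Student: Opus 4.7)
The plan is to derive Theorem~\ref{thm:dual-opt} as an almost direct composition of Corollary~\ref{cor:B-vs-maxp} and Proposition~\ref{prp:B_m(c)-equiv}; the dictionary between scaled prize vectors and cost vectors via $\tilde{p}\mapsto\tilde{1}-\tilde{p}$ has already been set up, so the work is really just a quantifier shuffle.

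First I would check the well-formedness of the dual objects. Since $\tilde{p}$ is scaled, each $p_i\in [0,1]\cap\prats$; any $\tilde{p}'$ with $\{\tilde{p}'\}=P=\{\tilde{p}\}$ is a permutation of the same entries, so $\tilde{p}'$ is also scaled, and hence both $\tilde{1}-\tilde{p}$ and $\tilde{1}-\tilde{p}'$ are legitimate cost vectors in $([0,1]\cap\prats)^n$. This legitimizes the use of the C-model machinery on the duals of $\tilde{p}$ and $\tilde{p}'$.

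Next, I would apply Corollary~\ref{cor:B-vs-maxp} to translate the prize inequality on the P-side into a $B_m$ inequality on the C-side. Concretely, Corollary~\ref{cor:B-vs-maxp} gives
\[
\maxp(m,\tilde{1},\tilde{p})\leq\maxp(m,\tilde{1},\tilde{p}')
\Longleftrightarrow
B_m(\tilde{1}-\tilde{p})\geq B_m(\tilde{1}-\tilde{p}').
\]
Then I would apply Proposition~\ref{prp:B_m(c)-equiv} with $\tilde{c}=\tilde{1}-\tilde{p}$ and $\tilde{c}'=\tilde{1}-\tilde{p}'$ to rewrite the $B_m$ inequality as
\[
B_m(\tilde{1}-\tilde{p})\geq B_m(\tilde{1}-\tilde{p}')
\Longleftrightarrow
\maxp(B,\tilde{1}-\tilde{p},\tilde{1})\leq\maxp(B,\tilde{1}-\tilde{p}',\tilde{1})
\]
for every budget $B$ with $\maxp(B,\tilde{1}-\tilde{p},\tilde{1})=m$. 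Chaining these two equivalences, the theorem drops out once we quantify over all $\tilde{p}'$ with $\{\tilde{p}'\}=P$ on both sides simultaneously.

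The main ``obstacle'' is really just bookkeeping: one must be careful that the quantifier ``for any $\tilde{p}'$ with $\{\tilde{p}'\}=P$'' on the P-side corresponds, under the dualization $\tilde{p}'\mapsto\tilde{1}-\tilde{p}'$, to ranging over precisely those cost vectors allowed on the C-side, and that the condition $\maxp(B,\tilde{1}-\tilde{p},\tilde{1})=m$ in Proposition~\ref{prp:B_m(c)-equiv} is attached to the fixed $\tilde{p}$ (not $\tilde{p}'$), which matches the way it is phrased in the theorem. Once the two equivalences are lined up with matching quantifiers, no further case analysis or combinatorics on the tree $T$ is needed, and the proof is complete.
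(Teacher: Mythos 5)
Your proposal is correct and follows essentially the same route as the paper: the paper's proof is precisely the chain Corollary~\ref{cor:B-vs-maxp} $\Rightarrow$ $B_m(\tilde{1}-\tilde{p})\geq B_m(\tilde{1}-\tilde{p}')$ $\Rightarrow$ Proposition~\ref{prp:B_m(c)-equiv}, quantified over all $\tilde{p}'$ with $\{\tilde{p}'\}=P$. Your added remark that any permutation $\tilde{p}'$ of a scaled $\tilde{p}$ is again scaled, so that $\tilde{1}-\tilde{p}'$ is a legitimate cost vector, is a small but worthwhile piece of bookkeeping the paper leaves implicit.
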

\begin{proof}
By Corollary~\ref{cor:B-vs-maxp} we have that
$\maxp(m,\tilde{1},\tilde{p}) \leq \maxp(m,\tilde{1},\tilde{p}')$
for any $\tilde{p}'$ with $\{\tilde{p}'\} = P$ if and only if
$B_m(\tilde{1}-\tilde{p})\geq B_m(\tilde{1}-\tilde{p}')$ for any
$\tilde{p}'$ with $\{\tilde{p}'\} = P$ which, 
by Proposition~\ref{prp:B_m(c)-equiv}, holds if and only if 
$\maxp(B,\tilde{1}-\tilde{p},\tilde{1})
\leq\maxp(B,\tilde{1}-\tilde{p}',\tilde{1})$ for all budgets $B$
with $\maxp(B,\tilde{1}-\tilde{p},\tilde{1}) = m$ and for
all $\tilde{p}'$ with $\{\tilde{p}'\} = P$. 
\end{proof}
Note that by Theorem~\ref{thm:dual-opt} we have that
$\maxp(B,\tilde{1},\tilde{p}) \leq \maxp(B,\tilde{1},\tilde{p}')$
for any budget $B$ and any $\tilde{p}'$ with $\{\tilde{p}'\} = \{\tilde{p}\}$,
if and only if
$\maxp(B,\tilde{1}-\tilde{p},\tilde{1}) 
\leq\maxp(B,\tilde{1}-\tilde{p}',\tilde{1})$ for any budget $B$ and any 
$\tilde{p}'$ with $\{\tilde{p}'\} = \{\tilde{p}\}$.
Hence, by Corollary~\ref{cor:alpha(CandP)} and 
Theorem~\ref{thm:dual-opt} we therefore have the main conclusion of this 
section in light of Definition~\ref{def:CandP-class}.
\begin{corollary}
\label{cor:dual-main}
For a rooted tree $T$ and a prize vector $\tilde{p}\in{\prats^n}$,
then $[(T,\tilde{1},\tilde{p})]$ is an optimal P-model SS class
if and only if the dual C-model SS class 
$[(T,\tilde{1},\tilde{p})]^* = [(T,-\tilde{p},\tilde{1})]$ is optimal.

In particular, if $\tilde{p}$ is scaled, then
the SS $(T,\tilde{1},\tilde{p})$ is optimal for 
the P-model $M = (T,{{I}},\{\tilde{p}\})$ if and only if 
the SS $(T,\tilde{1}-\tilde{p},\tilde{1})$ is optimal for 
the C-model $M = (T,\{\tilde{1}-\tilde{p}\},{{I}})$.
\end{corollary}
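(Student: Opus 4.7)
The plan is to combine the $G_n$-invariance of optimality from Corollary~\ref{cor:alpha(CandP)} with the multiset/budget equivalence already proved in Theorem~\ref{thm:dual-opt}; essentially all of the mathematical content has been established, and the corollary is a clean repackaging. I would first dispose of the ``in particular'' (scaled) case and then lift to the equivalence-class statement.

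For the scaled case, by definition $(T,\tilde{1},\tilde{p})$ is optimal for its P-model iff for every budget $B$ and every $\tilde{p}'$ with $\{\tilde{p}'\} = \{\tilde{p}\}$ one has $\maxp(B,\tilde{1},\tilde{p}) \leq \maxp(B,\tilde{1},\tilde{p}')$. In a P-model $\maxp(B,\tilde{1},\cdot)$ depends only on $\lfloor B\rfloor$, so it suffices to verify this for integer $m \in \{0,\ldots,n\}$. Applying Theorem~\ref{thm:dual-opt} for each such $m$ yields the equivalent statement that $\maxp(B,\tilde{1}-\tilde{p},\tilde{1}) \leq \maxp(B,\tilde{1}-\tilde{p}',\tilde{1})$ for every budget $B$ with $\maxp(B,\tilde{1}-\tilde{p},\tilde{1}) = m$. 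As $m$ ranges over $\{0,1,\ldots,n\}$ the corresponding intervals partition $\prats$, so the conjunction of these conditions is precisely the optimality of $(T,\tilde{1}-\tilde{p},\tilde{1})$ for the C-model $M = (T,\{\tilde{1}-\tilde{p}\},{{I}})$.

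To lift to arbitrary $\tilde{p}\in\prats^n$, I would pick $\alpha\in G_n$ of the form $\alpha(\tilde{x}) = a\tilde{x}$ with $a>0$ rational and small enough that $\tilde{p}_0 := \alpha(\tilde{p}) \in ([0,1]\cap\prats)^n$. By Corollary~\ref{cor:alpha(CandP)}, $[(T,\tilde{1},\tilde{p})]$ is optimal iff $(T,\tilde{1},\tilde{p}_0)$ is, and by the scaled case this in turn is equivalent to optimality of $(T,\tilde{1}-\tilde{p}_0,\tilde{1})$. A direct computation gives $\tilde{1}-\tilde{p}_0 = a(-\tilde{p})+\tilde{1}$, so $\tilde{1}-\tilde{p}_0$ lies in the $G_n$-orbit of $-\tilde{p}$, and since it lies in $\prats^n$ we get $\tilde{1}-\tilde{p}_0 \in [-\tilde{p}]$. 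Applying Corollary~\ref{cor:alpha(CandP)} once more together with Definition~\ref{def:CandP-class}, optimality of $(T,\tilde{1}-\tilde{p}_0,\tilde{1})$ is equivalent to optimality of the class $[(T,-\tilde{p},\tilde{1})] = [(T,\tilde{1},\tilde{p})]^*$. Chaining these equivalences finishes the proof.

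The genuine mathematical obstacle was already overcome in Theorem~\ref{thm:dual-opt}, whose proof relies on the sum-duality of Claim~\ref{clm:sum-dual} to convert minimum-cost subtrees in the C-model into maximum-prize subtrees in the P-model; what remains for the corollary is primarily bookkeeping. The main subtlety to watch is the difference between optimality of an individual SS and optimality of the corresponding class in the sense of Definition~\ref{def:CandP-class}, and in particular the verification that the vector $\tilde{1}-\tilde{p}_0$ really does land in the intended dual orbit $[-\tilde{p}]$ rather than some unrelated equivalence class.
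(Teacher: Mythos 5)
Your proposal is correct and follows essentially the same route as the paper: the paper likewise derives the scaled case from Theorem~\ref{thm:dual-opt} (letting $m$ range over all integer budgets) and then invokes the $G_n$-invariance of Corollary~\ref{cor:alpha(CandP)} together with Definition~\ref{def:CandP-class} to pass to the equivalence-class statement. Your explicit check that $\tilde{1}-\tilde{p}_0 = a(-\tilde{p})+\tilde{1}$ lands in the orbit $[-\tilde{p}]$ is a detail the paper leaves implicit, and is a welcome addition.
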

Consequently, by 
Corollary~\ref{cor:path-optimal}, 
Theorems~\ref{thm:star},~\ref{thm:SS-trees},~\ref{thm:3-cat-SS} 
and~\ref{thm:4-spid-SS}
and Corollary~\ref{cor:dual-main}, we have the following
summarizing result.
\begin{theorem}
\label{thm:C-P-main}
For a rooted tree $T$ on $n$ non-root vertices the following are equivalent:
\begin{enumerate}
  \item Any P-model $M=(T,{{I}},P)$ has an optimal SS.
  \item Any C-model $M=(T,C,{{I}})$ has an optimal SS.
  \item $T$ is one of the following types:
(i) a rooted path, 
(ii) a rooted star, 
(iii) a rooted 3-caterpillar,
or (iv) a rooted 4-spider.
\end{enumerate}
\end{theorem}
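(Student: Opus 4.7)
The plan is to prove the cyclic implications $(3) \Rightarrow (1)$, $(1) \Leftrightarrow (2)$, and $(1) \Rightarrow (3)$. The outer two implications assemble the constructive and obstruction results already in hand; the middle equivalence is where the duality machinery of this section does the real work.

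For $(3) \Rightarrow (1)$, I would split on the type of $T$: a rooted path is handled by Corollary~\ref{cor:path-optimal}, a rooted star by Theorem~\ref{thm:star}, a rooted 3-caterpillar by Theorem~\ref{thm:3-cat-SS}, and a rooted 4-spider by Theorem~\ref{thm:4-spid-SS}. Each of these supplies an optimal SS for any prize multiset $P$. For the converse $(1) \Rightarrow (3)$, I would argue by contrapositive: if $T$ is none of the four types, then the structural analysis preceding Theorem~\ref{thm:SS-trees} forces $T$ to contain either $T(2)$ or $T(3)$ as a rooted subtree. By Observation~\ref{obs:P-T(2,3)} neither base P-model admits an optimal SS, and Observation~\ref{obs:P-subtree} transfers this obstruction to $T$ itself by padding the prize multiset with zeros, contradicting (1). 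The implication $(2) \Rightarrow (3)$ is the exact parallel, using Observations~\ref{obs:C-T(2,3)} and~\ref{obs:C-subtree} together with cost extensions by $\infty$.

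The equivalence $(1) \Leftrightarrow (2)$ is my main use of Corollary~\ref{cor:dual-main}. To prove $(1) \Rightarrow (2)$, I take an arbitrary C-model $(T,C,{{I}})$ with cost vector $\tilde{c}$, choose $\alpha \in G_n$ of the form $\tilde{x} \mapsto a\tilde{x}+b\tilde{1}$ so that $\alpha(\tilde{c})$ has all entries in $[0,1] \cap \prats$, and set $\tilde{p} := \tilde{1}-\alpha(\tilde{c})$. This $\tilde{p}$ is a scaled prize vector, so (1) applied to the P-model $(T,{{I}},\{\tilde{p}\})$ yields an optimal SS $(T,\tilde{1},\tilde{q})$ with $\{\tilde{q}\}=\{\tilde{p}\}$. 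Corollary~\ref{cor:dual-main} then promotes this to an optimal SS $(T,\tilde{1}-\tilde{q},\tilde{1})$ for the C-model $(T,\{\alpha(\tilde{c})\},{{I}})$, and Corollary~\ref{cor:alpha(CandP)} pulls the optimality back through $\alpha^{-1}$ to the original C-model $(T,C,{{I}})$. The reverse direction $(2) \Rightarrow (1)$ is symmetric, starting from a scaled cost vector and passing through the C-side of Corollary~\ref{cor:dual-main}.

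The main obstacle is the bookkeeping in $(1) \Leftrightarrow (2)$: Corollary~\ref{cor:dual-main} is stated cleanly only for scaled vectors and for specific vectors rather than multisets, so one must verify that the affine rescaling by $\alpha$ respects multiset structure on both the prize and cost sides, and that dualization commutes with the $G_n$-action in the sense made precise by Corollary~\ref{cor:alpha(CandP)}. Once this tracking is in place, each step reduces to an immediate invocation of a result already proved in the paper.
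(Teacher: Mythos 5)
Your proposal is correct and rests on the same ingredients the paper uses, but it closes the cycle of implications differently in one place. The paper's proof is essentially a citation list: $(1)\Rightarrow(3)$ and $(2)\Rightarrow(3)$ are Theorem~\ref{thm:SS-trees} (precisely the $T(2)$/$T(3)$ obstruction argument you describe), $(3)\Rightarrow(1)$ is Corollary~\ref{cor:path-optimal} together with Theorems~\ref{thm:star}, \ref{thm:3-cat-SS} and~\ref{thm:4-spid-SS}, and $(3)\Rightarrow(2)$ is obtained by dualizing the last two of these through Corollary~\ref{cor:dual-main} into the explicit C-model statements (Theorems~\ref{thm:3-cat-CSS} and~\ref{thm:4-spid-CSS}). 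You instead prove $(1)\Leftrightarrow(2)$ directly for an \emph{arbitrary} rooted tree, by scaling a given cost vector into $([0,1]\cap\prats)^n$ with some $\alpha\in G_n$, passing to the complementary scaled prize vector, applying $(1)$, and pulling the resulting optimal SS back through Corollary~\ref{cor:dual-main} and the $G_n$-invariance of Corollary~\ref{cor:alpha(CandP)}. This is slightly more general than what the paper establishes at this point, since the paper only transports optimality across the duality for the two nontrivial tree families; your version isolates the P/C symmetry as a standalone fact independent of the classification, at the cost of the multiset bookkeeping you correctly flag (namely that $\{\tilde{1}-\tilde{q}\}=\{\alpha(\tilde{c})\}$ whenever $\{\tilde{q}\}=\{\tilde{1}-\alpha(\tilde{c})\}$, and that $\alpha^{-1}$ carries the resulting multiset back to $C$ --- both immediate because $\alpha$ acts entrywise and commutes with rearrangement of coordinates). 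The paper's route avoids this bookkeeping and yields the explicit optimal cost assignments for C-models on 3-caterpillars and 4-spiders as a by-product. Either assembly is sound given the cited results.
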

Note that by (\ref{eqn:order-preserving}) we have, in particular, that
each $\alpha\in G_n$ preserves the order of the entries of each
$\tilde{x}\in{\prats^n}$, so each $\tilde{x}\in [\tilde{p}]$ 
has the same order of its entries as $\tilde{p}$ does. But clearly,
the dual operation on $[\tilde{x}]^* = [-\tilde{x}]$ 
is order reversing, that is, we have 
that $x_i\leq x_j$ for any $\tilde{x}\in [\tilde{p}]$ 
if and only if 
$y_i\geq y_j$ for any $\tilde{y}\in [-\tilde{p}] = [\tilde{p}]^*$.
Since the optimal assignments of prizes from a given multiset $P$
are given in Theorems~\ref{thm:3-cat-SS} and~\ref{thm:4-spid-SS},
we then have by Corollary~\ref{cor:dual-main} the following
theorems for C-models as well.
\begin{theorem}
\label{thm:3-cat-CSS}
Let $M = (T,C,{{I}})$ be a C-model where $T$ is a rooted 3-caterpillar 
as in (\ref{eqn:3-cat-label}) and
$C = \{c_1,\ldots,c_n\}$ is a multiset of possible edge-costs indexed
decreasingly $c_1\geq c_2\geq\cdots\geq c_n$. Then the SS $(T,c,\mathbf{1})$,
where $c(e_i) = c_i$ for each $i\in \{1,\ldots,n\}$ is an optimal SS for $M$.
\end{theorem}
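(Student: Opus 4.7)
The plan is to derive Theorem~\ref{thm:3-cat-CSS} from its P-model counterpart, Theorem~\ref{thm:3-cat-SS}, via the duality apparatus developed in Corollary~\ref{cor:dual-main}. The labeling convention is rigged so that on the 3-caterpillar the edge $e_i$ has $u_i$ as its head, and hence the optimal prize assignment $p(u_i) = p_i$ (with $p_i$ increasing) from Theorem~\ref{thm:3-cat-SS} should translate directly into the cost assignment $c(e_i) = c_i$ (with $c_i$ decreasing) after passing through the sign-reversing dual.

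First, I would normalize the cost vector. By Corollary~\ref{cor:alpha(CandP)}, optimality of a SS $(T,\tilde{c},\tilde{1})$ for a C-model is a $G_n$-invariant property of $\tilde{c}$. Pick any rational $K$ with $K \geq c_1 + 1$, let $\alpha(\tilde{x}) = \tilde{x}/K \in G_n$, and replace $\tilde{c}$ by $\tilde{c}' := \alpha(\tilde{c})$. Then $c'_i = c_i/K \in [0,1] \cap \prats$ and the decreasing order $c'_1 \geq c'_2 \geq \cdots \geq c'_n$ is preserved. Set $\tilde{p} := \tilde{1} - \tilde{c}'$, so that $p_i = 1 - c'_i \in [0,1] \cap \prats$; thus $\tilde{p}$ is scaled (in the sense of the convention introduced before Claim~\ref{clm:sum-dual}) and its entries are ordered increasingly $p_1 \leq p_2 \leq \cdots \leq p_n$.

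Next, I would invoke Theorem~\ref{thm:3-cat-SS} on the P-model $M' = (T, {{I}}, \{\tilde{p}\})$ with the rooted 3-caterpillar edges labeled as in (\ref{eqn:3-cat-label}): the SS $(T,\tilde{1},\tilde{p})$ assigning $p(u_i) = p_i$ is optimal for $M'$. Because $\tilde{p}$ is scaled, the ``in particular'' part of Corollary~\ref{cor:dual-main} applies and yields that the dual SS $(T, \tilde{1}-\tilde{p}, \tilde{1}) = (T, \tilde{c}', \tilde{1})$ is optimal for the C-model $(T, \{\tilde{c}'\}, {{I}})$. Finally, pulling back along $\alpha^{-1}(\tilde{x}) = K\tilde{x}$, a second application of Corollary~\ref{cor:alpha(CandP)} transfers optimality from $(T, \tilde{c}', \tilde{1})$ to $(T, \tilde{c}, \tilde{1})$ in the original C-model $M = (T, C, {{I}})$, with $c(e_i) = c_i$ as claimed.

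There is no new combinatorial obstacle here; the substantive content sits in Theorem~\ref{thm:3-cat-SS} and in the duality toolkit, and the role of the present argument is purely bookkeeping. The only point that needs care is matching up the two orderings: the dual map $\tilde{p} \mapsto \tilde{1}-\tilde{p}$ reverses the order of entries, and because the labeling convention $e_i \leftrightarrow u_i$ is identical on the P- and C-side, the increasing-prize rule $p(u_i)=p_i$ of Theorem~\ref{thm:3-cat-SS} becomes the decreasing-cost rule $c(e_i)=c_i$ of Theorem~\ref{thm:3-cat-CSS} without any reindexing, which is precisely the statement to be proved.
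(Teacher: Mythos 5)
Your proposal is correct and follows essentially the same route as the paper: the paper likewise deduces Theorem~\ref{thm:3-cat-CSS} from Theorem~\ref{thm:3-cat-SS} via Corollary~\ref{cor:dual-main}, using the fact that $G_n$ preserves the order of entries while the dual $[\tilde{x}]\mapsto[-\tilde{x}]$ reverses it. Your write-up merely makes explicit the scaling step ($\tilde{c}\mapsto\tilde{c}/K$ and $\tilde{p}=\tilde{1}-\tilde{c}'$) that the paper leaves implicit.
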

\begin{theorem}
\label{thm:4-spid-CSS}
Let $M = (T,C,{{I}})$ be a P-model, where $T$ is a rooted 4-spider 
as in (\ref{eqn:4-spider-label}) and 
$C = \{c_1,\ldots,c_n\}$ is a multiset of possible edge-costs indexed
decreasingly $c_1\geq c_2\geq\cdots\geq c_n$. Then the SS $(T,c,\mathbf{1})$,
where $c(e_i) = c_i$ for $i\in \{1,\ldots,k\}$ and 
$c(e_i) = c_{n+k+1-i}$ for $i\in\{k+1,\ldots,n\}$ is an optimal SS for $M$.
\end{theorem}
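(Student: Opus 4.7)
The plan is to deduce Theorem~\ref{thm:4-spid-CSS} from its P-model counterpart Theorem~\ref{thm:4-spid-SS} by invoking the P/C duality of Corollary~\ref{cor:dual-main}, after first passing through Corollary~\ref{cor:alpha(CandP)} to a convenient representative of the cost-vector's $G_n$-orbit. No new combinatorial analysis of the 4-spider is needed; essentially all the work has already been absorbed into the duality framework of Section~\ref{sec:duality}.

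Given the decreasingly ordered multiset $C = \{c_1 \geq c_2 \geq \cdots \geq c_n\} \subset \prats$, I would first apply an affine $\alpha \in G_n$ of the form $\alpha(\tilde{x}) = a\tilde{x} + b\tilde{1}$ with $a > 0$ so that the transformed entries have the form $\alpha(c_i) = 1 - p_i$ for some scaled values $0 \leq p_1 \leq p_2 \leq \cdots \leq p_n \leq 1$ in $\prats$; concretely, if $c_1 > c_n$ one may take $\alpha(x) = (x - c_n)/(c_1 - c_n)$ and set $p_i := 1 - \alpha(c_i)$, and if $c_1 = c_n$ the claim of the theorem is trivial. By Corollary~\ref{cor:alpha(CandP)} it then suffices to prove that the SS with $c(e_i) = \alpha(c_i) = 1-p_i$ for $i \leq k$ and $c(e_i) = \alpha(c_{n+k+1-i}) = 1 - p_{n+k+1-i}$ for $i > k$ is optimal for the transformed C-model $(T, \{1-p_1,\ldots,1-p_n\}, {{I}})$.

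Next I would introduce the scaled prize vector $\tilde{p}$ on $V(T) \setminus \{r\}$ by setting $p(u_i) = p_i$ for $i \in \{1,\ldots,k\}$ and $p(u_i) = p_{n+k+1-i}$ for $i \in \{k+1,\ldots,n\}$, matching the formula of Theorem~\ref{thm:4-spid-SS}. That theorem gives optimality of the P-SS $(T, \mathbf{1}, \tilde{p})$ for the P-model $(T, {{I}}, \{\tilde{p}\})$; and since $\tilde{p}$ is scaled, Corollary~\ref{cor:dual-main} transports this to optimality of the dual SS $(T, \tilde{1}-\tilde{p}, \mathbf{1})$ for the C-model $(T, \{\tilde{1}-\tilde{p}\}, {{I}})$. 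A direct check gives $(\tilde{1}-\tilde{p})(e_i) = 1 - p(u_i)$, which equals $1 - p_i$ for $i \leq k$ and $1 - p_{n+k+1-i}$ for $i > k$---exactly the placement required in the previous paragraph.

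The main obstacle is purely bookkeeping: Theorem~\ref{thm:4-spid-SS} uses the increasing convention on prizes while Theorem~\ref{thm:4-spid-CSS} uses the decreasing convention on costs, and the dualization $\tilde{x} \mapsto \tilde{1}-\tilde{x}$ reverses order. One must verify carefully that the positional formulas line up verbatim after this inversion, but once this is confirmed the result follows immediately from Theorem~\ref{thm:4-spid-SS} and Corollary~\ref{cor:dual-main}.
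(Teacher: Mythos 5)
Your proposal is correct and follows essentially the same route as the paper: the paper likewise derives Theorem~\ref{thm:4-spid-CSS} from Theorem~\ref{thm:4-spid-SS} via Corollary~\ref{cor:dual-main}, noting that the dual operation $\tilde{x}\mapsto[-\tilde{x}]$ reverses the order of entries so that the increasing prize convention becomes the decreasing cost convention with the same positional formulas. Your explicit affine rescaling of the costs into $[0,1]\cap\prats$ via Corollary~\ref{cor:alpha(CandP)} merely spells out a step the paper leaves implicit.
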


\section{Summary and Conclusions}
\label{sec:summary}

This paper defined a cyber-security model to explore defensive
security systems.  The results obtained mathematically support the
intuition that it is best to place stronger defenses in the outer
layers and more-valuable prizes in the deeper layers.  We defined
three types of SSs: improved, good, and optimal.  We showed that it is
not always possible to find an optimal SS for a given CSM, but
demonstrated for rooted paths and stars that optimal SSs do exist.  The
results mathematically show that a path produces the best
cyber-security, however, burying something $n$ levels deep for large
$n$ may prevent the friendly side from accessing the ``information''
effectively.  The results show, in general, that trees having greater
depth provide more security in this setting. 

We showed the any CSM
is equivalent to a CSM where either all the edge penetration costs
are unit priced (a P-model) or where all the vertices have a unit
prize (C-model), by allowing larger underlying rooted trees. 
We then characterised for which trees a P-model
has an optimal SSs, and we also did that for the C-models. We noted that
the P- and C-models have optimal SSs for exactly the same types
of rooted trees. This was then explained by obtaining a duality
between the P- and C-models in the penultimate section of the paper.

We gave an $O(n\log n)$ algorithm for producing a good SS that was
based on sorting. It is not clear how strong such a good SS is, as
there may be many such good SSs, and some may be better than others.
It would be interesting to come up with a comparison metric to rank
various good SSs.  We must continue to explore models of
cyber-security systems to develop the foundations needed to combat the
ongoing and increasing number of cyber attacks.  This work is but one
step in that direction.

We conclude the paper with a number of questions.
\begin{enumerate}
  \item Can we find an efficient algorithm to develop optimal SSs in the
cases where all penetration costs or all targets are from a finite
set of possible values? Say, if we have two possible penetrations costs
or three? Similarly for prizes?
  \item  In a two-player version of the model, what would be the best strategy
for a defender who is allowed to reposition a prize or a portion of a
prize after each move by an attacker?  And, what would the complexity
of this problem be?
  \item Are there on-line variants of the model that are interesting to study?
For example, a version where the topology of the tree changes
dynamically or where only a partial description is known to the
attacker.
  \item Could a dynamic programming approach be used to obtain a SS that were
somehow quantifiably better than a good SS or allow us to pick the
``best'' good SS?
  \item Is there a more-useful definition of neighboring configuration that
could lead to an efficient algorithm for producing better SSs, for
example, perhaps a definition where sibling vertices or edges can have
their prizes or penetration costs swapped, respectively?
\end{enumerate}

\section*{Acknowledgments}

This work was supported by the Office of Naval Research.  The work was
also supported by Thailand Research Fund grant No.~RSA5480006.

\end{document}